\renewcommand{\subparagraph}[1]{\par\smallskip\noindent\textbf{{#1}}}
\theoremstyle{definition}
\newtheorem{branchrule}{Branching rule}
\title{Split-or-decompose: Improved FPT branching algorithms for
maximum agreement forests} %
\titlerunning{Improved FPT branching algorithms for maximum agreement forests} %
\author{David Mestel}{Department of Advanced Computing Sciences, Maastricht University, The Netherlands }{david.mestel@maastrichtuniversity.nl}{}{}
\author{Steven Chaplick}{Department of Advanced Computing Sciences, Maastricht University, The Netherlands }{s.chaplick@maastrichtuniversity.nl}{https://orcid.org/0000-0002-1825-0097}{}
\author{Steven Kelk}{Department of Advanced Computing Sciences, Maastricht University, The Netherlands }{steven.kelk@maastrichtuniversity.nl}{}{}
\author{Ruben Meuwese}{Department of Advanced Computing Sciences, Maastricht University, The Netherlands }{r.meuwese@maastrichtuniversity.nl}{}{Supported by the Dutch Research Council (NWO) KLEIN 1 grant Deep kernelization for
phylogenetic discordance, project number OCENW.KLEIN.305.}
\authorrunning{D.~Mestel, S.~Chaplick, S.~Kelk, R.~Meuwese} %
\keywords{Phylogenetics, maximum agreement forest, fixed parameter tractability, branching algorithms.}%
\newcommand{\magicunrootednumber}{2.846}
\newcommand{\bestrootednumber}{2.3391}
\newcommand{\qfalse}{\bot}
\newcommand{\uMAF}{{\rm uMAF}}
\newcommand{\rMAF}{{\rm rMAF}}
\newcommand{\SK}[1]{{\textcolor{black}{#1}}}
\newcommand{\DM}[1]{{\textcolor{black}{#1}}}
\begin{document}

\maketitle

\begin{abstract}
Phylogenetic trees are leaf-labelled trees used to model the evolution of species. In practice it is not uncommon to obtain two topologically distinct trees for the same set of species, and this motivates the use of distance measures to quantify dissimilarity. A well-known measure is the maximum agreement forest (MAF).
Computing such a MAF is NP-hard and so considerable effort has been invested in finding FPT algorithms, parameterised by $k$, the number of components of a MAF.
In this work we present improved algorithms for both the unrooted and rooted cases, with runtimes $O^*(\magicunrootednumber^k)$ and $O^*(\bestrootednumber^k)$ respectively.  
The key to our improvement is a novel branching strategy in which we show that any overlapping components obtained on the way to a MAF can be `split' by a branching rule with favourable branching factor, and then the problem can be decomposed into disjoint subproblems to be solved separately. 
\end{abstract}

\clearpage

\section{Introduction}
A phylogenetic tree is a binary tree whose leaves are bijectively labeled with a set of $n$ species (or,
more generically, a set of \emph{taxa}) $X$. These trees are common in the systematic study
of evolution: the leaves represent contemporary species and the internal vertices of the tree
represent hypothetical common ancestors. There are many different methods for constructing such trees and it is not unusual for different methods to produce topologically distinct trees on the same input data, or the same method to produce distinct trees depending on which part of the genome has been used as input data. This naturally leads to the challenge of quantifying the topological dissimilarity of two trees on the same set of taxa $X$.

Here we consider the \emph{maximum agreement forest} (MAF) criterion for quantifying dissimilarity. Informally, this is a minimum-size partition of %
$X$ which splits both trees into (up to homeomorphism) the same set of disjoint, leaf-labelled subtrees. The agreement forest model is useful when the trees are presumed to differ due to the influence of complex biological phenomena, such as horizontal gene transfer, which topologically rearrange the underlying evolutionary history \cite{WhiddenZB14}. 
Unfortunately, it is NP-hard (in fact, APX-hard) to compute a MAF, both in the rooted variant of the problem (rMAF) \cite{bordewich2007computing}, when the two input trees have a root denoting the direction of evolution, and in the unrooted variant, where the input trees are undirected (uMAF) \cite{HeinJWZ96,kelk2024agreement}. These variants have been intensively studied in the last decade by the algorithms community; see \cite{WhiddenBZ13,shi2018parameterized,BulteauW19} for good overviews.
The fixed parameter tractable (FPT) status of rMAF and uMAF dates back
 to 2004 and 2001, respectively, when linear kernels were established \cite{BordewichS04,AllenS01}. Regarding FPT branching algorithms, which are the focus of this article, an $O^{*}(4^k)$ algorithm was given for rMAF in 2008 \cite{bordewich20083}, followed by $O^{*}(3^k)$ in 2009 \cite{whidden2009unifying}, $O^{*}(2.42^k)$ in 2013\footnote{A further advancement was claimed in the PhD thesis of Whidden (2013) but this remains unpublished.} \cite{WhiddenBZ13} and finally $O^{*}(2.3431^k)$ in 2015 \cite{chen2015faster}, by an elaborate case analysis of at least 23 cases. Here $k$ refers to the size of the optimum.  For uMAF, an $O^{*}(4^k)$ algorithm was given in 2007 by \cite{hallett2007faster} and (via a rather different technique) again in 2009 \cite{whidden2009unifying}, and finally $O^{*}(3^k)$ was given in 2015 \cite{ChenFS15} which builds on techniques from \cite{WhiddenBZ13}.

\subparagraph{Our Contribution.} In this work we give FPT branching algorithms with running times $O^{*}(\bestrootednumber^{k})$  and $O^{*}(\magicunrootednumber^{k})$ 
for rMAF and uMAF, respectively, thus improving on previous results: quite significantly in the case of uMAF. The key insight is a novel `splitting' branching rule, which ensures that the components created on the way to constructing the MAF can be assumed to be disjoint.
More precisely, existing algorithms
proceed by successively cutting edges in one of the two trees until a MAF is reached, so maintaining at each step of the algorithm a pair $(T,F')$ of one of our original trees $T$ and a forest $F'$ formed by cutting edges in the other input tree $T'$. The algorithm then branches on a
set of further cuts to be made in the forest 
(according to some branching rule that must be proved to be safe, i.e. it must be shown that any MAF extending the current configuration must contain one of the candidate sets of cuts), and continues.  The number of branches per cut edge controls the running time of the algorithm, and is called the `branching factor'.

Our new insight is to show that
whenever
the components of our forest overlap in the other tree (i.e. the subtrees of $T$ induced by the taxa of each component of $F'$ are not disjoint) then there is a branching rule we can apply with branching factor at most 2.  We can apply this `splitting' branching rule until the components of $F'$ do not overlap in $T$, and then solve each component
as a fresh instance of the MAF problem.  
Since the running time of all these algorithms is exponential in the number of cuts $k$, breaking an instance into multiple subproblems leads to a major gain as long as the remaining cuts are not all in one subproblem. Most of the technical work for our algorithms will be in showing that this does not occur, or rather that if it would occur then there are alternative branching rules we can apply instead with favourable branching factors.

We are hopeful that this new technique, which limits the need for extensive case analysis, could lead to improvements more widely on the many variants of MAF that exist; it is arguably the first structural methodological advance in MAF branching algorithms since the foundational work of Whidden et al.~15 years ago \cite{whidden2009unifying,WhiddenBZ13}. Although the running time of $O^{*}(\bestrootednumber^{k})$ for rMAF is a small improvement over the previous best of $O^{*}(2.3431^{k})$, we are optimistic that in the future our technique can lead to further improvements. Indeed, as another proof-of-concept we improve the fastest uMAF branching algorithm on two caterpillars (which are path-like phylogenetic trees) from $O^{*}(2.49^{k})$ to
$O^{*}(2.4634^{k})$ using our technique.
Although empirical evaluation is outside the scope of this paper, we expect that overlapping components will arise frequently in practice, so that adding our split-or-decompose technique may improve the performance of existing algorithms on practical problems.

In Section \ref{sec:preliminaries} we set out some preliminary definitions. In Section \ref{sec:existing} we describe the existing FPT branching algorithms for uMAF and rMAF which we build upon. Section \ref{sec:splitordecompose} introduces our `split-or-decompose' technique. This is then applied to uMAF and rMAF in Sections \ref{sec:unrooted} and \ref{sec:rooted} respectively. Finally in Section \ref{sec:future} we discuss some future research directions.

\section{Preliminaries}\label{sec:preliminaries}
Throughout this paper, $X$ denotes a finite set of \emph{taxa}. We consider both unrooted and rooted binary phylogenetic trees. We introduce terminology and notation for unrooted trees, and then explain where the differences lie for rooted trees.  Figure~\ref{fig:examples} illustrates some key~concepts.

\begin{figure}
\centering
\includegraphics{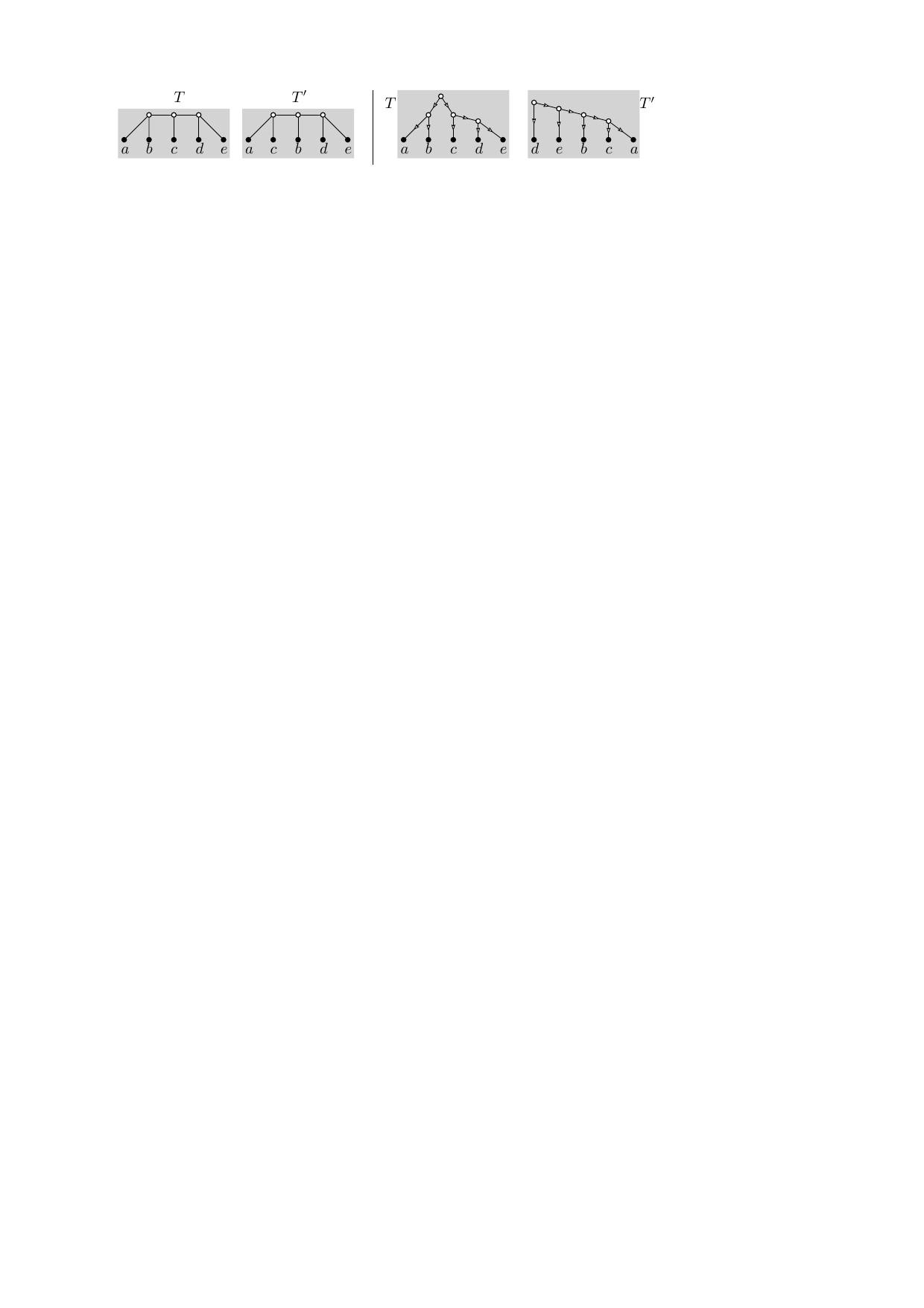}
\caption{Left: two unrooted binary phylogenetic trees on taxa $X=\{a,b,c,d,e\}$. An uMAF for these
two trees has 2 components, e.g. $\{\{a,b,c\}, \{d,e\}\}$. Right: two rooted binary phylogenetic trees on the same set of taxa. An rMAF for these two trees has 3 components, e.g. $\{\{a,c\}, \{b\}, \{d,e\}\}$.}
\label{fig:examples}
\end{figure}

\subparagraph{Unrooted trees.}
An {\it unrooted binary phylogenetic tree} $T$ on $X$ is a simple undirected tree whose leaves are bijectively labeled with $X$ and whose other vertices all have degree 3. For simplicity we refer to an unrooted binary phylogenetic tree as a {\it phylogenetic tree} or even just a {\it tree} when it is clear from the context. For two phylogenetic trees $T, T'$ on the same set of taxa $X$ we write $T=T'$ if there is an isomorphism between $T$ and $T'$ that is identity on $X$. Note that, if $|X|\leq 3$, $T=T'$ always holds. A tree is a \emph{caterpillar} if deleting all leaves yields a path. Two leaves, say $a$ and $b$, of $T$ are called a {\it cherry} $\{a,b\}$ of $T$ if they are adjacent to a common vertex. 
For $X' \subseteq X$, we write $T[X']$ to denote the unique, minimal subtree of $T$ that connects all elements in $X'$. For brevity we call $T[X']$ the \emph{embedding} of  (the subtree induced by) $X'$ in $T$.
We refer to the phylogenetic tree on $X'$ obtained from $T[X']$ by suppressing degree-2 vertices as the {\it restriction of $T$ to $X'$}  and we denote this by $T|X'$. If $X' \subseteq X$ and $T|X' = T'|X'$, we
say that the subtrees induced by $X'$ in $T$ and $T'$ are \emph{homeomorphic}; for brevity we often just say ``$X'$ is homeomorphic''. For a phylogenetic tree $T$ we will sometimes write $\ell(T)$ to denote the taxa at its leaves. The meaning of $\ell(.)$ extends as expected also to embeddings of phylogenetic trees.

A subtree $T^{*}$ of a phylogenetic tree $T$ is \emph{pendant} if it can be detached from $T$ by deleting a single edge of $T$. For simplicity we often refer to a pendant subtree by the set of taxa $X^{*} \subset X$ at its leaves. 
Let $X^{*} \subset X$ and $T, T'$ be two phylogenetic trees on $X$, such that $X^{*}$ induces a pendant subtree in both $T$ and $T'$. 
We say that the pendant subtree $T'$ is \emph{common ignoring rooting} if $T|X^{*} = T'|X^{*}$. 
The pendant subtree $T'$ is \emph{common including rooting} if $T|(X^{*} \cup \{y\}) = T'|(X^{*} \cup \{y\})$ where $y$ is any taxon in $X \setminus X^{*}$. The second notion of common pendancy is stronger than the first; informally, it says that the two pendant subtrees not only have the same topology, but also attach to their parent trees at the same place. For the two unrooted trees $T, T'$ in Figure \ref{fig:examples}, taking $X^{*}=\{a,b,c\}$ yields a common pendant subtree ignoring rooting, but not including rooting.

\subparagraph{Rooted trees.}
A \emph{rooted} binary phylogenetic tree on $X$ is a simple directed  tree whose leaves are bijectively labelled by $X$ and all internal nodes have indegree-1 and outdegree-2, except (when $|X| \geq 2$) a single node with indegree-0 and outdegree-2 (the \emph{root}). 
As in the unrooted case, we often simply write \emph{phylogenetic tree} or \emph{tree} to denote a rooted binary phylogenetic tree. 
The definitions for rooted trees go through as for unrooted trees, with the following exceptions. For $|X| \leq 3$, $T=T'$ does not always hold (but it does if $|X|\leq 2)$. For $X' \subseteq X$, we write $T|X'$ to denote the rooted  tree obtained from $T[X']$ by suppressing all nodes that have indegree-1 and outdegree-1. 
Also, unlike unrooted trees the notion of a common pendant subtree requires no nuancing: the location of the root in the subtree is always taken into account. 
Note that a rooted tree (when $|X|\geq 2$) might only have one cherry, and such a tree is a caterpillar. However, an unrooted tree $T$ (when $|X|\geq 4$) always has at least two cherries $\{a,b\}$, $\{c,d\}$ where $a,b,c,d$ are distinct and $T[\{a,b\}]$ is disjoint from $T[\{c,d\}]$. The only unrooted trees with at most two cherries are caterpillars.
Finally, when referring to a path in a rooted tree, we mean a path in the underlying unrooted tree.

\subparagraph{Agreement forests.}
The following definitions hold for both rooted and unrooted trees. Let $T$ and $T'$ be two phylogenetic trees  on $X$. Let $F = \{B_1,B_2,\ldots,B_k\}$ 
be a partition of $X$, where each block $B_i$ with $i\in\{1,2,\ldots,k\}$ is  referred to as a \emph{component} of $F$. We say that $F$ is an \emph{agreement forest} for $T$ and $T'$ if the following conditions hold.
\begin{enumerate}
\item [(1)] For each $i\in\{1,2,\ldots,k\}$, we have $T|B_i = T'|B_i$.
\item [(2)] For each pair $i,j\in\{1,2,\ldots,k\}$ with $i \neq j$, we have that
$T[B_i]$ and $T[B_j]$ are vertex-disjoint in $T$, and $T'[B_i]$ and $T'[B_j]$ are vertex-disjoint in $T'$.
\end{enumerate}
\noindent
Let $F=\{B_1,B_2,\ldots,B_k\}$ be an agreement forest for $T$ and $T'$. The \emph{size} of $F$ is simply its number of components; i.e. $k$. Moreover, an agreement forest with the minimum number of components (over all agreement forests for $T$ and $T'$) is called a \emph{maximum agreement forest (MAF)} for $T$ and $T'$.
The number of components of a maximum agreement forest for $T$ and $T'$ is denoted by $d_\uMAF(T,T')$, when $T$ and $T'$ are unrooted trees, and $d_\rMAF(T,T')$ when $T$ and
$T'$ are rooted. Although they share much common structure, there are subtle combinatorial differences between the rMAF and uMAF models, which explains the differences in running times obtained thus far for the two models.\footnote{Also, with rMAF, 
an extra taxon $\rho$ pendant to the root in both trees is commonly added.
This is done to ensure that
rMAF correctly models the \emph{rooted subtree prune and regraft} distance. 
We do not adopt this convention here, but just note that our rMAF algorithm works with or without $\rho$.}

It is well-known that, when computing $d_\uMAF(T,T')$, pendant subtrees that are common including rooting can be reduced to a single taxon without changing $d_\uMAF(T,T')$ \cite{AllenS01}. When computing $d_\rMAF(T,T')$ common pendant subtrees can also be reduced to a single taxon without changing $d_\rMAF(T,T')$ \cite{BordewichS04}. This is why, for both rMAF and uMAF, if $T$ and $T'$ both contain a cherry $\{a,b\}$, this common cherry can always be reduced to a single taxon.

\section{Existing branching strategies}
\label{sec:existing}
We first describe the branching algorithm of Chen et al. \cite{ChenFS15} for uMAF, with running time  $O^{*}(3^k)$, which was based on earlier work by Whidden et al. \cite{WhiddenBZ13}. Then, for rMAF, we explain the algorithm from~\cite{WhiddenBZ13} with a running time of $O^{*}(2.42^k)$. 
These two classical branching schemes form the starting point for our new~algorithms.

\subsection{Chen et al.'s algorithm for uMAF \cite{ChenFS15}}
We start with two unrooted trees $T$ and $T'$ on $X$. 
The high-level idea is to progressively cut edges in one of the trees, say $T'$, to obtain a forest $F'$ (initially $F'=T'$) with an increasing number of components, until it becomes an \emph{agreement} forest for $T$ and $T'$. 
Each edge cut increases the size of the forest by 1. 
Hence, to determine whether there exists an agreement forest with at most $k$ components, we can make at most $k-1$ edge cuts in $T'$. At each step various ``tidying up'' operations are applied to exhaustion:
\begin{enumerate}
\item If a singleton component is created in $F'$ (i.e. a component comprising a single taxon $a$), then we delete $a$ from both $T$ and $F'$;
\item Degree-2 nodes are always suppressed;
\item Common cherries are always reduced into a single taxon in both trees. For example by discarding an arbitrary taxon from the common cheery in both $T$ and $F'$. 
\end{enumerate}
The way we will choose edges to cut, combined with the tidying up operations, ensures that (unlike $T'$) $T$ remains connected at every step. More formally: if, at a given step, $T'$ has been cut into a forest $F'$ and $X'$ is the union of the taxa in $F'$, then $T$ will have been transformed into $T|X'$. To keep notation light we will henceforth refer simply to $T$ and $F'$, with the understanding that we are actually referring to the tree-forest pair encountered at a specific iteration of the algorithm.
The general decision problem is as follows. We are given a $(T,F')$ tree-forest pair on a set $X' \subseteq X$ of taxa, and a parameter $k$, and we wish to know: is it possible to transform $F'$ into an agreement forest making at most $k$ further cuts (more formally: an agreement forest for $T|X'$ and $T'|X'$)? If we can answer this query we will, in particular, be able to answer the \emph{original} query of whether the original input  $(T, T')$ has an agreement forest with at most $k+1$ components. Note that for $(T, T')$ the query $k=0$ returns TRUE if and only if $T = T'$.
We use a recursive branching algorithm to answer this query, and \DM{write $T(k)$
to denote the number of final configurations at the lowest level of the branching process (i.e. the number of `leaf' calls which return a value without any further recursive calls) that
results from answering this query.  Note that the total number of recursive calls in this branching process is at most twice the number of final calls, and the total running time of the algorithm is at most a polynomial factor larger than the total number of calls (because inside each call only a polynomial amount of work will be undertaken).} For brevity, and because we are only interested in the exponential dependence of the running time on $k$, we henceforth simply refer to $T(k)$ as ``running time''.

The following branching rules are used; we assume that common cherries have already been (exhaustively) collapsed.

\begin{figure}[t]
\centering
\includegraphics{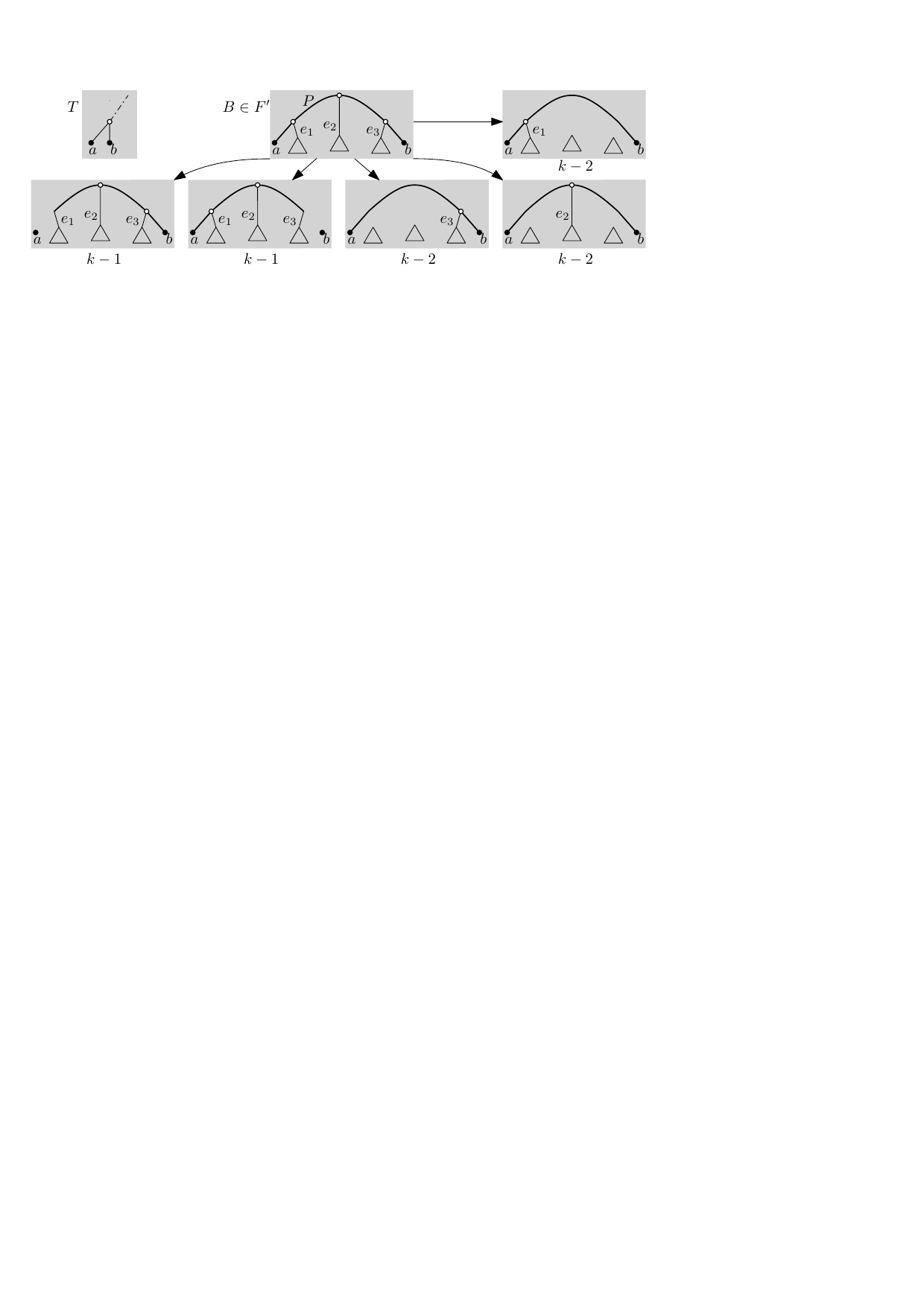}
\caption{An illustration of Chen's branching strategy for \emph{unrooted} agreement forests when $\{a,b\}$ is a cherry in $T$ and $a$ and $b$ are together in a component $B \in F'$. Here $t=3$ and the resulting recurrence is $2T(k-1)+3T(k-2)$.}
\label{fig:chent3}
\end{figure}

\begin{enumerate}
\item (DIFFERENT COMPONENTS) If $T$ has a cherry $\{a,b\}$ such that $a$ and $b$ are in different components of $F'$, then cut $a$ off, or cut $b$ off. This yields the inequality $T(k) \leq 2T(k-1)$ and a bound on the exponent of $2^k$.
\item (CHEN) If $T$ has a cherry $\{a,b\}$ such that $a$ and $b$ are in the same component $B$ of $F'$, then let $P$ be the unique path from $a$ to $b$ in $B$. Let $e_1, ..., e_t$ be the edges with exactly one endpoint on $P$.   Note that $t \geq 2$ otherwise $\{a,b\}$ is a common cherry. There are three possibilities: $a$ is cut off, $b$ is cut off, all but one of the $e_1, \ldots, e_t$ edges is cut off.
See Figure \ref{fig:chent3} for an illustration of how the branching works when $t=3$. This gives a recurrence $T(k) \leq 2T(k-1) + tT(k-(t-1))$.  If $t \geq 3$, then this recursion can be bounded by $3^k$. However, for $t=2$ the recursion yields $4^k$. To circumvent this, Chen et al. observe that the branch ``cut $b$ off'' can be dropped when $t=2$. This yields a recurrence $T(k-1) + 2T(k-1)$, producing a bound of $3^k$ for that case, and thus at most $3^k$ overall.
\end{enumerate}

\subsection{Whidden et al.'s algorithm for rMAF \cite{WhiddenBZ13}}
This is very similar to the above\footnote{The tidying up steps are also analogous, with the exception that we do not suppress nodes with indegree-0 and outdegree-2, i.e. the roots of components.} and, indeed, predates it. The second branching rule is slightly different as a result of the trees
being rooted.

\begin{figure}[tb]
\centering
\includegraphics{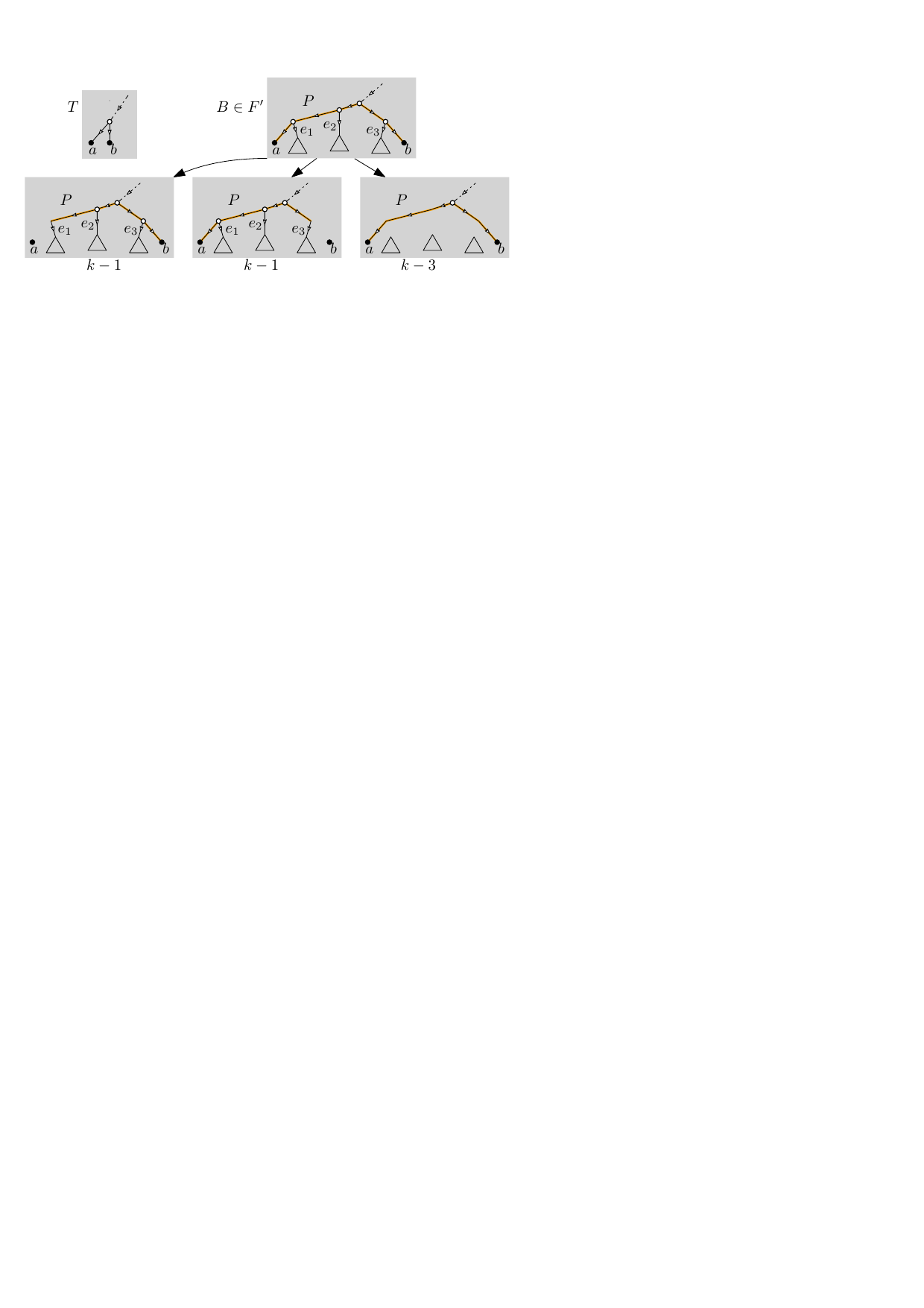}
\caption{An illustration of Whidden's branching strategy for \emph{rooted} agreement forests when $\{a,b\}$ is a cherry in $T$ and $a$ and $b$ are together in a component $B \in F'$. Here $t=3$, and the resulting recurrence is $2T(k-1)+T(k-3)$.}
\label{fig:whiddent3}
\end{figure}

\begin{enumerate}
\item (DIFFERENT COMPONENTS) 
This is the same as in the previous section.
\item (WHIDDEN) If $T$ has a cherry $\{a,b\}$ such that $a$ and $b$ are in the same component $B$ of $F'$, then let $P$ be the unique path from $a$ to $b$ in $B$. Let $e_1, ..., e_t$ be the edges whose tails, but not heads, lie on $P$.   Note that $t \geq 1$ otherwise $\{a,b\}$ is a common cherry. There are three possibilities: $a$ is cut off, $b$ is cut off, \emph{all} the $e_1, \ldots, e_t$ edges are cut off. See Figure \ref{fig:whiddent3}. This gives a recurrence $T(k) \leq 2T(k-1) + T(k-t)$. If $t \geq 2$, then this recursion is bounded by $(1 + \sqrt{2})^k \approx 2.42^k$. The case $t=1$ remains a bottleneck however, yielding $3^k$. However, when $t=1$ we do not need to consider the case when $a$ is cut off, or when $b$ is cut off: it is safe to simply deterministically cut the edge $e_1$ (Case 6.2 of \cite{WhiddenBZ13}). This is sufficient to obtain a bound on the recursion of $2.42^k$ for all $t \geq 1$. The limiting case is $t=2$, since the running time further decreases for $t \geq 3$. For $t=3$ we obtain approximately $2.206^k$, for example.
\end{enumerate}

\section{Split-or-decompose}
\label{sec:splitordecompose}

In this section
we describe our `split-or-decompose' technique, which applies to both rooted and unrooted trees.
First, in Section \ref{subsec:killoverlap} we describe a branching rule (branching rule \ref{rule:overlap}) with branching factor two that applies whenever we have overlapping components.  If this rule does not apply (or after it has been applied to exhaustion), we are guaranteed to have disjoint components which we can then solve as independent sub-problems.  This has two benefits. 
Firstly, at each subsequent level of our recursive algorithm we have two trees rather than a tree and a forest.  Secondly, if the remaining cuts are to any degree 
divided across
these subproblems then we get a significant gain (essentially because in general $\rho^{k_1}+\rho^{k_2}\ll \rho^{k_1+k_2}$).  In Section \ref{subsec:decompose} we describe the `recursion rule' used to achieve this.

Unlike earlier work, we frame our recursive algorithm as solving a hybrid decision-optimization problem. Formally: given $(T,F')$ the query asks, is it possible to reach an agreement forest with at most $k$ further cuts in $F'$, and if so what is the minimum number of cuts required to reach an agreement forest from this point?
This requires slightly different post-processing of the results for alternative branches inside branching rules: essentially, taking the minimum rather than the logical OR.  The formalism of an optimization problem is necessary in order to use the number of cuts required by one sub-problem to inform the depth to which we solve another.

\subsection{Splitting overlapping components}
\label{subsec:killoverlap}
The essence of our splitting strategy is to exploit the fact that at most one component of an agreement forest can use a given edge of a tree,
and that all other components need to respect the bipartition of their taxa induced by that edge. Before concretizing this notion we need to introduce some further concepts.

\begin{figure}[t]
    \centering
    \includegraphics[page=2]{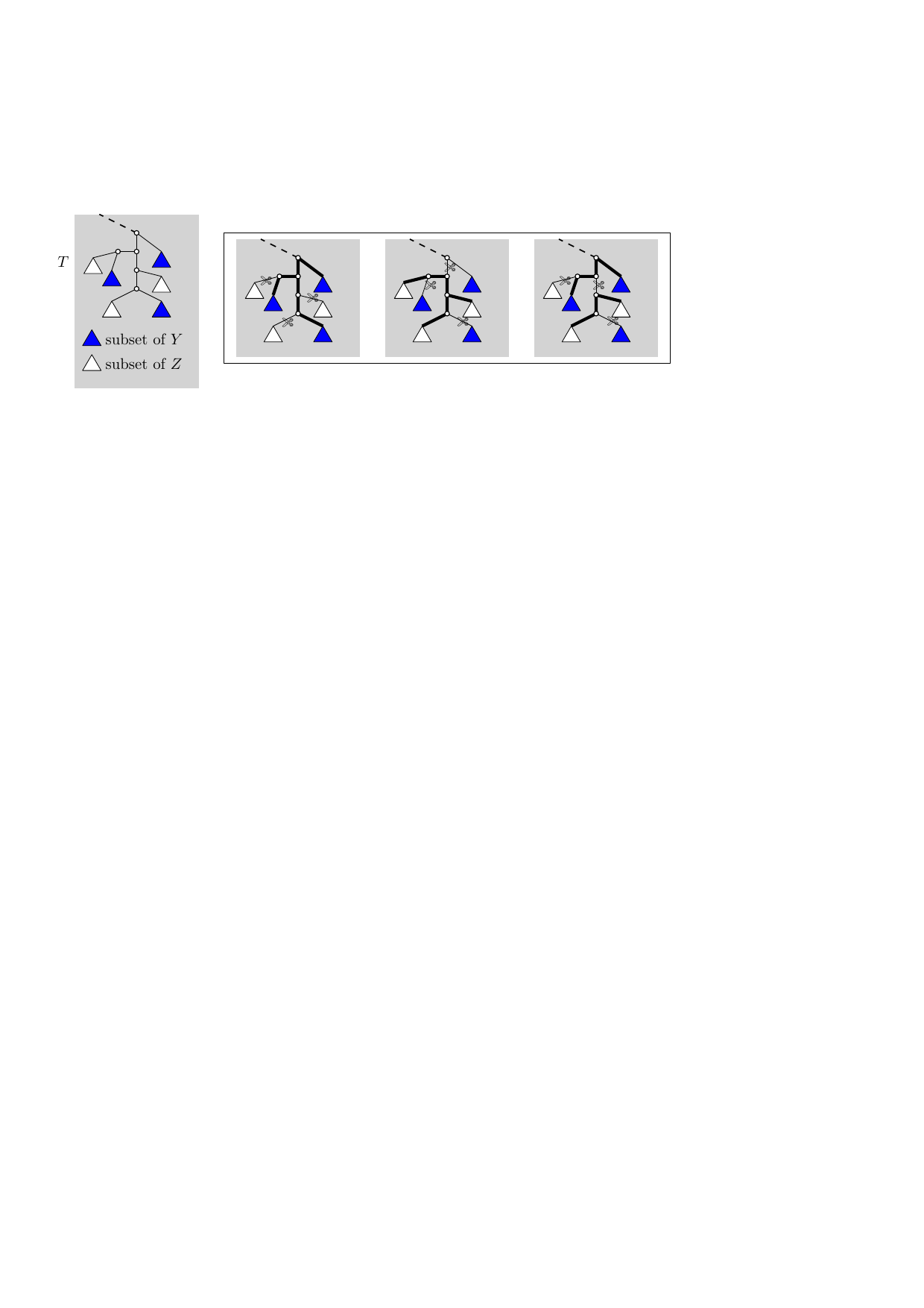}
    \caption{The \emph{branching} involved in splitting a tree $T$ into components according to a bipartition $(Y,Z)$ of the leaves of $T$. In any minimal split, the edge $e$ will be allocated either to a subtree whose leaves belong to $Y$ or to a subtree whose leaves belong to $Z$. Repeatedly applying this procedure allows a splitting core to be constructed (see Figure \ref{fig:split-def}).}
    \label{fig:split-branching}
\end{figure}

\begin{figure}[b]
    \centering
    \includegraphics[page=1]{split.pdf}
    \caption{(left) A subtree $T'$ of a tree $T$ where the leaves (taxa) have been bipartitioned as $(Y,Z)$. (right) Three minimal cuts in $T'$ where no  resulting component has leaves from both of $Y$ and $Z$. These three cuts form a splitting core of the subtree $T'$ with respect to $(Y,Z)$. The cuts can be obtained by repeatedly identifying a pair of edges, at least one of which must be cut, and branching on which of the two is cut (see Figure \ref{fig:split-branching}).  }
    \label{fig:split-def}
\end{figure}

Let $T$ be a binary phylogenetic tree, and let $(Y,Z)$ be a bipartition of the labels of $T$.  Say that a cut $K \subseteq E(T)$ \emph{splits} $(Y,Z)$ if no component of $T\setminus K$ contains both $Y$- and $Z$-vertices.
Say that cut $K_1$ \emph{refines} cut $K_2$ if whenever two labels are connected in $T\setminus K_1$ they are connected in $T\setminus K_2$.  Say that a set $C=\{K_1,\ldots,K_n\}$ of non-empty cuts is a \emph{splitting core} for $(Y,Z)$ if for every cut $K$ that splits $(Y,Z)$ we have that $K$ refines $K_i$ for some $K_i\in C$.  %

The next lemma not only establishes an important property of splitting cores, but its proof also includes an approach \SK{(summarized in  Figure \ref{fig:split-branching})} to obtain such a splitting core. 
This is then used in \cref{rule:overlap} to branch with respect to the cuts in the splitting core. \SK{See Figure \ref{fig:split-def} for more intuition on the concept of a splitting core.}

\begin{lemma}\label{lem:cutweights}
Let $T$ be a binary phylogenetic tree and $(Y,Z)$ a non-trivial bipartition of the labels of $T$.  Then there exists a splitting core $C$ for $(Y,Z)$ such that $\sum_{K\in C} 2^{-|K|} \leq \frac{1}{2}$.
\end{lemma}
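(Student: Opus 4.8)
The plan is to induct on the number of leaves of $T$, proving the (formally slightly stronger) statement that a splitting core of weight at most $\tfrac12$ exists for every bipartition $(Y,Z)$ of the labels with $Y,Z$ both non-empty. Two preliminary reductions make the induction manageable. First, unwinding the definitions, a cut $K$ splits $T$ exactly when the label-partition induced by $T\setminus K$ refines $(Y,Z)$ (i.e.\ no component carries labels from both sides), and ``$K$ refines $K'$'' is exactly the assertion that the label-partition of $T\setminus K$ refines that of $T\setminus K'$; moreover it suffices to cover the \emph{minimal} splitting cuts, since any splitting cut contains one and therefore refines it, and refinement is transitive. Second, a standard exchange argument shows that a splitting cut $K$ is minimal iff every component of $T\setminus K$ carries a label and any two adjacent components carry labels from opposite sides of $(Y,Z)$ (if a component were unlabelled, or two adjacent components carried labels from the same side, we could put back the connecting edge and still split). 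I will rely on this ``proper two-colouring'' picture throughout.

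Case 1 is when $T[Y]$ and $T[Z]$ are vertex-disjoint. Then the unique path in $T$ joining them must be a single edge $e$: any internal vertex of that path would have a third pendant subtree containing neither a $Y$-label nor a $Z$-label, which is impossible. The label-partition of $T\setminus\{e\}$ is precisely $(Y,Z)$, so every splitting cut refines $\{e\}$; thus $C=\{\{e\}\}$ is a splitting core of weight $2^{-1}=\tfrac12$. (Every trivial bipartition, and any bipartition with an empty side, falls under this case, which keeps the induction from running out of base cases.)

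Case 2 is the overlapping case $S:=T[Y]\cap T[Z]\ne\emptyset$. Here I first record the structure: every subtree hanging off an edge leaving $S$ is monochromatic (all $Y$-labels, or all $Z$-labels, lie on the $S$-side of such an edge, by connectedness of $T[Y]$ and $T[Z]$); consequently $S$ has at least one edge (a one-vertex $S$ has degree $3$ in $T$ but would need at least two ``$Y$-directions'' and two ``$Z$-directions'' among three monochromatic directions, which is impossible); and at each leaf $w$ of $S$ exactly two edges leave $S$, carrying one $Y$-hanger and one $Z$-hanger, with attaching edges $g_Y$ and $g_Z$. The crucial consequence of the minimality characterisation is the dichotomy that \emph{every minimal splitting cut $M$ contains exactly one of $g_Y,g_Z$}: if $w$'s component is coloured $Y$ then the $Y$-hanger must stay attached (else two adjacent $Y$-components) while the $Z$-hanger must be cut off (else a mixed component), so $g_Z\in M$, $g_Y\notin M$, and symmetrically. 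Now recurse: let $T_1$ be $T$ with the $Z$-hanger at $w$ and its edge $g_Z$ deleted and $w$ suppressed, with bipartition $(Y,Z_1)$, $Z_1=Z\setminus\ell(h_Z)$ (non-empty, since Case 2 forces at least two $Z$-hangers), and define $T_2$ symmetrically by deleting the $Y$-hanger. Both are strictly smaller, so they have splitting cores $C_1,C_2$ of weight $\le\tfrac12$ by induction. Set
\[
C=\bigl\{\{g_Z\}\cup K:K\in C_1\bigr\}\ \cup\ \bigl\{\{g_Y\}\cup K:K\in C_2\bigr\},
\]
reading each $K$ back into $E(T)$ through the suppression. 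Since $g_Z$ (resp.\ $g_Y$) is not an edge of $T_1$ (resp.\ $T_2$), every member of $C$ acquires exactly one extra edge, so $\sum_{K'\in C}2^{-|K'|}\le\tfrac12\,\mathrm{wt}(C_1)+\tfrac12\,\mathrm{wt}(C_2)\le\tfrac12$. For the covering property: a minimal splitting cut $M$ of $T$ contains exactly one of $g_Y,g_Z$, say $g_Z$; then $M\setminus\{g_Z\}$ is a splitting cut of $T_1$, because the components of $T_1\setminus(M\setminus\{g_Z\})$ are exactly the components of $T\setminus M$ other than $h_Z$ and hence monochromatic; so $M\setminus\{g_Z\}$ refines some $K\in C_1$, and since both $M$ and $\{g_Z\}\cup K$ peel off $\ell(h_Z)$ as a single block and agree elsewhere up to this refinement, $M$ refines $\{g_Z\}\cup K\in C$. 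The case $g_Y\in M$ is symmetric.

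The genuinely delicate point — the one I expect to spend the most care on — is establishing the equivalence between minimality of a splitting cut and a proper two-colouring of its components, and drawing from it the ``exactly one hanger edge'' dichotomy at a leaf of $S$; once that is in place the recursion is essentially forced. The rest is routine bookkeeping: tracking which edge of $T$ corresponds to the merged edge after $w$ is suppressed, verifying that $M\setminus\{g_Z\}$ really is a splitting cut of $T_1$, and checking that refinement is inherited; together with the minor point that the sub-bipartitions produced by the recursion may be trivial, which is harmless since trivial bipartitions are instances of Case 1. (If ``$Y$- and $Z$-vertices'' is instead read as vertices of the embeddings $T[Y]$ and $T[Z]$, Case 2 simply has no splitting cuts at all and the same construction applies trivially, so the substantive content is the label reading used above.)
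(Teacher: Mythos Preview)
Your proof is correct and rests on the same inductive decomposition as the paper's: locate a degree-$3$ vertex having one pure-$Y$ pendant subtree, one pure-$Z$ pendant subtree, and one mixed direction, then branch on which monochromatic pendant is severed and recurse on the two resulting smaller trees, with the weight bound following from $\tfrac12\cdot\tfrac12+\tfrac12\cdot\tfrac12=\tfrac12$. The differences are purely in execution. You identify the pivot explicitly as a leaf of $T[Y]\cap T[Z]$, restrict attention to \emph{minimal} splitting cuts via the proper-two-colouring characterisation to obtain the ``exactly one of $g_Y,g_Z$'' dichotomy, and suppress the resulting degree-$2$ vertex so as to remain within binary phylogenetic trees. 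The paper instead strengthens the inductive hypothesis to arbitrary trees of maximum degree $3$ (so the degree-$2$ vertex left after removing a pendant needs no suppression) and works with \emph{all} splitting cuts, observing directly that any such cut must refine $\{e_1\}$ or $\{e_2\}$. The paper's route is shorter---it sidesteps both the minimality lemma and the edge-correspondence bookkeeping you correctly flag as the delicate points---at the modest cost of a broader inductive class; your route buys never leaving the category of binary phylogenetic trees.
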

\begin{proof}
Induction on the number of leaves of $T$, strengthening the inductive hypothesis to only assume maximum degree 3 rather than that all internal nodes have degree exactly 3.  The base case (one leaf) is vacuous since there is no non-trivial bipartition on one element.

If $T[Y]$ is pendant in $T$, say detached by deleting edge $e$, then $\{\{e\}\}$ is a splitting core satisfying the inequality.  Otherwise, there is an internal node $v$ of $T$ of degree 3 joined by an edge $e_1$ to a pendant subtree whose leaves are all drawn from $Y' \subseteq Y$, by an edge $e_2$ to a pendant subtree whose leaves are all drawn from $Z' \subseteq Z$, and by an edge $e_3$ to a pendant subtree containing a mixture of $Y$ and $Z$ leaves.  Write $T_1$ and $T_2$ for the component of $v$ in $T\setminus \{e_1\}$ and $T\setminus \{e_2\}$ respectively; note that each of $T_1$ and $T_2$ contains both $Y$- and $Z$-nodes.  Let $C_1$ and $C_2$ be splitting cores for $(Y,Z)\cap T_1$ and $(Y,Z)\cap T_2$ satisfying the inequality (by the inductive hypothesis).  Then $C=\{K\cup\{e_1\} | K\in C_1\} \cup \{K\cup \{e_2\} | K\in C_2\}$ is a splitting core for $(Y,Z)$.  Indeed, let $K$ be a cut that splits $(Y,Z)$.  Then $K$ refines $\{e_1\}$ or $\{e_2\}$; say $\{e_1\}$ w.l.o.g.   Then
$K \cap T_1$ splits $(Y,Z)\cap T_1$, so
$K \cap T_1$ refines $K'$ for some $K'\in C_1$, 
and $K$ refines $K'\cup \{e_1\}\in C$, as required.  
Now to complete the proof note that by the inductive hypothesis \[\sum_{K\in C} 2^{-|K|} \leq \sum_{K\in C_1} 2^{-(|K|+1)} + \sum_{K\in C_2} 2^{-(|K|+1)} \leq \frac{1}{4} + \frac{1}{4} = \frac{1}{2}. \qedhere\]
\end{proof}

\begin{branchrule}[SPLIT]\label{rule:overlap} 
Suppose that components $T'_1,T'_2 \in F'$ have overlapping embeddings
in $T$, \SK{i.e. $T[ \ell(T'_1) ]$ and $T[ \ell(T'_2)]$ are
not disjoint.}
We write $T_i$, $i \in \{1,2\}$ as shorthand for these embeddings, i.e. $T_1 = T[ \ell(T'_1) ]$  and $T_2 = T[ \ell(T'_2)]$. Select a common edge $e \in T_1\cap T_2$.  First branch on whether $e$ will be assigned to embedding $T_1$ or embedding $T_2$.
For the branch in which $e$ is assigned to $T_{3-i}$, let $(Y,Z)$ be the bipartition induced on $\ell(T'_i)$ by associating labels which are on the same side of $e$ in $T_i$.  Let $C$ be a splitting core as in the statement of Lemma \ref{lem:cutweights}.
Now branch on the cuts of $C$.
\end{branchrule}

As stated earlier the idea behind this branching rule, which generalizes the DIFFERENT COMPONENTS branching rule\footnote{
To see this, if $\{a,b\}$ is a cherry in $T$, and $a$ and $b$ belong to different non-singleton components in $F'$, then the embeddings of these two non-singleton components in $T$ necessarily overlap on the parent of $a$ and $b$ in $T$. If $a$ or $b$ is a singleton in $F'$, then it would already have been tidied up.
}, 
is that in any agreement forest at most one component can use edge $e$ in $T$. For a component (say, $T'_1$) that does not have edge $e$ available, no path can survive in $T'_1$ from a taxon in $Y$ to a taxon in $Z$. 
Hence, any agreement forest reached from this point necessarily separates all the $Y$ taxa into different components than the $Z$ taxa. Crucially, the branching factor of such a separation procedure is comparatively low.
Formally this is because of the inequality in Lemma \ref{lem:cutweights}; intuitively, once we have decided to separate $Y$ from $Z$ in $T'_1$ we can recursively identify pairs of edges, such that at least one has to be cut in any (optimal) agreement forest reached from this point \SK{(see Figure \ref{fig:split-branching})}. 
The fact that we do not make a cut with the initial branch (i.e. when we decide whether $T'_1$ or $T'_2$ claims the edge $e$ in $T$) is compensated for by the fact that the final cut required to separate $Y$ from $Z$, is deterministic (i.e. ``for free'').

\newcommand{\CC}{\mathcal{C}}
\begin{lemma}
Branching rule \ref{rule:overlap} is safe and has recurrence dominated
by $T(k)=2T(k-1)$, assuming $T(k) \in \omega(2^k)$. This yields branching factor 2.
\end{lemma}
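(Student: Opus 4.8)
The plan is to establish the two assertions separately: that the rule is safe, and that the recurrence it induces has branching factor at most~$2$. For \emph{safety}, I would fix an optimal agreement forest $F^{*}$ reachable from the current configuration $(T,F')$ via some set of edge cuts, and show that it is captured by at least one branch. The key observation is that the embeddings in $T$ of distinct components of $F^{*}$ are vertex-disjoint, so the chosen common edge $e\in T_1\cap T_2$ lies in the embedding of at most one component of $F^{*}$; since $\ell(T'_1)$ and $\ell(T'_2)$ are disjoint, there is an $i\in\{1,2\}$ such that $e$ lies in the embedding of no component of $F^{*}$ contained in $\ell(T'_i)$, and I would follow the branch that assigns $e$ to $T_{3-i}$. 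For that $i$, every component of $F^{*}$ contained in $\ell(T'_i)$ lies wholly on one side of $e$ in $T$, hence inside $Y$ or inside $Z$; consequently the set of edges cut inside $T'_i$ on the way to $F^{*}$ splits $T'_i$ with respect to $(Y,Z)$. By the defining property of a splitting core this cut refines some $K\in C$, and I would argue that committing to $K$ as the first move of this branch loses nothing: it merely coarsens the label partition that $F^{*}$ eventually produces inside $T'_i$, and because $F^{*}$ is optimal -- in particular it never separates a common pendant subtree -- cutting $K$ does not increase the number of cuts still needed to reach $F^{*}$. Concretely, unfolding the splitting core constructed in Lemma~\ref{lem:cutweights}, this amounts to saying that once $T'_i$ is forced to separate $Y$ from $Z$ we may repeatedly branch on the two edges lying above a $Y$-only and a $Z$-only pendant subtree of the remaining tree -- at least one of which must be cut in any agreement forest reached from here -- until $Y$ and $Z$ become pendant, at which point the unique edge between them is cut deterministically. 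Soundness (the reverse inequality) is immediate since every branch performs only genuine edge cuts in $F'$.

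For the \emph{recurrence}, the rule first branches into the two branches assigning $e$ to $T_1$ or to $T_2$, which make no cuts, and within the branch indexed by $i$ it branches over the cuts of a splitting core $C_i$, the sub-branch for $K\in C_i$ making $|K|\ge 1$ cuts. This gives
\[
  T(k)\;\le\;\sum_{i\in\{1,2\}}\ \sum_{K\in C_i} T\bigl(k-|K|\bigr).
\]
To see that this has branching factor at most~$2$ it suffices to check that $T(k)=2^{k}$ satisfies the inequality; substituting and applying Lemma~\ref{lem:cutweights} to each $C_i$,
\[
  \sum_{i\in\{1,2\}}\ \sum_{K\in C_i}2^{\,k-|K|}
  \;=\;2^{k}\sum_{i\in\{1,2\}}\ \sum_{K\in C_i}2^{-|K|}
  \;\le\;2^{k}\bigl(\tfrac12+\tfrac12\bigr)\;=\;2^{k}.
\]
The bound is tight: when $T[Y]$ and $T[Z]$ are already pendant in $T'_i$ each $C_i=\{\{f_i\}\}$ and the recurrence becomes exactly $T(k)\le 2\,T(k-1)$; this is the sense in which the deterministic (``free'') final cut compensates for the cut-free outer branch.

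I expect the safety argument to be the main obstacle, and within it the step asserting that committing to a cut $K\in C$ costs no more than the cuts it replaces on the way to $F^{*}$. The difficulty is that ``refines'' is strictly weaker than ``is a subset of'', so $K$ need not be among the cuts actually made by $F^{*}$; excluding the possibility that cutting $K$ wastes a cut -- for example a cut that would subsequently be undone by a common-cherry reduction -- seems to require combining the optimality of $F^{*}$ with some care about the interaction between the tidying-up operations and the cut-counting.
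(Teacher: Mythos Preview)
Your proposal is correct and follows the same route as the paper: at most one of $T'_1,T'_2$ can use the edge $e$, the cuts made inside the other one must split it with respect to $(Y,Z)$ and hence refine some $K$ in the splitting core, and the recurrence bound follows by summing over both cores and invoking Lemma~\ref{lem:cutweights}. Your final worry is unwarranted: since $T'_i$ is a tree, cutting any $m$ edges yields exactly $m+1$ components, so if the cuts $K^{*}$ made by $F^{*}$ inside $T'_i$ refine $K$ then after cutting $K$ one reaches the same label partition as $K^{*}$ with exactly $|K^{*}|-|K|$ further cuts---no appeal to optimality of $F^{*}$, common pendant subtrees, or tidying-up is needed.
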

\begin{proof}
    For safety, the branch where $e$ is allocated to $T_{3-i}$ represents the case that edge $e$ is \emph{not} used by any of the components that $T_i$ is split into in the agreement forest ultimately reached; clearly this must hold for either $i=1$ or $i=2$ (or both).  Then the final MAF must be a
    cut which splits $T_i$ with respect to the bipartition $(Y,Z)$
    and so it must refine some element of the splitting core $C$.
    For the recurrence, writing $C_1$ and $C_2$ for the splitting cores for the two branches, we have $T(k) = \left( \sum_{K\in C_1} T(k-|K|))\right) + \left( \sum_{K\in C_2} T(k-|K|)\right)$, which by Lemma~\ref{lem:cutweights} is dominated by $T(k)=2T(k-1)$.
\end{proof}

Note that since the overall recurrence is assumed to be in $\omega(2^k)$ any non-trivial splitting core would only improve the recurrence. 
In particular, encountering a splitting core with more than a single cut of size 1 leads to a greater degree of branching at branching factor 2, and hence an improved overall running time.

\subsection{Decomposing into disjoint components}\label{subsec:decompose}

We call a $(T,F')$ instance such that the components of $F'$ induce disjoint subtrees of $T$ a \emph{disjoint instance}. 
It is easy to see that for a disjoint instance a smallest possible agreement forest reachable from $(T,F')$ can be obtained by independently computing smallest possible agreement forests for each of the tree pair instances $T|B, F'|B$, and combining them, where $B$ ranges over all components in $F'$. 
However, if we are seeking a MAF with at most $k$ cuts, it is not clear how these cuts need to be partitioned among the sub-instances. 
It turns out that we can make gains by first testing each sub-instance at some constant depth $t$, and then (for those for which this fails) re-testing using the knowledge thus gained about the number of cuts needed for other components. 
We note that testing at depth $t$ amounts to asking whether the sub-instance has an agreement forest with at most $(t+1)$ components and if so returning the smallest such agreement forest. 
For constant $t$ this can be computed in polynomial time simply by trying all possible ways to cut at most $t$ edges in one of the trees comprising the sub-instance.

\newtheorem*{rec*}{Recursion rule}
\begin{rec*} [Parameter $t$]
    Let $\mathcal{I} = (T,F')$ be a disjoint instance, let $T'_1\ldots T'_q$ be the components of $F'$, and write $T_i$ for the subtree of $T$ induced by the taxa of $T'_i$.  To solve $\mathcal{I}$ to depth $k$, do the following:
    \begin{enumerate}
        \item\label{it:recstep1} For each $i$, solve the instance $(T_i,T'_i)$ to depth $t$, and let the result be $r_i$.  Write $r'_i=t+1$ if $r_i=\qfalse$ and $r'_i=r_i$ otherwise.
        \item\label{it:recstep2} For each $i$, let $s_i=r_i$ if $r_i\neq \qfalse$.  If $r_i=\qfalse$, solve the instance $(T_i,T'_i)$ to depth $k-\sum_{j\neq i} r'_j$, and let the result be $s_i$.
        \item Return $\qfalse$ if any $s_i=\qfalse$ and $\sum_i s_i$ otherwise.
    \end{enumerate}
\end{rec*}

Note that this is clearly sound: once we have established at step \ref{it:recstep1} that all the components except for $i$ require a total of at least $\sum_{j\neq i} r'_j$ cuts to produce a MAF, it suffices at step \ref{it:recstep2} to solve component $i$ to depth $k-\sum_{j\neq i} r'_j$.
In terms of running time, if all components are homeomorphic then we terminate after step \ref{it:recstep1} in polynomial time.  If exactly one component is non-homeomorphic, then Step \ref{it:recstep2} contains a single call also to depth $k$.  However, if at least two components are non-homeomorphic then we see in the next lemma that we obtain a huge gain, equivalent to applying a branching rule with branching factor $2^{1/(t+1)}$.

\begin{lemma} \label{ref:easydecompose}

    For \DM{$k\geq t+1$}, if the disjoint instance $\mathcal{I} = (T,F')$ has at least two non-homeomorphic components (i.e. those where
    $T_i$ is not homeomorphic to $T'_i$) then the cost of solving instance $\mathcal{I}$ with the recursion rule with parameter $t$ has recurrence dominated by $T(k)=2T(k-(t+1))$.
\end{lemma}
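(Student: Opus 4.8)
The plan is to bound the cost of the recursion rule by charging all the work done by the algorithm and showing the total cost obeys the stated recurrence. First I would fix the two (or more) non-homeomorphic components; say these are among the components $T'_1,\ldots,T'_q$, and without loss of generality at least $T'_1$ and $T'_2$ are non-homeomorphic, so that $r_1 = r_2 = \qfalse$ in Step \ref{it:recstep1} and hence $r'_1, r'_2 \geq t+1$. Step \ref{it:recstep1} runs in polynomial time (it is $q$ calls, each to constant depth $t$), so it contributes nothing to the exponential part of the recurrence. The whole exponential cost therefore comes from Step \ref{it:recstep2}, where for each non-homeomorphic component $i$ we make a recursive call to depth $k - \sum_{j\neq i} r'_j$.

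The key observation is that for every non-homeomorphic component $i$ we have $\sum_{j\neq i} r'_j \geq r'_{i'} \geq t+1$, where $i'$ is any other non-homeomorphic component (which exists by hypothesis, since there are at least two). Hence every recursive call in Step \ref{it:recstep2} is made to depth at most $k - (t+1)$. There are at most $q$ such calls, and $q$ is bounded by a polynomial in the input size (indeed $q \le |X|$), so the total cost of Step \ref{it:recstep2} is at most $\mathrm{poly}(|X|) \cdot T(k-(t+1))$, which for the purposes of the $O^{*}$ running time is dominated by the recurrence $T(k) = 2T(k-(t+1))$. Adding the polynomial cost of Step \ref{it:recstep1} does not change this. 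This gives the claimed bound, and the solution of $T(k) = 2T(k-(t+1))$ is $2^{k/(t+1)}$, i.e. branching factor $2^{1/(t+1)}$ as advertised in the surrounding text.

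The one subtlety to be careful about — and the step I expect to need the most attention — is the recursive calls themselves: each call $(T_i,T'_i)$ solved to depth $k - \sum_{j\neq i} r'_j$ is itself handled by the overall algorithm (splitting via branching rule \ref{rule:overlap}, then recursing via the recursion rule again), not by a magic oracle, so I must make sure the accounting is of the standard form where $T(\cdot)$ denotes the worst-case running time of the whole recursive procedure on an instance with the given budget, and that the ``at most $q$ independent calls, each of reduced depth'' structure legitimately yields a recurrence dominated by $2T(k-(t+1))$ once polynomial factors are absorbed into $O^{*}$. I would also note explicitly why ``for sufficiently large $k$'' appears: we need $k \geq \sum_{j \ne i} r'_j$ for the depth of the Step \ref{it:recstep2} calls to be meaningful (otherwise the instance is already answered $\qfalse$), and more mildly we want $k - (t+1) \geq 0$; for small $k$ everything is handled in polynomial time by Step \ref{it:recstep1} anyway, so the recurrence is only claimed asymptotically. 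Beyond that the argument is routine bookkeeping.
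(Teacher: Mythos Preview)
Your argument has two genuine gaps.

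First, you equate ``non-homeomorphic'' with ``$r_i=\qfalse$''. These are not the same: non-homeomorphic only means the component needs at least one cut, whereas $r_i=\qfalse$ means it needs more than $t$ cuts. So for a non-homeomorphic component you only get $r'_i\ge 1$, not $r'_i\ge t+1$. Concretely, suppose exactly one component needs more than $t$ cuts and the other non-homeomorphic component needs just one cut. Then Step~\ref{it:recstep2} makes a single recursive call, but only to depth $k-1$, not $k-(t+1)$. The paper handles this by letting $p$ be the number of components returning $\qfalse$ and treating $p=1$ separately (one call to depth $\le k-1$, hence work $T(k-1)$).

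Second, your step ``$q\cdot T(k-(t+1)) = \mathrm{poly}(|X|)\cdot T(k-(t+1))$ is dominated by $2T(k-(t+1))$ for $O^*$ purposes'' is not sound. You cannot absorb an instance-dependent multiplicative factor into $O^*$ inside a recurrence: it compounds across levels, and the recurrence $T(k)=nT(k-(t+1))$ solves to $n^{k/(t+1)}$, which is not FPT. The paper avoids this by using a much tighter depth bound: when $p\ge 2$ components return $\qfalse$, each of the $p$ calls is to depth at most $k-(p-1)(t+1)$ (since the other $p-1$ components each contribute $t+1$ to $\sum_{j\ne i}r'_j$). Thus total work is at most $pT(k-(p-1)(t+1))$, and one checks this is maximised at $p=2$, giving exactly $2T(k-(t+1))$. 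The crucial linkage between the number of calls $p$ and the depth reduction $(p-1)(t+1)$ is what your argument is missing. (One could alternatively try to rescue your bound by tracking instance sizes and exploiting that the sub-instances partition the taxa, but you did not do this and it would require a different induction.)
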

\begin{proof}
    Step \ref{it:recstep1} involves solving the sub-instances to constant depth, and hence only polynomial work.  Let $p$ be the number of sub-instances returning $\qfalse$ at step \ref{it:recstep1} and hence requiring to be solved at step \ref{it:recstep2}.  If $p=1$ then there is only one sub-instance to be solved, to depth at most $k-1$ (since at least one other sub-instance was non-homeomorphic and so required at least one cut), for total work at most $T(k-1)$.  If $p \geq 2$ then each subinstance is solved to depth at most $k-(p-1)(t+1)$ \DM{(or return $\qfalse$ if this is negative)} and so we have total work at most $pT(k-(p-1)(t+1))$.  It is easy to check that this is dominated by $p=2$, giving the required result.
    \end{proof}

Note that the Recursion rule and Lemma \ref{ref:easydecompose} can be further optimised to use simultaneous iterative deepening in place of a constant $t$, giving a bound of essentially $T(k-1)$ rather than $2T(k-(t+1))$.  We do not describe this in detail, however, because it does not improve the overall running time of any of our algorithms: for each one the limiting case (as well as most of the mathematical work) is in handling/avoiding the situation where all except one of the components are homeomorphic.  In fact, for the unrooted and rooted algorithms it will suffice to take $t=0$ and $t=1$ respectively.

\section{Unrooted trees}
\label{sec:unrooted}

Before deploying our split-or-decompose technique to uMAF, we start with some insights and some improvements of special cases.
First, note that when Chen's algorithm is applied to a tree-pair $(T,T')$, rather than a tree-forest pair, there is complete symmetry between $T$ and $T'$. In particular, we can apply the second part of Chen's algorithm to cherries from $T'$, not just to cherries from $T$. (This insight holds for Whidden's algorithm on rooted trees, too).

Next, we observe that CHEN $(t=2)$ can actually be further strengthened.

\begin{restatable}{observation}{obsttwo}
\label{obs:t2}
Let $\{a,b\}$ be a cherry in $T$ such that $a$ and $b$ are in the same component $B$ of $F'$ and the path from $a$ to $b$ in $B$ has exactly two incident edges $e_1, e_2$. Then, if an agreement forest $F^{*}$ can be reached from here by making at most $k' \leq k$ cuts, then with $k'$ (or fewer) cuts an agreement forest can be reached in which $\{a,b\}$ are in the same component.
\end{restatable}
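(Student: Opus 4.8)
The plan is to take an arbitrary agreement forest $F^{*}$ reachable from $(T,F')$ with at most $k'$ cuts and surgically transform it into an agreement forest $F^{**}$, reachable with no more cuts, in which $a$ and $b$ lie in a common component. Throughout I will use the fact that reaching an agreement forest $G$ from $(T,F')$ costs exactly $|G|-|F'|$ cuts, so it suffices to produce $F^{**}$ with $a,b$ together and $|F^{**}|\le|F^{*}|$. Write $v$ for the common neighbour of $a$ and $b$ in $T$ and $w$ for the third neighbour of $v$. Since $t=2$, the $a$--$b$ path inside the component $B\in F'$ has the form $a$--$p_1$--$p_2$--$b$, where $p_i$ is incident to the unique off-path edge $e_i$, which leads into a pendant subtree $S_i$ of $B$. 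If $F^{*}$ already has $a$ and $b$ together, nothing is to prove, so assume it does not.

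\textbf{Reduction to the separated case.} First I would show that one of $a,b$ must then be a singleton of $F^{*}$: if not, let $C_a\ne C_b$ be their components; because $\{a,b\}$ is a cherry of $T$, every path in $T$ from $a$ (resp.\ $b$) to a third taxon runs through the edge $vw$, so both $T[\ell(C_a)]$ and $T[\ell(C_b)]$ would contain $vw$, violating vertex-disjointness. By the symmetry swapping $(a,p_1,e_1,S_1)$ with $(b,p_2,e_2,S_2)$ I may assume $\{a\}\in F^{*}$. In particular the cut that isolated $a$ is the edge $ap_1$, so the component $C_b\ni b$ is a connected piece of $B$ avoiding $a$, and $C_b=T|\ell(C_b)=T'|\ell(C_b)=B|\ell(C_b)$.

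\textbf{The surgery.} Set $A_1:=\ell(C_b)\cap\ell(S_1)$ and $L_2:=\ell(C_b)\setminus A_1$ (so $b\in L_2$). I obtain $F^{**}$ from $F^{*}$ by deleting the components $\{a\}$ and $C_b$ and adding the tree $D_1$ on $L_2\cup\{a\}$ obtained from $C_b|L_2$ by attaching $a$ as a new sibling of $b$, together with the component $D_2:=A_1$ when $A_1\ne\emptyset$. This replaces two components by at most two, so $|F^{**}|\le|F^{*}|$, and $a,b$ now lie together in $D_1$. To check $F^{**}$ is an agreement forest: $D_2$ is homeomorphic since $T|A_1=T'|A_1=B|A_1$ are all restrictions of $C_b$; for $D_1$ the crucial ingredient is the \emph{grafting identity} --- since $\{a,b\}$ is a cherry of $T$, for any label set $Y$ with $b\in Y$, $a\notin Y$ the tree $T|(Y\cup\{a\})$ is obtained from $T|Y$ by attaching $a$ as a sibling of $b$; and because $L_2$ avoids $S_1$, the node $p_1$ is suppressed in $B|(L_2\cup\{a\})$, so the same identity holds with $B$ in place of $T$ for $Y=L_2$. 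As $T|L_2=B|L_2$ (a restriction of $C_b$), these combine to give $T|\ell(D_1)=B|\ell(D_1)=T'|\ell(D_1)=D_1$. Vertex-disjointness of $D_1,D_2$ from each other and from the other components is then routine ($T[\ell(D_1)]$ is contained in $T[\ell(C_b)]$ together with the leaf $a$ and the edge $av$, $T[\ell(D_2)]\subseteq T[\ell(C_b)]$, the sets $L_2$ and $A_1$ lie on opposite sides of a cut edge of $C_b$, and $a,v$ occur in no other component of $F^{*}$; symmetrically in $T'$), and in particular the pieces of $B$ in $F^{**}$ embed disjointly in $T'$, so $F^{**}$ is genuinely reachable by cutting.

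\textbf{The main obstacle.} Everything above is easy except the case $A_1\ne\emptyset$, i.e.\ when the component of $b$ reaches back through $p_1$ into $S_1$. There one cannot simply absorb $a$ into $C_b$: in $T'$ the vertices $p_1,p_2$ both survive, so $a$ and $b$ would fail to form a cherry and the merged component would not be homeomorphic. Peeling $A_1$ off into its own component $D_2$ is precisely what restores the suppression of $p_1$ and makes the grafting identity hold on the $T'$-side as well; and the bookkeeping survives because this peeling creates only a \emph{single} extra component, which is exactly paid for by absorbing the singleton $\{a\}$. A secondary point I would have to treat with a little care is that the correspondence ``cuts $=|G|-|F'|$'' is unaffected by the tidying operations, but this holds because singleton deletions and common-cherry reductions change the component count and the cut count by the same amount.
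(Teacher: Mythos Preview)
Your surgery coincides with the paper's in the case where exactly one of $a,b$ is a singleton, but there is a genuine gap when \emph{both} are singletons, i.e.\ when $C_b=\{b\}$. Then $A_1=\emptyset$, $L_2=\{b\}$, and your $F^{**}$ simply replaces $\{a\},\{b\}$ by $D_1=\{a,b\}$. The claim ``symmetrically in $T'$'' fails here: the embedding $B[\{a,b\}]$ is the path $a\text{--}p_1\text{--}p_2\text{--}b$, and nothing prevents some other component $E\subseteq\ell(S_1)\cup\ell(S_2)$ of $F^{*}$ from having leaves in both $S_1$ and $S_2$, in which case $B[E]$ also passes through $p_1$ and $p_2$. (When $|C_b|\ge 2$ this cannot happen, because then $p_2\in B[C_b]$ and disjointness in $F^{*}$ rules out such an $E$; but for $C_b=\{b\}$ there is no obstruction.) A concrete four-taxon counterexample: let $T$ have cherries $\{a,b\}$ and $\{x,y\}$, let $T'=B$ have cherries $\{a,x\}$ and $\{b,y\}$ (so $S_1=\{x\}$, $S_2=\{y\}$), and take $F^{*}=\{\{a\},\{b\},\{x,y\}\}$, which is a valid agreement forest. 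Your $F^{**}=\{\{a,b\},\{x,y\}\}$ is not: both embeddings use $p_1$ and $p_2$ in $T'$. The asymmetry with the $T$-side is that the cherry vertex $v$ has only the leaves $a,b$ hanging off it, whereas $p_1,p_2$ also carry the subtrees $S_1,S_2$.

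The paper treats this both-singletons case with a separate, easy surgery: at most one such straddling component $E$ can exist (two would overlap at $p_1,p_2$), so cut $E$ once into its $S_1$- and $S_2$-parts and then merge $\{a\},\{b\}$ into $\{a,b\}$, for no net change in the number of components. Adding this case repairs your argument; the remainder --- in particular the $A_1\ne\emptyset$ case you correctly flag as the main obstacle --- is sound and matches the paper's proof.
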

\begin{proof}
Suppose an agreement forest $F^{*}$ is reached from this position where $a$ and $b$ end up in different components $B_a$ and $B_b$. Observe that, due to $\{a,b\}$ being a cherry in $T$, at least one of $B_a$ and $B_b$ is a singleton component. If both $B_a$ and $B_b$ are singleton components, then at most one component distinct from $B_a$ and $B_b$, let us call this $B_c$, intersects the path from $a$ to $b$ in $B$. We can cut this component $B_c$ (incurring one extra cut) to free up the path from $a$ to $b$ in $B$ and then join $B_a$ and $B_b$ (saving one cut). Hence, $a$ and $b$ are now together in a component and the number of components has not increased; we are done. Next, assume that exactly one of $B_a$ and $B_b$ is a singleton, so w.l.o.g. $|B_a| \geq 2$ and $|B_b|=1$. If $|B_a|=2$, we can merge $B_a$ and $B_b$ to obtain a valid agreement forest with one fewer component and we are done. If  $|B_a| > 2$ and at least one of the edges $e_1, e_2$ is not used by $B_a$, it is again safe to simply merge $B_a$ and $B_b$. The only issue is if both $e_1$ and $e_2$ are used by $B_a$. In this case, let $\emptyset \subset S_i \subset (B_a \setminus \{a
\})$, $i \in \{1,2\}$, be the subset of $B_a$ that lies beyond edge $e_i$. Note that $B_a = \{a\} \cup S_1 \cup S_2$. We replace the two components $B_a, B_b$ with the two components $S_1, S_2 \cup \{a,b\}$ and we are done.
\end{proof}

\SK{\textbf{High-level intuition.}} By Observation~\ref{obs:t2}, the case $t=2$ simplifies to: cut one of $\{e_1, e_2\}$ off, giving a bound of~$2^k$. After this mild strengthening, the case $t=3$ is the only limiting case of Chen's algorithm\footnote{For $t \geq 4$ the branching factor of CHEN is at most 2.595.}, with recurrence $2T(k-1)+3T(k-2)$ (yielding branching factor 3). So, to improve the overall
running time for uMAF, it is sufficient to target this
case, and in particular the three branches where two cuts are made.
As discussed in Section \ref{sec:splitordecompose}, if the resulting components are not disjoint then we can immediately apply branching rule \ref{rule:overlap}
and replace a $T(k-2)$ with $2T(k-3)$. \SK{The term $2T(k-3)$ is obtained when we substitute $k-2$ for $k$ in the (worst-case) recurrence for branching rule 
\ref{rule:overlap}, i.e. $2T((k-2)-1)$.}  If the components are disjoint but at least two of them are non-homeomorphic then we also win by Lemma \ref{ref:easydecompose}: \SK{we again obtain a strengthening from $T(k-2)$ to $2T((k-2)-1) = 2T(k-3)$}. \SK{We will in due course argue that at least two of the three $T(k-2)$ terms can be strengthened this way. Then, instead of an overall running time of $2T(k-1)+3T(k-2)$ we get $2T(k-1) + T(k-2) + 4T(k-3)$ which solves to the required branching factor of $\magicunrootednumber$.}
\SK{To make the above machinery work we need to deal with the remaining case when the components resulting from a $T(k-2)$ invocation are disjoint in $T$ and two of them are homeomorphic.} We will handle this by introducing new branching rules with low branching factors which are applicable in certain special cases, and then show that if none of these are applicable then when we apply CHEN $(t=3)$ the bad case cannot happen. \SK{Ultimately this will culminate in \textsc{Algorithm uMAF} shown later in Figure \ref{fig:umaf}}.

The first special-case rule is described by the following lemma.

\begin{restatable}{lemma}{correcttwosingletons}
\label{lem:twosingletons}
Let $T$ and $T'$ be two unrooted trees
on $X$ without common cherries. Assume that CHEN does not apply with $t=2$ or $t\geq 4$. Let $\{a,b\}$ be a cherry in $T$ that allows a $t=3$ application of CHEN and let $\emptyset \subset X_1, X_2, X_3 \subset (X\setminus\{a,b\})$ be the three subsets of taxa
that are pendant on the path from $a$ to $b$ in $T'$ (the labelling of these three subsets is arbitrary). If two or more of the $X_i$ have cardinality 1, then
the branching rule CHEN $(t=3)$ can be slightly adapted to yield recursion $T(k) = 2T(k-1) + 2T(k-2)$ which has a branching factor of at most $\approx 2.73$. Specifically: let $X_1 =\{c\}$ and $X_3 = \{d\}$ and without loss of generality (because the labels of $a$ and $b$ are symmetric, and because of the arbitrary labelling of the $X_i$) assume that $c$ is sibling to $a$ in $T'$. Then the branch of CHEN $(t=3)$ where $c$ and $X_2$ are cut off,
does not need to be considered.
\end{restatable}
\begin{proof}
Observe that $\{a,c\}$ is a cherry in $T'$. Therefore, to prevent triggering CHEN $t=2$ or $t\geq 4$, $c$ must be exactly four edges away from $a$ in $T$. Consider an agreement forest reached from this point that contains the component $\{a,b,d\}$. In this forest, $c$ is definitely a singleton. It can be verified that, due to $c$ being exactly four edges away from $a$ in $T$, replacing the components $\{a,b,d\}$, $\{c\}$ with the components $\{a,b,c\}$, $\{d\}$ also yields a valid agreement forest. Hence, a $T(k-2)$ branch of CHEN $(t=3)$ can be dropped---specifically, the one that cuts off $c$ and $X_2$---and we obtain the required recursion.
\end{proof}

The improvement of CHEN $t=3$ by Lemma~\ref{lem:twosingletons} is denoted CHEN $(t=3, \geq 2 \text{ singletons})$.
Crucially, incorporating this branching rule allows us to focus our attention on applications of CHEN $t=3$ when \emph{at most one} of $X_1, X_2, X_3$ is a singleton. 
The next special-case  branching rule detects and handles the situation that the two trees contain pendant components which are homeomorphic once cut off.

\begin{branchrule}\label{rule:subtree}
    Let $T$ and $T'$ be unrooted trees with no common cherries. Suppose that there exists
    $X' \subset \ell(T')$ such that,
    \begin{enumerate}
    \item $T'|X' = T|X'$ 
    \item $T'[X']$ is pendant in $T'$, and $T[X']$ is pendant in $T$
    \item $|X'| > 1$.
    \end{enumerate}
That is, $X'$ induces a non-trivial common pendant subtree, ignoring rooting, in $T'$ and $T$.
    Then we cut off the pendant subtree $T'[X']$ from $T'$, cut off the pendant subtree $T[X']$ from $T$, and delete the two pendant subtrees we have cut off (since they are homeomorphic).
\end{branchrule}

\begin{restatable}{lemma}{correctsubtree}
    
    Branching rule \ref{rule:subtree} is safe and has recurrence $T(k)=T(k-1)$.
\end{restatable}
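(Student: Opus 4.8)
The plan is to separate the two assertions and to reduce \emph{safety} to the single arithmetic identity $d_\uMAF(T,T') = d_\uMAF(\widehat T,\widehat T')+1$, where $(\widehat T,\widehat T') := (T|(\ell(T)\setminus X'),\,T'|(\ell(T')\setminus X'))$ is the instance the rule produces. The recurrence is then immediate: the rule is deterministic, commits exactly one cut (it detaches $X'$ as a component of the forest grown from $T'$, which is legitimate precisely because $T|X'=T'|X'$ makes $X'$ a valid component), and recurses on the single smaller tree-pair $(\widehat T,\widehat T')$ with the budget lowered by one, so $T(k)=T(k-1)+O(\mathrm{poly})$; the identity above together with the monotonicity of ``solve to depth $k$'' then gives that the returned value and the $\qfalse$-bookkeeping are correct. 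The inequality $d_\uMAF(T,T') \le d_\uMAF(\widehat T,\widehat T')+1$ is the easy half: for any agreement forest $\widehat F$ of $(\widehat T,\widehat T')$ the family $\widehat F\cup\{X'\}$ is an agreement forest of $(T,T')$ --- homeomorphy of each component transfers because restriction composes, disjointness transfers because passing from $T$ to $\widehat T$ merely prunes the pendant $X'$-subtree and suppresses the resulting degree-$2$ vertex, and $T[X']$ lies on the opposite side of the pendant edge of $X'$ from every $T[B]$ with $B\subseteq\ell(T)\setminus X'$ (symmetrically in $T'$).

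The substance is the reverse inequality $d_\uMAF(\widehat T,\widehat T') \le d_\uMAF(T,T')-1$. I would take an arbitrary MAF $F^*$ of $(T,T')$ and rearrange it into a MAF in which $X'$ is a component; deleting that component then gives an agreement forest of $(\widehat T,\widehat T')$ with $|F^*|-1$ components. The structural ingredient --- and the only place the hypothesis ``no common cherries'' enters --- is that $X'$ is \emph{not} a common pendant subtree \emph{including} rooting. Indeed, if it were, then for any $z\in\ell(T)\setminus X'$ the tree $S:=T|(X'\cup\{z\})=T'|(X'\cup\{z\})$ (well defined by pendancy) has at least three leaves and hence a cherry $\{a,b\}$ with $a,b\in X'$ (if $|X'|=2$ then $S$ is a $3$-leaf star with $\{a,b\}=X'$; otherwise $S$ has two disjoint cherries, at most one involving $z$); since $X'$ is pendant in $T$ and $\{a,b\}$ lies in the interior of $X'$'s subtree, $\{a,b\}$ is a cherry of $T$, and likewise of $T'$ --- a common cherry, contradiction. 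From this I extract exactly what I need: if all taxa of $X'$ lie in a single component $B$ of $F^*$ then $B=X'$, since $B\setminus X'\neq\emptyset$ would force $T[B]$ and $T'[B]$ to cross the pendant edge of $X'$, so $X'$ would appear inside $T|B=T'|B$ as a pendant subtree with a rooting (towards $B\setminus X'$) that is identical on the two sides.

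Given that, I define $F^{**}$ from $F^*$ by deleting every element of $X'$ from every component and inserting $X'$ as a new component, discarding any component that became empty. I would check $F^{**}$ is an agreement forest of $(T,T')$: homeomorphy of each $B\setminus X'$ follows by restricting $T|B=T'|B$; the shrunken components remain pairwise disjoint because their embeddings only shrink; and $T[X']$ is disjoint from each $T[B\setminus X']$ (opposite sides of the pendant edge), symmetrically in $T'$; and $X'$ is a component by construction. The remaining point is $|F^{**}|\le|F^*|$, equivalently that at least one component of $F^*$ lies inside $X'$: if $X'$ meets only one component this is the fact just proved ($B=X'$), and if $X'$ meets two or more components then at most one of them can carry taxa outside $X'$ (the one whose $T$-embedding uses the pendant edge of $X'$, since at most one component uses a given edge), so some other component is contained in $X'$. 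As $F^*$ is a MAF and $|F^{**}|\le|F^*|$, $F^{**}$ is itself a MAF with $X'$ a component, which finishes the argument.

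I expect the main obstacle to be the bookkeeping around vertex-disjointness of embeddings under the restrict-and-suppress map from $T$ to $\widehat T$ --- it is used in both inequalities and must be shown to both preserve and reflect disjointness --- and, relatedly, making precise why a component $B\supsetneq X'$ would force $X'$ to be common including rooting; the remainder is routine use of ``restriction composes'' and ``at most one component of an agreement forest uses a given edge''.
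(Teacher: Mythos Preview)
Your proof is correct and follows essentially the same route as the paper: both arguments hinge on showing that no component of an optimal agreement forest can strictly contain $X'$ (via the observation that the absence of common cherries forces $X'$ to fail the ``common including rooting'' condition), and then perform a swap to make $X'$ itself a component without increasing the size of the forest. You are more explicit than the paper in a few places---you spell out both directions of the identity $d_\uMAF(T,T') = d_\uMAF(\widehat T,\widehat T')+1$ and give a careful derivation of a common cherry from the ``including rooting'' hypothesis using $S=T|(X'\cup\{z\})$---but the underlying structure of the argument is the same.
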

\begin{proof}
    The recurrence is trivial, since we deterministically make one cut in $T'$. For safety, observe that
    since all common cherries have already been collapsed, and $|X'|>1$, the two pendant subtrees induced by $X'$ are \emph{not} common including rooting; informally, they are attached to $T'$ and $T$ at different locations. Hence, any agreement forest obtained from this point cannot have a component that is a strict superset of $X'$. Consider, then, a smallest possible agreement forest obtained from this point. 
    Suppose this agreement forest contains a component $B_1$ containing at least one taxon from $X'$ and at least one taxon from outside $X'$. Then there is at least one component $B_2$ that is a strict subset of $X'$. Replacing $B_1$, $B_2$ (and any other components that are strict subsets of $X'$) by $X'$, $(B_1\setminus X')$ yields an agreement forest of the same or smaller size, in which $X'$ is a component, from which safety follows.
\end{proof}

The next lemma shows that the non-applicability of CHEN for certain values of $t$, implies that some of the branches created by CHEN $t=3$ induce forests with overlapping components. This insight will play a crucial role later in the proof of Lemma \ref{ref:monster2}.

\begin{restatable}{lemma}{correctmonster}
\label{lem:monster}
Let $T$ and $T'$ be two unrooted trees on $X$ with no common cherries. Assume that none of CHEN $(t=2)$, $(t\geq 4)$ or $(t=3, \geq 2 \text{ singletons})$ apply. Let $\{a,b\}$ be a cherry in $T$ that allows a $t=3$ application of CHEN and let $\emptyset \subset X_1, X_2, X_3 \subset (X\setminus\{a,b\})$ be the three subsets of taxa
that are pendant on the path from $a$ to $b$ in $T'$ (the labelling of these three subsets is arbitrary). Due to non-applicability of CHEN $(t=3, \geq 2 \text{ singletons})$, at least two of $X_1, X_2, X_3$ will have cardinality 2 or larger. Without loss of generality let $X_1$ be a set with cardinality at least 2.
Suppose $T[ X_1 \cup \{a,b\}]$, $T[ X_2 ]$ and $T[ X_3]$ are mutually disjoint in $T$, and that $T|(X_1 \cup \{a,b\}) = T'|(X_1 \cup \{a,b\})$.  Then: \begin{itemize}
    \item $T[X_2]$ and $T[X_3]$ are both pendant in $T$, but $T[X_1 \cup \{a,b\}]$ is not,
    \item $T[ X_1 ]$, $T[ X_2 \cup \{a,b\} ]$ and $T[ X_3]$  are \underline{not} all mutually disjoint, and
    \item $T[ X_1 ]$, $T[ X_2 ]$ and $T[ X_3 \cup \{a,b\} ]$ are \underline{not} all mutually disjoint.
\end{itemize} 
In particular, at least two of the $T(k-2)$ branches of CHEN $(t=3)$ create forests of $T'$ that induce overlapping components in $T$.
\end{restatable}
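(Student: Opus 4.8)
The plan is to reason purely about the embedding structure in $T$, using the cherry $\{a,b\}$ together with the disjointness hypotheses and the non-applicability of the various CHEN branching rules. First I would set up notation: write $p$ for the common parent of $a$ and $b$ in $T$, and for each $i\in\{1,2,3\}$ write $S_i = T[X_i]$ for the embedding of $X_i$ in $T$, and $S_i^{ab} = T[X_i \cup \{a,b\}]$. The hypothesis is that $S_1^{ab}, S_2, S_3$ are mutually disjoint and that $X_1 \cup \{a,b\}$ is homeomorphic. I would begin by establishing the first bullet. Since $\{a,b\}$ is a cherry in $T$, the edge from $p$ towards the rest of $T$ carries all of $X_1 \cup X_2 \cup X_3$; because $S_1^{ab}$ is disjoint from $S_2$ and $S_3$, the subtree $S_1^{ab}$ must hang off this edge somewhere, and the path it uses to reach the bulk of $X_1$ separates $X_2$ from $X_3$ (otherwise we could not keep $S_2, S_3, S_1^{ab}$ simultaneously disjoint while all three sets sit on the $a$-to-$b$ path in $T'$ — here I will use that in $T'$ these three blocks are precisely the pendant subtrees hanging off the $a$–$b$ path). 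This forces $S_2$ and $S_3$ each to be pendant in $T$ (detachable by a single edge), while $S_1^{ab}$, which must stretch along the spine connecting the attachment points of $S_2$ and $S_3$, is genuinely non-pendant. I would also invoke non-applicability of CHEN $t=2$, $t\geq 4$, and the $\geq 2$-singletons refinement to pin down that $|X_1|,|X_2|\geq 2$ (say), which rules out the degenerate configurations where $S_1^{ab}$ could collapse to a pendant subtree.

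Next, the two ``not all mutually disjoint'' bullets. For the second bullet I argue by contradiction: suppose $S_1$, $S_2^{ab}$, $S_3$ were mutually disjoint in $T$. Since $X_2 \cup \{a,b\}$ would then need to be homeomorphic for this to correspond to a legitimate CHEN $t=3$ branch leading somewhere useful — actually more carefully, I would show that disjointness of $S_1, S_2^{ab}, S_3$ together with disjointness of $S_1^{ab}, S_2, S_3$ forces a common pendant subtree structure that triggers either a common cherry (excluded by hypothesis) or branching rule \ref{rule:subtree}. Concretely: if both $S_1^{ab}$ and $S_2^{ab}$ can be formed disjointly from the other two blocks, then in $T$ the taxa $\{a,b\}$ together with $X_1$ form a pendant subtree and also $\{a,b\}$ with $X_2$ form a pendant subtree; combined with the fact that in $T'$ both $X_1$ and $X_2$ are pendant off the $a$–$b$ path, one of $X_1, X_2, X_3$ (whichever is ``innermost'' on the $T'$ spine) induces a non-trivial common pendant subtree ignoring rooting in both $T$ and $T'$, so branching rule \ref{rule:subtree} applies — contradicting the standing assumption that we are in the case where no special-case rule fires. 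The third bullet is the mirror image with the roles of $X_2$ and $X_3$ swapped, and follows by the same argument (using the symmetry in the arbitrary labelling of $X_2, X_3$). Finally, the ``in particular'' clause: the three $T(k-2)$ branches of CHEN $t=3$ correspond exactly to the three partitions realising $S_1^{ab}$, $S_2^{ab}$, $S_3^{ab}$ respectively as the component containing $\{a,b\}$; by the first bullet the $S_1^{ab}$-branch is the only one with a chance of disjoint components, and by the second and third bullets the $S_2^{ab}$- and $S_3^{ab}$-branches both yield overlapping embeddings in $T$, hence at least two of the three branches create overlapping components, as claimed.

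The main obstacle I anticipate is the careful bookkeeping in the first bullet: precisely translating ``the three blocks are pendant off the $a$–$b$ path in $T'$'' together with the disjointness hypothesis in $T$ into the conclusion about which $S_i$ are pendant in $T$ and which is not. One has to be careful that $S_1^{ab}$ being non-pendant is genuinely forced and not just ``not obviously pendant'' — this requires ruling out the case where the attachment points of $S_2$ and $S_3$ on the $T$-spine coincide or are arranged so that $S_1^{ab}$ could still be pulled off by one edge, which is exactly where the cardinality bounds ($|X_i|\geq 2$ for two of the blocks, coming from non-applicability of CHEN $t\neq 3$ and the $\geq 2$-singletons refinement) do their work. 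A secondary subtlety is making the appeal to branching rule \ref{rule:subtree} airtight: I must verify that the common pendant subtree I extract has size strictly greater than $1$ and is pendant in \emph{both} $T$ and $T'$, rather than only one of them, so that the rule's hypotheses are genuinely met.
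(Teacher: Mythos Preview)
Your proposal misses the key mechanism of the paper's proof and in places relies on hypotheses you do not have.

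The paper's argument is short and hinges on one idea you never invoke: since $|X_1|\geq 2$ and $T'[X_1]$ is pendant in $T'$, there is a cherry $\{c,d\}$ of $T'$ with $c,d\in X_1$. The homeomorphism hypothesis $T|(X_1\cup\{a,b\}) = T'|(X_1\cup\{a,b\})$ then forces $\{c,d\}$ to be a cherry in $T|(X_1\cup\{a,b\})$ as well. But $\{c,d\}$ is not a common cherry in $T$, so something from $X_2\cup X_3$ must be pendant on the $c$--$d$ path in $T$; by the disjointness hypothesis these attachments are exactly the pendant subtrees $T[X_2]$ and $T[X_3]$. If only one of them were there, CHEN $(t=2)$ would apply to the cherry $\{c,d\}$ of $T'$ (recall CHEN may be applied symmetrically from either tree), contradicting the hypotheses. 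Hence both $T[X_2]$ and $T[X_3]$ hang off the $c$--$d$ path, which immediately gives the pendancy claims and, since $c,d\in X_1$ straddle the attachment points, forces $T[X_1]$ to overlap with both $T[X_2\cup\{a,b\}]$ and $T[X_3\cup\{a,b\}]$. Your proposal never picks such a cherry inside $X_1$ and never uses the homeomorphism hypothesis in this way, so it lacks the engine that actually drives all three bullets.

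Your argument for the second and third bullets has two concrete problems. First, you derive a contradiction by invoking branching rule~\ref{rule:subtree}, but non-applicability of that rule is \emph{not} among the hypotheses of this lemma (it is assumed only later, in Lemma~\ref{ref:monster2}); the lemma as stated must be proved without it. Second, your chain ``if both $S_1^{ab}$ and $S_2^{ab}$ can be formed disjointly from the other two blocks, then in $T$ the taxa $\{a,b\}$ together with $X_1$ form a pendant subtree'' is false: disjointness of $T[X_1\cup\{a,b\}], T[X_2], T[X_3]$ does not make $T[X_1\cup\{a,b\}]$ pendant --- indeed the first bullet of the very lemma you are proving asserts it is \emph{not} pendant. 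Your first-bullet sketch is also too vague to establish non-pendancy of $T[X_1\cup\{a,b\}]$; three mutually disjoint leaf-partitioning embeddings in a binary tree can all be pendant (attached at a single degree-$3$ node), so you need the cherry-in-$X_1$ argument above to rule this out.
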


\begin{figure}[t]
\centering
\includegraphics{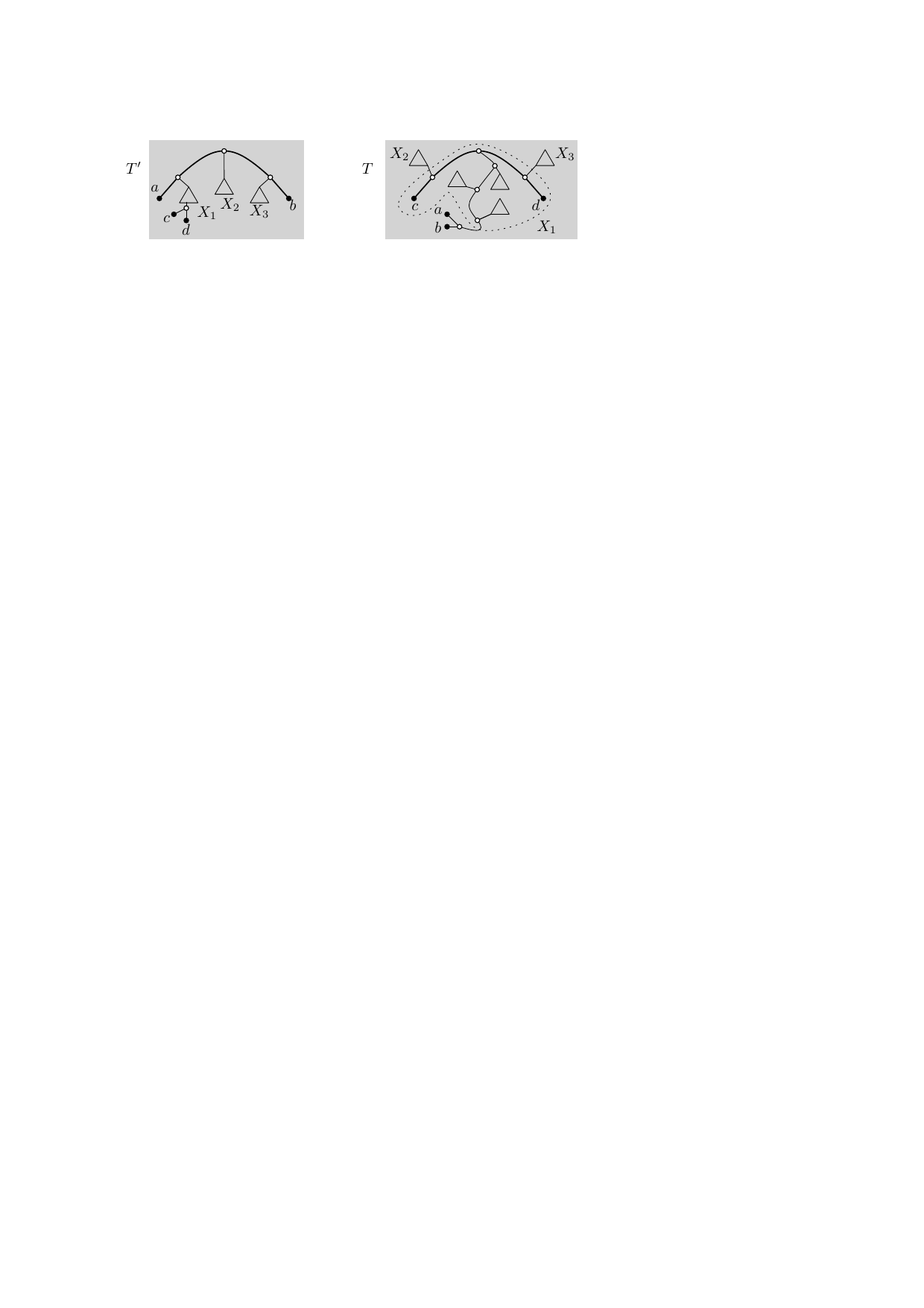}
\caption{The situation described in the proof of Lemma \ref{lem:monster}. Here $X_1$ is the leftmost of the three subtrees between $a$ and $b$ in $T'$, but this is not necessary. In $T$, $T[X_1]$ is shown in the dotted circle.}
\label{fig:overlapUnrooted}
\end{figure}

\begin{proof}
See Figure \ref{fig:overlapUnrooted}. Let $\{c,d\}$ be a cherry in $X_1$ in $T'$. Consider the embedding $T[ X_1 \cup \{a,b\}]$. 
The path from $c$ to $d$ on this embedding has at least three edges, because otherwise $\{c,d\}$ would be a common cherry. Due to the assumption that
$T|(X_1 \cup \{a,b\}) = T'|(X_1 \cup \{a,b\})$ and the mutual disjointness assumption,
the only way that this path can have at least three edges, is if at least one of $T[X_2]$ and $T[X_3]$ is pendant on this path. Suppose only one of them is pendant on the path. Then the path would have length exactly three, so CHEN $(t=2)$ would apply to cherry $\{c,d\}$ - recall that we are free to apply CHEN to $T$ or $T'$ - so we obtain a contradiction.
So both $T[X_2]$ and $T[X_3]$ are pendant on the path. Recalling that $c$ and $d$ are both in $X_1$, and that the path from $c$ to $d$ in $T$ intersects both the path from $\{a,b\}$ to $T[X_2]$, and the path from $\{a,b\}$ to $T[X_3]$, we have that $T[X_1]$ overlaps with $T[ X_2 \cup \{a,b\} ]$, and that $T[X_1]$ overlaps with $T[ X_3 \cup \{a,b\} ]$. 
\end{proof}

Next, we present another new branching rule which is primarily designed to deal with a tricky case we will later encounter in the proof of Lemma \ref{ref:monster2}.

\begin{branchrule}\label{rule:noncorrcomps}
Let $T$ and $T'$ be two unrooted trees on $X$ without common cherries. Assume that CHEN $(t=2)$ (i.e. the stronger version) does not apply. Suppose $X$ can be partitioned into three blocks $A, B, C$ 
satisfying the following
 (as in Figure \ref{fig:unrootedABC}):
\begin{enumerate}
\item In $T$, $T[A]$, $T[B]$ and $T[C]$ are mutually disjoint, and so are $T'[A], T'[B]$ and $T'[C]$ in $T'$.  
\item $T[B]$ is not pendant in $T$, hence $T[A]$ and $T[C]$ are.
\item $T'[A]$ is not pendant in $T'$, hence $T'[B]$ and $T'[C]$ are.
\item $T|B = T'|B$.
\end{enumerate}
Then branch on (1) cutting the edge between $T[A]$ and $T[B]$ in $T$, or (2) the edge between $T[B]$ and $T[C]$.

\end{branchrule}

\begin{figure}[h]
\centering
\includegraphics{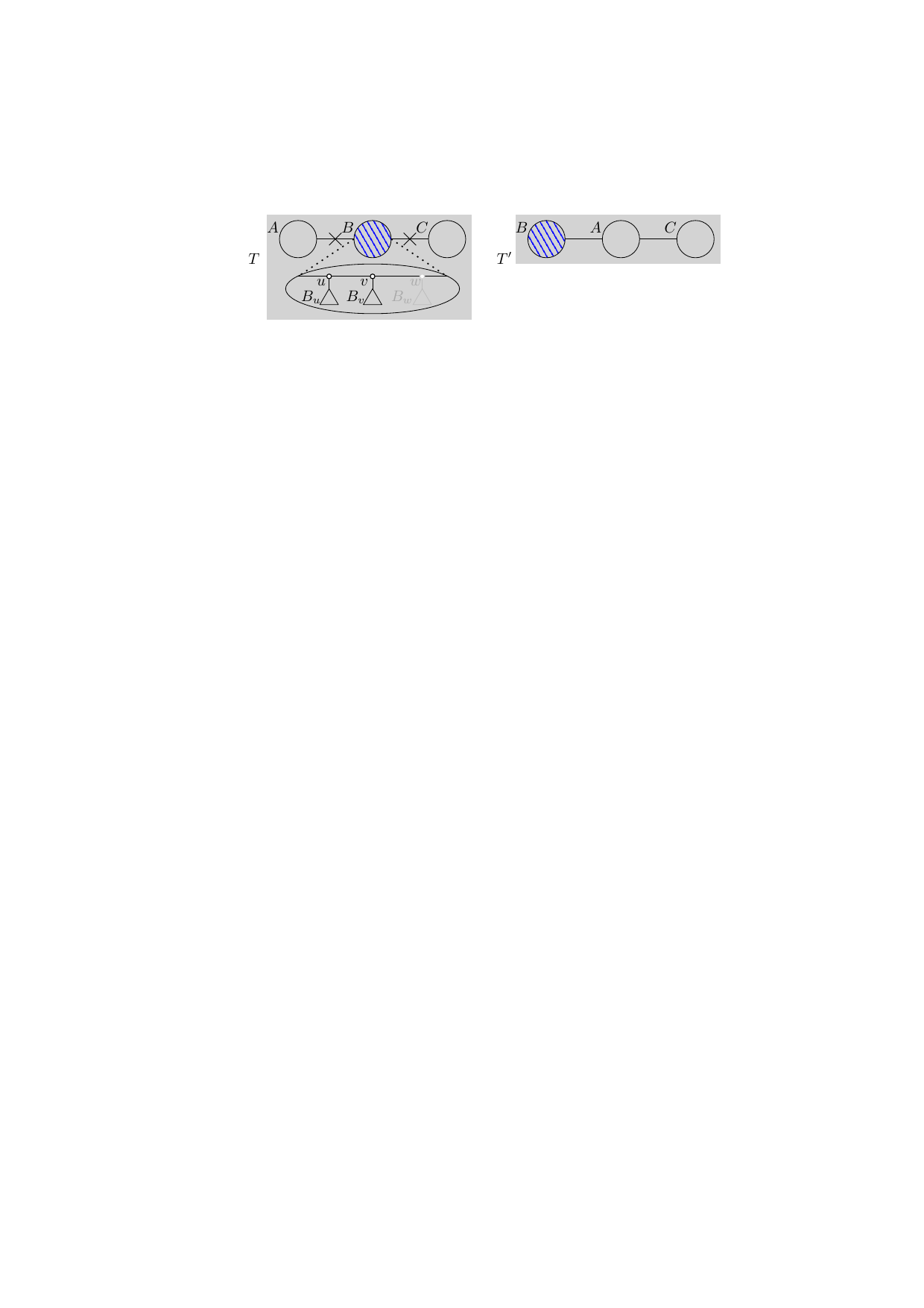}
\caption{Branching rule \ref{rule:noncorrcomps} states that if the depicted situation is encountered, and the strengthened version of CHEN $(t=2)$ does not apply, then at least one of the two crossed edges can be cut in $T$. Here $B$ is shown shaded to emphasize
that $T|B=T'|B$.}
\label{fig:unrootedABC}
\end{figure}

\begin{restatable}{lemma}{unrootedBRlemma}
\label{lem:ubrcorrect}
Branching rule \ref{rule:noncorrcomps} is safe and has recursion $T(k)=2T(k-1)$, yielding a branching factor of $2$.
\end{restatable}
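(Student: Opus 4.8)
The plan is to argue safety by taking an arbitrary agreement forest $F^{*}$ reachable from the current configuration using at most $k$ cuts, and showing that one of the two deterministic cuts offered by the rule is consistent with some agreement forest of the same (or smaller) size. The recurrence is immediate — each branch makes exactly one cut in $T$ and then recurses, giving $T(k)=2T(k-1)$ — so all the work is in safety. I would first observe that, since $T|B = T'|B$ and $T[B]$ is pendant in neither $T$ nor $T'$ is \emph{not} required, what matters is that $B$ is already ``homeomorphic'' and sits as a contiguous block between $A$ and $C$ in $T$ (with $T[A], T[C]$ pendant), and between... wait: by (2) and (3) the path structure is a caterpillar-of-three-blobs in each tree, but with the ``middle'' block different ($B$ in $T$, $A$ in $T'$). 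The key consequence I would extract is: in $T$ the only edges on the $A$–$C$ path not internal to a block are $e_{AB}$ (between $T[A]$ and $T[B]$) and $e_{BC}$ (between $T[B]$ and $T[C]$); cutting either one already makes the three blocks $T[A], T[B], T[C]$ pairwise disjoint in $T$ modulo one further (possibly needed) cut, and similarly $e_{AB}$ is the edge making things work on the $T'$ side where $A$ is central.

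The core of the argument: consider $F^{*}$ and look at which component(s) of $F^{*}$ use the two special edges $e_{AB}$ and $e_{BC}$ of $T$. If neither edge is used by any component of $F^{*}$ (more precisely, if the $T$-embedding of no component of $F^{*}$ passes through $e_{AB}$, and likewise for $e_{BC}$), then $F^{*}$ already refines a cut containing $e_{AB}$ (and $e_{BC}$), so in particular it is consistent with the branch that cuts $e_{AB}$, and we are done — the cut we make is already ``there''. If some component uses $e_{AB}$ in $T$, that component $B_1$ has taxa straddling $A$ and $B\cup C$; I would then argue, using that $B$ is homeomorphic and pendant-ish, that we may assume (by a local surgery on $F^{*}$, analogous to the swap argument in the proof of Lemma~\ref{lem:twosingletons} and Branching rule~\ref{rule:subtree}'s lemma) that no component straddles in this way — i.e. $F^{*}$ can be modified without increasing its size so that the $A$-block is separated off along $e_{AB}$, because otherwise CHEN $(t=2)$ would have applied. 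This is where the hypothesis ``CHEN $(t=2)$ does not apply'' gets used: a straddling component together with the fixed block $B$ would force a length-2 path between some cherry and its mate, contradicting non-applicability of the strengthened CHEN $(t=2)$. The symmetric case uses $e_{BC}$ and the branch that cuts it.

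Concretely the steps, in order, are: (i) dispatch the recurrence in one line; (ii) fix $F^{*}$ minimal reachable with $\le k$ cuts; (iii) case on whether some component of $F^{*}$ straddles $e_{AB}$ in $T$ or $e_{BC}$ in $T$; (iv) in the non-straddling case, note $F^{*}$ already refines the chosen cut; (v) in the straddling case, perform the component-swap surgery to produce an agreement forest of no greater size consistent with the corresponding branch, deriving a contradiction with non-applicability of CHEN $(t=2)$ if the surgery cannot be performed; (vi) conclude that one of the two branches is safe. The main obstacle I anticipate is step (v): carefully setting up the surgery so that the resulting partition still satisfies both agreement-forest conditions in \emph{both} trees simultaneously — in particular checking that disjointness in $T'$ is preserved, which is where conditions (3) and (4) (the role of $A$ as the central block in $T'$, and $T|B=T'|B$) must be invoked. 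I would handle this by explicitly writing the replaced components (something like: replace a straddling $B_1$ and the sub-blocks of $B$ it covers by $B$ itself together with $B_1 \setminus B$), and then verifying homeomorphy and disjointness block-by-block using the caterpillar-of-three-blocks structure.
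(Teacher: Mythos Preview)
Your high-level plan (case on whether some component straddles the special edges, then do a swap that introduces $B$ as a single component) matches the paper's approach, but there are two genuine gaps.

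First, you never isolate the case where \emph{two different} components straddle: one component $U$ containing taxa from $A$ and $B$ (but not $C$), and another $U'$ containing taxa from $B$ and $C$ (but not $A$). Your surgery does not apply here, because neither $U$ nor $U'$ forces $B$ to be split into two or more components that you could merge to pay for a cut. The paper handles this case not by surgery at all but by a direct contradiction: since in $T'$ the block $A$ is the \emph{middle} block (condition~(3)), any path in $T'$ from a $B$-taxon to a $C$-taxon must pass through $T'[A]$, so $T'[U]$ and $T'[U']$ necessarily overlap, contradicting that $F^{*}$ is an agreement forest. This is the place where condition~(3) is used essentially, and your proposal does not identify it.

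Second, your invocation of CHEN $(t=2)$ is in the wrong place and too broad. It is not the mechanism that rules out straddling in general; the surgery (cut the straddling component, reintroduce $B$ as one block) handles most straddling cases on its own, using only $T|B=T'|B$ and the pendancy of $T'[B]$. CHEN $(t=2)$ is needed only in one specific subcase: when a single component $U$ hits $A$, $B$ and $C$, and the path from an $A$-taxon to a $C$-taxon through $T[B]$ uses only a single edge of $T[B]$ (so $B=B_u\cup B_v$ with no third piece $B_w$). There the surgery saves only one cut but costs two, and the paper instead derives a contradiction by finding a cherry inside $B_u$ (or the pair $B_u,B_v$ itself when both are singletons) that either is a common cherry or sits at distance exactly three in the other tree, triggering CHEN $(t=2)$. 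Your proposed surgery ``replace $B_1$ and the sub-blocks of $B$ it covers by $B$ together with $B_1\setminus B$'' also silently assumes $B_1\setminus B$ lies entirely in $A$ or entirely in $C$; when $B_1$ meets both $A$ and $C$ this is two pieces, not one, and the bookkeeping must be redone (the paper splits this into the $A$-$C$-only case and the $A$-$B$-$C$ case for exactly this reason).
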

\begin{proof}
The branching factor is at most 2 because we delete one of two edges.

We now prove safeness. Due to non-pendancy, $|A|, |B| \geq 2$. We cannot make any cardinality assumptions on $|C|$, but we do not need them. Observe that in $T$, any path $P$ from a taxon in $A$ to a taxon in $C$ must pass through at least one edge $e=\{u,v\}$ of $T[B]$; this is due to the non-pendancy of $T[B]$ in $T$. The edge leaving $u$ that does not lie on $P$, must therefore feed into a pendant subtree of $T$ containing taxa $\emptyset \subset B_u \subset B$. Define $B_v$ similarly. If $P$ intersects with two or more edges of $T[B]$, then there exists a subpath of $P$, $u-v-w$ where $u,v,w$ all lie on $T[B]$. In this case we define $B_w$ similarly to $B_u, B_v$. However, $B_w$ will not always exist (in which case $B=B_u \cup B_v$).
See Figure \ref{fig:unrootedABC}.

Now, consider any smallest agreement forest $F$ reached from this point. If the edge between $T[A]$ and $T[B]$ was cut to reach this forest, or the edge between $T[B]$ and $T[C]$, we are in branch (1) or (2) and are done, so assume that neither holds. This means that some component $U$ in the forest must contain a taxon from $A$ and a taxon from $B \cup C$, and some component $U'$ must contain a taxon from $C$ and a taxon from $A \cup B$; possibly $U=U'$. We distinguish several cases. In each case we show that an agreement forest exists, no larger than $F$, in which at least one of branch (1), (2) holds, or we derive a contradiction.

Suppose $U$ intersects $A$ and $C$, but not $B$. (In this case $U=U'$). Then $B_u$ and $B_v$ must have been cut off to reach $F$. We make a single cut to $U$ to separate its $A$ part from its $C$ part, and then compensate for this extra cut by introducing $B$ as a single component i.e. merging the two or
more components that are subsets of $B$, into a single component $B$. There are at least two such
components, because $B_u$ and $B_v$ had originally been cut off, so this saves at least one cut. 
Note that we can introduce $B$ as a component because by assumption $T|B = T'|B$. Both branch (1)  and branch (2) now apply, and we are done.

If $U$ intersects $A$, $B$ and $C$ (which again means that $U=U'$) then $U$ cannot intersect with more than one of $B_u$ and $B_v$ (and if it exists, $B_w$), because then $T[U] \neq T'[U]$. If $B_w$ exists, then due to the fact that $U$ cannot intersect with more than one of $B_u, B_v, B_w$, we can cut off the $A$ and $C$ parts of $U$, which costs two cuts, and then introduce $B$ as a single component, saving two cuts, and we are done, as again branch (1) and branch (2) both now apply. Therefore, assume $B_w$ does not exist. 
 Observe that it is not possible for both $B_u$ and $B_v$ to each contain two or more taxa. This is because, if they did, they would each contain a cherry, and at least one of these  would also be a cherry in $T'$ (because $T'[B]$ is pendant), contradicting the assumption that all common cherries have been reduced. Let $x \in B_u$ and $y \in B_v$. If $|B_u| = |B_v|=1$, then $|B|=2$ and $\{x,y\}$ is a cherry in $T'$, because $T'[B]$ is pendant. But $x$ and $y$ are separated by exactly three edges in $T$, so CHEN $(t=2)$ would have held; contradiction. Hence, assume without loss of generality that $|B_u| \geq 2$. Then $B_u$ must, in $T$, contain some cherry $\{x,z\}$. If $\{x,z\}$ is a cherry in $T'$, we obtain a contradiction to the assumption that the cherry reduction does not apply. Given that $T|B = T'|B$, the only way for $\{x,z\}$ not to be a cherry in $T'$, is if it is ``rooted'' on the edge feeding into $x$, or the edge feeding into $z$. Either way, CHEN $(t=2)$ can then be applied to $\{x,z\}$, because $x$ and $z$ have exactly three edges between them in $T'$, and we obtain a contradiction.

The only remaining case is that $U$ and $U'$ are distinct, i.e. $U$ intersects with $A$ and $B$ but not $C$, and $U'$ intersects with $B$ and $C$ but not $A$. However, observe that then $T'[U]$ and $T'[U']$ are not disjoint (because $B$ lies ``inbetween'' $A$ and $C$ in $T$, but in $T'$, $A$ lies inbetween $B$ and $C$). So this case cannot occur.
\end{proof}

\begin{figure}
  \caption{\textsc{Algorithm uMAF.} 
\begin{enumerate}
\item Apply tidying-up operations %
\item\label{it:uprerules} Apply the first among the following branching rules to be applicable:
\begin{enumerate}
    \item Branching rule \ref{rule:subtree} (1)
    \item Branching rule \ref{rule:noncorrcomps} (2)
    \item CHEN $(t=2)$ (strong version with branching factor 2, see Observation \ref{obs:t2})
    \item CHEN $(t \geq 4)$ (2.595)
    \item CHEN $(t = 3, \geq 2 \text{ singletons})$ (2.73)
    \item CHEN $(t=3)$ (3)
    \end{enumerate}
    \item Apply Branching rule \ref{rule:overlap} \SK{(SPLIT)} to exhaustion (2)
    \item Apply the recursion rule \SK{(i.e. decomposition)} with $t=0$
\end{enumerate}
}
\label{fig:umaf}
\end{figure}

We are now ready to describe the overall algorithm for unrooted trees: see \textsc{Algorithm uMAF} in Figure \ref{fig:umaf}. Values in brackets denote the branching factor of each branching rule.

Observe that all of these branching rules have branching factors below our target of $\magicunrootednumber$ apart from CHEN $(t=3)$, which has branching factor 3.  Thus, it remains to show that if CHEN $(t=3)$ is reached then either subsequent applications of branching rule \ref{rule:overlap}, or breaking into many non-homeomorphic components and applying Lemma \ref{ref:easydecompose}, will result in an overall branching factor at most~$\magicunrootednumber$.

\begin{lemma}
\label{ref:monster2}
Let $T$ and $T'$ be two unrooted trees on $X$ with no common cherries. Assume that none of branching rule \ref{rule:subtree}, branching rule \ref{rule:noncorrcomps}, CHEN $t=2$, $t\geq 4$ or $(t=3, \geq 2 \text{ singletons})$ apply.  Then applying CHEN $(t=3)$ followed by branching rule \ref{rule:overlap} to exhaustion, followed by applying the recursion rule with $t=0$, has overall branching factor at most~$\magicunrootednumber$.

\end{lemma}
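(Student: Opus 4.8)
The plan is to begin from the CHEN $(t=3)$ recurrence $2T(k-1)+3T(k-2)$ and improve the three ``two-cut'' branches. Write $X_1,X_2,X_3$ for the taxa of the three pendant subtrees hanging off the path between $a$ and $b$ in the tree to which CHEN is applied (we are free to use $T$ or $T'$), so that the two-cut branch indexed by $i$ produces the three-component forest $\{\{a,b\}\cup X_i, X_j, X_k\}$ with $\{i,j,k\}=\{1,2,3\}$. I would classify each such branch according to what happens in step~3 (branching rule~\ref{rule:overlap} to exhaustion) and step~4 (the recursion rule with $t=0$). If the embeddings $T[\{a,b\}\cup X_i]$, $T[X_j]$, $T[X_k]$ are not mutually disjoint in $T$, then branching rule~\ref{rule:overlap} applies and, having branching factor $2$, yields cost at most $2T(k-3)$ for this branch; call it an \emph{overlapping} branch. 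If they are mutually disjoint, the recursion rule with $t=0$ applies: the branch costs only polynomial time if all three components are homeomorphic, at most $2T(k-3)$ by Lemma~\ref{ref:easydecompose} if at least two of them are non-homeomorphic, and exactly $T(k-2)$ if exactly one of them is non-homeomorphic. Hence the only branches that can cost more than $2T(k-3)$ are the \emph{bad} ones: disjoint, with exactly one non-homeomorphic component.

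The key claim is then that \emph{at most one} of the three branches is bad, and that whenever one branch is bad the remaining two are overlapping. Granting this, the worst-case recurrence is $2T(k-1)+T(k-2)+4T(k-3)$ (one bad branch plus two overlapping branches), while if no branch is bad it is $2T(k-1)+6T(k-3)$; a routine root-finding check shows that both have branching factor at most $\magicunrootednumber$, the former being the tight case that determines the constant.

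To prove the claim I would fix a bad branch $i$ and first exclude two possibilities. The tool is the elementary observation that three pairwise vertex-disjoint subtrees of $T$ whose leaf sets partition $X$ are arranged inside $T$ in exactly one of two ways: either all three are pendant in $T$ (meeting a common degree-$3$ vertex), or exactly one is non-pendant and lies ``between'' the other two, which are then pendant. I would combine this with three standing facts: (i) at most one of $X_1,X_2,X_3$ is a singleton, since otherwise CHEN $(t=3,\geq 2\text{ singletons})$ would apply; (ii) $T'[\{a,b\}\cup X_i]$ is never pendant in $T'$, because two other subtrees hang off the $a$--$b$ path there; and (iii) a non-trivial component that is homeomorphic and pendant in \emph{both} $T$ and $T'$ would trigger branching rule~\ref{rule:subtree}, whereas a suitable ``middle component plus two pendant ends'' configuration would trigger branching rule~\ref{rule:noncorrcomps} (with $B$ the component non-pendant in $T$ and $A$ the component non-pendant in $T'$). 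Running through the handful of possible arrangements, I would show that if the unique non-homeomorphic component were a side component $X_j$ or $X_k$, or if $\{a,b\}\cup X_i$ were non-homeomorphic, then branching rule~\ref{rule:subtree} or branching rule~\ref{rule:noncorrcomps} would already have applied; since both have strictly higher priority than CHEN $(t=3)$ in \textsc{Algorithm uMAF}, this contradicts the hypothesis that we reached CHEN $(t=3)$. Therefore the bad branch $i$ has $\{a,b\}\cup X_i$ homeomorphic, has $T[\{a,b\}\cup X_i], T[X_j], T[X_k]$ mutually disjoint in $T$, and has $|X_i|\geq 2$ --- exactly the hypotheses of Lemma~\ref{lem:monster} with its $X_1$ taken to be $X_i$. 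Lemma~\ref{lem:monster} then says that in branch $j$ the embeddings $T[X_i], T[\{a,b\}\cup X_j], T[X_k]$ are not all mutually disjoint, and likewise for branch $k$; hence branches $j$ and $k$ are overlapping, which proves the claim and the lemma.

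The main obstacle I anticipate is exactly the structural case analysis of the third paragraph: correctly enumerating the star/path arrangements of the three disjoint embeddings in $T$, tracking which component is pendant in $T$ versus in $T'$, dealing with the components of size one or two, and in each configuration exhibiting the precise instance of branching rule~\ref{rule:subtree} or~\ref{rule:noncorrcomps} (with the right choice of $A,B,C$) that fires. The delicate sub-case is the one in which a side subtree is a single taxon while the non-homeomorphic component is a larger side subtree: there branching rule~\ref{rule:subtree} does not help, and only branching rule~\ref{rule:noncorrcomps} --- using the fact that the size-one component is pendant in both trees and the mixed component is non-pendant in $T'$ --- rescues the argument.
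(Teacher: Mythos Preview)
Your overall strategy matches the paper's: reduce the three $T(k-2)$ branches of CHEN $(t=3)$ to at most one expensive branch (cost $T(k-2)$) and two cheap ones (cost $\leq 2T(k-3)$ each), yielding the recurrence $2T(k-1)+T(k-2)+4T(k-3)$ with root $\magicunrootednumber$. The tools you invoke --- Lemma~\ref{lem:monster}, Lemma~\ref{ref:easydecompose}, and the priority rules~\ref{rule:subtree} and~\ref{rule:noncorrcomps} --- are also the right ones.

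However, the case analysis in your third paragraph is logically broken. You propose to exclude the two possibilities ``the unique non-homeomorphic component is a side component $X_j$ or $X_k$'' and ``$\{a,b\}\cup X_i$ is non-homeomorphic''. But in a bad branch (disjoint, \emph{exactly one} non-homeomorphic component) these two possibilities are complementary: the one non-homeomorphic component is either $\{a,b\}\cup X_i$ or one of the sides. Excluding both therefore eliminates \emph{every} bad branch, and your subsequent invocation of Lemma~\ref{lem:monster} becomes vacuous. What you actually need to exclude is only the second possibility together with the case $|X_i|=1$; both of these \emph{can} be ruled out via rules~\ref{rule:subtree} and~\ref{rule:noncorrcomps}, leaving precisely the configuration (disjoint, $\{a,b\}\cup X_i$ homeomorphic, $|X_i|\geq 2$) to which Lemma~\ref{lem:monster} applies. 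In particular the case ``non-homeomorphic component is a side subtree'' is \emph{not} to be excluded --- it is the surviving bad case. Your ``delicate sub-case'' paragraph describes exactly such a configuration, so your plan to dispatch it with rule~\ref{rule:noncorrcomps} would fail.

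The paper organises the same ingredients differently and thereby sidesteps the need to exclude $|X_i|=1$. It first asks whether \emph{some} $X_i$ with $|X_i|\geq 2$ already satisfies the hypotheses of Lemma~\ref{lem:monster}; if so, the other two branches are overlapping and we are done. If not, then every branch with $|X_i|\geq 2$ is either overlapping or is disjoint with $\{a,b\}\cup X_i$ non-homeomorphic; in the latter case the paper uses rules~\ref{rule:subtree} and~\ref{rule:noncorrcomps} to show that a \emph{second} component is also non-homeomorphic (this is where the rule~\ref{rule:noncorrcomps} argument with a singleton side actually lives), so Lemma~\ref{ref:easydecompose} bounds the cost by $2T(k-3)$. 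The at-most-one singleton branch is then simply charged $T(k-2)$ without further analysis.
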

\begin{proof}
Let $\{a,b\}$ be a cherry in $T$ that allows a $t=3$ application of CHEN and let $\emptyset \subset X_1, X_2, X_3 \subset (X\setminus\{a,b\})$ be the three subsets of taxa that are pendant on the path from $a$ to $b$ in $T'$ (the labelling of these three subsets is arbitrary). Due to non-applicability of CHEN $(t=3, \geq 2 \text{ singletons})$, at least two of $X_1, X_2, X_3$ will have cardinality 2 or larger. Moreover, due to non-applicability of branching rule \ref{rule:subtree}, no $X_i$ with $|X_i|\geq 2$ can have the property that $T[X_i]$ is pendant in $T$, $T'[X_i]$ is pendant in $T'$, and $T|X_i = T'|X_i$. 

We analyse the running time of the CHEN $t=3$ invocation. As usual, we incur a $2T(k-1)$ term, stemming from cutting off $a$, or cutting off $b$. We do not attempt to optimize this. Rather, consider the three $T(k-2)$ instances that CHEN $(t=3)$ usually creates. Suppose some $X_i$, with $|X_i|\geq 2$, say $X_1$, induces the situation described in the statement of Lemma \ref{lem:monster}. Then by Lemma \ref{lem:monster} the two $T(k-2)$ branches corresponding to $X_2$ joining with $\{a,b\}$, and $X_3$ joining with $\{a,b\}$, are non-disjoint, so branching rule \ref{rule:overlap} can be applied to each of them. Recalling that branching rule \ref{rule:overlap} has a recursion $2T(k-1)$, this means that each of the $T(k-2)$ branches needs time at most $2T((k-2)-1) = 2T(k-3)$. Hence, the total time for both these branches is at most $4T(k-3)$. We simply spend time $T(k-2)$ on the instance in which $X_1$ joins with $\{a,b\}$, and we obtain a recurrence $2T(k-1) + T(k-2) + 4T(k-3)$, which solves to the required bound of $\magicunrootednumber$.

Assume, then, that none of the sets $X_i$ with cardinality 2 or higher, induce the situation described in Lemma \ref{lem:monster}. Without loss of generality let $X_1$ and $X_2$ be sets with cardinality 2 or higher; we allow the possibility that $X_3$ has cardinality 1, and return to this later.

The only way for (say) the $X_1 \cup \{a,b\}$ subinstance not to trigger Lemma~\ref{lem:monster}, is if (a) the $T[X_1 \cup \{a,b\}], T[X_2], T[X_3]$ subinstances overlap somewhere in $T$ (i.e. are not disjoint), or (b) they \emph{are} disjoint, but $T|(X_1 \cup \{a,b\}) \neq T'|(X_1 \cup \{a,b\})$. In situation (a), branching rule \ref{rule:overlap} applies, meaning that time at most $2T((k-2)-1) = 2T(k-3)$ is required for this branch. The latter case, situation (b), is more subtle. We know that $T|(X_1 \cup \{a,b\}) \neq T'|(X_1 \cup \{a,b\})$, so if we can additionally show that at least one of $T|X_2 \neq T'|X_2$, $T|X_3 \neq T'|X_3$ is true, the decomposition procedure from Lemma~\ref{ref:easydecompose} will kick in, meaning that  time at most $2T((k-2)-1)=2T(k-3)$ is needed for this branch. Noting that at least one of $T[X_2], T[X_3]$ is pendant in $T$ (due to disjointness), and that branching rule \ref{rule:subtree} was not applicable, the conclusion that at least one of $T|X_2 \neq T'|X_2$, $T|X_3 \neq T'|X_3$ holds is immediate if $|X_3| \geq 2$ (because then all three of $X_1, X_2, X_3$ contain two or more leaves). Hence, we can assume $|X_3|=1$. If $T|X_2 \neq T'|X_2$ we also have our desired two non-homeomorphic subinstances, and are done, so assume that $T|X_2 = T'|X_2$. Moreover, to avoid triggering branching rule \ref{rule:subtree} on $X_2$, $T[X_2]$ is \emph{not} pendant in $T$. This means that $T[X_1 \cup \{a,b\}]$ and (vacuously) $T[X_3]$ are pendant in $T$, that $T[X_1 \cup \{a,b\}]$ is incident to $T[X_2]$, and $T[X_3]$ is incident to $T[X_2]$. By construction, $T'[X_1 \cup \{a,b\}]$ is not pendant in $T'$, but $T'[X_2]$ and $T'[X_3]$ are. This means that branching rule \ref{rule:noncorrcomps} could have been applied, by setting in the statement of that branching rule $A, B, C$ to $X_1 \cup \{a,b\}, X_2, X_3$ respectively, contradiction. Hence, $T|X_2 \neq T'|X_2$ and we indeed have our second non-homeomorphic subinstance. Putting it all together: this means that situation (b) immediately triggers the decomposition procedure inherent in Lemma~\ref{ref:easydecompose}, which has a branching factor of 2, yielding just like situation (a) a branching factor of $2T(k-3)$. 
Thus, this branch takes time at most $2T(k-3)$.

The analysis for $X_2 \cup \{a,b\}$, is entirely symmetrical to that of $X_1 \cup \{a,b\}$.

Hence, the branches corresponding to $X_1 \cup \{a,b\}$ and $X_2 \cup \{a,b\}$ each need at most time $2T(k-3)$, and thus at most $4T(k-3)$ in total. If $|X_3| \geq 2$, then the $X_3 \cup \{a,b\}$ branch will also require time at most $2T(k-3)$ (because the same analysis as above holds here too). If $|X_3|=1$, we can pessimistically use running time at most $T(k-2)$ for the $X_3 \cup \{a,b\}$ branch. Noting that $T(k-2)$ dominates over $2T(k-3)$, we obtain a total recurrence of at most $2T(k-1) + T(k-2) + 4T(k-3)$, which again solves to the required $\magicunrootednumber$.
\end{proof}

\begin{theorem}
\label{thm:stevenumaf}
\textsc{Algorithm uMAF} solves the problem uMAF in time $O^{*}({\magicunrootednumber}^k)$.
\end{theorem}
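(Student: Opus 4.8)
The plan is to establish two things about \textsc{Algorithm uMAF}: that it correctly answers the hybrid decision--optimisation query of Section~\ref{sec:splitordecompose} (so that, by the reduction in Section~\ref{sec:existing}, it decides whether $d_\uMAF(T,T')\le k+1$; running it with iterative deepening on the budget then computes $d_\uMAF(T,T')$ within the same time bound), and that it runs in time $O^{*}(\magicunrootednumber^{k})$. Both parts go by induction on a well-founded measure rather than on $k$ alone; the natural choice is $\mu(T,F',k)=|X'|+k$, where $X'$ is the current taxon set. A measure more refined than $k$ is needed because the recursion rule, in the case where all but one component is homeomorphic, recurses at the \emph{same} budget $k$, but on a strictly smaller instance.

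For correctness, the base case is that $(T,F')$ is already an agreement forest, where the algorithm returns $0$ (and for $k=0$ it returns $0$ if $(T,F')$ is an agreement forest and $\qfalse$ otherwise). For the inductive step, I would first note that the tidying-up operations change neither the minimum number of further cuts needed nor whether it is at most $k$. Then: exactly one rule is applied at Step~\ref{it:uprerules}, or none if none is applicable. Each such rule is safe --- for the CHEN variants by the unrooted analysis of~\cite{ChenFS15} together with Observation~\ref{obs:t2} and Lemma~\ref{lem:twosingletons}, and for branching rules~\ref{rule:subtree} and~\ref{rule:noncorrcomps} by their safety lemmas --- so the answer on $(T,F')$ is the minimum over the branches of the number of cuts made in the branch plus the recursively computed answer on the resulting sub-instance, with the usual $\qfalse$ conventions. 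If no Step-\ref{it:uprerules} rule fires and $(T,F')$ is not yet an agreement forest, then either its components overlap in $T$ --- so branching rule~\ref{rule:overlap} (Step~3) applies and is safe --- or $(T,F')$ is disjoint and has a non-homeomorphic component, so the recursion rule (Step~4) recurses on that component. Here one uses that a disjoint instance all of whose components are homeomorphic \emph{is} an agreement forest, that branching rule~\ref{rule:overlap} generalises DIFFERENT COMPONENTS, and that a non-agreement instance with only one component has a same-component cherry in $T$ and so triggers a CHEN rule --- so Step~4 is reached only with at least two components, keeping the recursion rule applicable, sound (by the remark following its statement), and $\mu$-decreasing. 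The inductive hypothesis on the strictly smaller sub-instances then yields correctness.

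For the running time, the work at each node of the recursion tree is polynomial in $n$: tidying, testing applicability of the Step-\ref{it:uprerules} rules, finding a common edge and building a splitting core for branching rule~\ref{rule:overlap} via the inductive construction of Lemma~\ref{lem:cutweights}, and solving each component to depth $0$ at step~\ref{it:recstep1} of the recursion rule. So it suffices to bound the number of nodes of the recursion tree by $O^{*}(\magicunrootednumber^{k})$, which I would do by induction on $\mu$, showing that the budgets $k-c_1,\dots,k-c_m$ of the recursive calls spawned by one invocation satisfy $\sum_j\magicunrootednumber^{-c_j}\le 1$. If CHEN~$(t=3)$ is applied at Step~\ref{it:uprerules}, this is exactly Lemma~\ref{ref:monster2}, which bounds the combined branching factor of Steps~2--4 by $\magicunrootednumber$. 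Otherwise the applied rule (or branching rule~\ref{rule:overlap} at Step~3, if no Step-\ref{it:uprerules} rule fires) has branching factor at most $2.73<\magicunrootednumber$; and since branching rule~\ref{rule:overlap} and the recursion rule each have branching factor at most $2$ (the latter by Lemma~\ref{ref:easydecompose}, with the ``all but one homeomorphic'' case recursing on a single strictly smaller sub-instance at the same budget), further subdividing each branch by Steps~3 and~4 cannot raise the effective branching factor: replacing a branch of depth $c_j$ by sub-branches of depths $c_j+d_i$ with $\sum_i 2^{-d_i}\le 1$ keeps $\sum_{j,i}\gamma^{-(c_j+d_i)}\le 1$ at $\gamma=2.73$, since $\gamma\ge 2$, so the critical base remains at most $2.73$. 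In every case $\sum_j\magicunrootednumber^{-c_j}\le 1$, so the recursion tree has $O^{*}(\magicunrootednumber^{k})$ nodes and the algorithm runs in time $O^{*}(\magicunrootednumber^{k})$.

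The main obstacle has essentially already been discharged: it is Lemma~\ref{ref:monster2} (and, feeding into it, Observation~\ref{obs:t2} and Lemmas~\ref{lem:twosingletons},~\ref{lem:monster},~\ref{lem:ubrcorrect}), which handle exactly the one recurrence whose branching factor sits above $2.73$. Relative to those, proving the theorem is bookkeeping; the two points requiring a little care are checking that every non-agreement instance triggers some rule (so the recursion never deadlocks) and choosing a well-founded measure that absorbs the ``all but one component homeomorphic'' branch of the recursion rule, which does not decrease~$k$.
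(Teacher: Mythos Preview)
Your proposal is correct and follows the same approach the paper (implicitly) takes: the paper states the theorem without proof, treating it as immediate from the safety lemmas and the branching-factor bounds culminating in Lemma~\ref{ref:monster2}. Your write-up supplies the bookkeeping the paper omits---in particular the well-founded measure $\mu=|X'|+k$ to absorb the single-non-homeomorphic-component branch of the recursion rule (which does not decrease $k$), and the observation that after tidying a non-trivial tree--tree instance always has a same-component cherry in $T$, so some CHEN rule fires at Step~\ref{it:uprerules} and Step~4 is only ever reached with $\ge 2$ components---both of which are genuine points the paper glosses over.
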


Finally, before moving on to rooted trees, we note that our split-or-decompose technique can be used to improve the existing $O^{*}(2.49^k)$ algorithm for two unrooted \emph{caterpillars}.
\begin{restatable}{lemma}{catlemma}
\label{lem:ruben}
By strengthening the algorithm from \cite{kelk2024agreement} with branching rule \ref{rule:overlap} we obtain an algorithm with running time $O^{*}(2.4634^k)$ for computing uMAF on two unrooted caterpillars.
\end{restatable}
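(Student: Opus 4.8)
The plan is to run the branching algorithm of \cite{kelk2024agreement} essentially unchanged, but to insert branching rule \ref{rule:overlap}, applied to exhaustion, together with the recursion rule of Section \ref{subsec:decompose} (with parameter $t=0$), immediately before the point at which that algorithm would resort to its bottleneck branching rule; as in Section \ref{sec:splitordecompose} we reframe it as a hybrid decision-optimization procedure so that the recursion rule can use the number of cuts needed by one sub-instance to bound the depth of another. This is legitimate because the class of caterpillar instances is closed under the operations we perform: the restriction of a caterpillar to a subset of its taxa is again a caterpillar, and cutting edges in a forest of caterpillars again yields a forest of caterpillars. Hence every sub-instance produced by branching rule \ref{rule:overlap} or by the decomposition step is a pair of caterpillars, on which the algorithm of \cite{kelk2024agreement} can be invoked recursively, and the only thing that changes is the worst-case recurrence.

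First I would pin down the bottleneck of the \cite{kelk2024agreement} algorithm: every one of its branching rules has branching factor strictly below $2.4634$ except one, whose recurrence realises the claimed $2.49$ bound; as with CHEN, this rule --- applied to a cherry $\{a,b\}$ lying in a common component of $F'$ --- spawns a constant number of branches in which \emph{two} cuts are made, i.e.\ contributes several $T(k-2)$ terms. The heart of the argument is then a caterpillar-specialised analogue of Lemma \ref{lem:monster} and Lemma \ref{ref:monster2}: I would show that in the configuration triggering this rule, each offending two-cut branch falls into one of two cases. Either the subtrees into which the two caterpillars have been cut have \emph{overlapping} embeddings in the other caterpillar, so that branching rule \ref{rule:overlap} applies and, having branching factor $2$, replaces that $T(k-2)$ term by $2T(k-3)$; or the resulting instance is disjoint but has at least two non-homeomorphic components, so that Lemma \ref{ref:easydecompose} (with $t=0$) again replaces the $T(k-2)$ term by $2T(k-3)$. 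Using the freedom to apply the rule to a cherry of $T$ or of $T'$, and exploiting the rigidity of caterpillars, this dichotomy should reduce to a short finite case analysis on the lengths and shapes of the pendant pieces hanging off the path between $a$ and $b$.

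Once enough of the $T(k-2)$ terms have been improved to $2T(k-3)$ in this way, the bottleneck recurrence becomes one whose dominant real root is $2.4634$, and computing this root as the largest root of the associated characteristic polynomial is routine. It then remains to check that inserting branching rule \ref{rule:overlap} at the chosen place disturbs neither the correctness nor the branching factors of the other rules of \cite{kelk2024agreement}: that rule is invoked only when two components genuinely overlap (a situation those rules do not presuppose to be absent), and it is safe as shown earlier, so the modified algorithm is correct and every rule in it has branching factor at most $2.4634$. Hence it runs in $O^{*}(2.4634^k)$.

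The main obstacle is precisely the structural dichotomy above: ruling out the dangerous case in which the newly created components are pairwise disjoint in the other tree and all but one of them are homeomorphic. As in the proof of Lemma \ref{ref:monster2}, some of these dangerous configurations may not be excludable outright, and will instead have to be absorbed into an auxiliary low-branching-factor rule in the style of branching rule \ref{rule:noncorrcomps}, or shown to trigger an existing rule of \cite{kelk2024agreement}; pinning down the right such rule and verifying that it is safe is where the real work lies, although the constrained shape of caterpillars should keep the number of sub-cases small and each of them elementary.
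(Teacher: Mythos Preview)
Your high-level plan is in the right spirit, but it misidentifies the bottleneck and deploys more machinery than is actually needed, so as written it would not go through cleanly.

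First, the bottleneck of the algorithm in \cite{kelk2024agreement} is a CHEN $(t=4)$ configuration: the cherry $\{a,b\}$ in $T$ has exactly two singleton taxa $c,d$ between $a$ and $b$ in $T'$, with non-empty pendant pieces $L,R$ on either side. The standard recurrence is $2T(k-1)+4T(k-3)$, improved in \cite{kelk2024agreement} to $2T(k-1)+3T(k-3)$ by dropping one of the $\{a,b,c\}$/$\{a,b,d\}$ branches. So the offending branches are $T(k-3)$ terms, not $T(k-2)$ terms; your plan to ``replace $T(k-2)$ by $2T(k-3)$'' is off by one level, and the target recurrence is $2T(k-1)+2T(k-3)+2T(k-4)$, whose root is $2.4634$.

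Second, the paper's argument does \emph{not} use the recursion rule or Lemma~\ref{ref:easydecompose} at all, and in particular never needs to confront the ``dangerous'' disjoint-but-homeomorphic case you worry about. Instead it argues directly that it is impossible for all three surviving $T(k-3)$ branches to yield disjoint instances in $T$. If $|L|=1$ or $|R|=1$ an extra branch can be dropped outright (giving $2T(k-1)+2T(k-3)$, branching factor $\approx 2.36$). If $|L|,|R|\geq 2$, consider the branch creating components $\{a,b,x\}$, $L$, $R$ (with $x\in\{c,d\}$): if this is disjoint then $T[L]$ and $T[R]$ are disjoint in $T$, and whichever of $L,R$ is nearer to the cherry $\{a,b\}$ in $T$ forces the branch that joins $\{a,b\}$ with the other one to be overlapping --- a contradiction. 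Hence at least one $T(k-3)$ branch must trigger branching rule~\ref{rule:overlap}, improving that term to $2T(k-4)$ and giving the stated recurrence. This short geometric argument, specific to caterpillars, replaces your proposed dichotomy and the auxiliary low-branching-factor rule you anticipated needing.
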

\begin{proof}
 The limiting case for the existing algorithm from \cite{kelk2024agreement} is shown in Figure \ref{fig:ruben}.

\begin{figure}[h]
\centering
\includegraphics[scale=1]{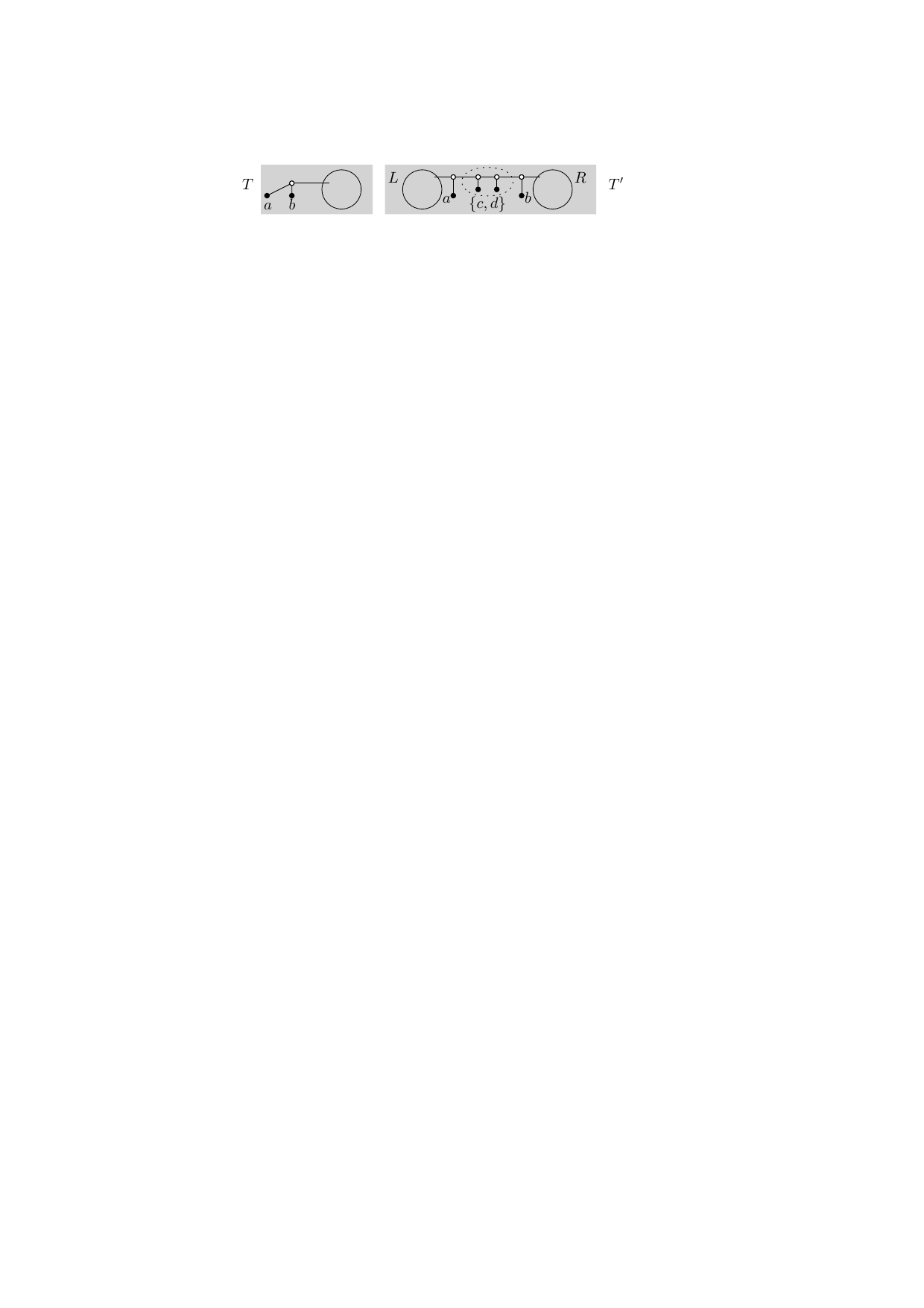}
\caption{The limiting case for the algorithm from \cite{kelk2024agreement} is when $\{a,b\}$ is a cherry in $T$, there are exactly two taxa, $c$ and $d$, between $a$ and $b$ in $T'$, and $L$ and $R$ are both non-empty.}
\label{fig:ruben}
\end{figure}
First, recall that the standard Chen $(t=4)$ algorithm gives recursion $2T(k-1) + 4T(k-3)$. This can be improved by observing that one of the two $T(k-3)$ branches, which create component $\{a,b,c\}$ and $\{a,b,d\}$ respectively, can be dropped. Specifically, we can restrict ourselves to the branch $\{a,b,x\}$ where %
$x\in\{c,d\}$ and $x$ is nearest to the $\{a,b\}$ cherry in $T$ (as it is least disruptive to the rest of $T$). This improves the recursion to $2T(k-1) + 3T(k-3)$ which gives a branching factor of at most $2.49$. Note that if $L$ is a singleton set, or $R$ is a singleton set, we can drop at least one other branch, again prioritizing the taxon that is closest to the $\{a,b\}$ cherry in $T$. Hence, if at least one of $L$ and $R$ is a singleton, we obtain recursion $2T(k-1) + 2T(k-3)$ which gives a branching factor of 2.36.
So suppose that $L$ and $R$ are both non-singleton. If any of the three $T(k-3)$ branches trigger overlap, then by branching rule \ref{rule:overlap} the $T(k-3)$ term improves to $2T(k-4)$ giving an overall running time of $2T(k-1)+2T(k-3) + 2T(k-4)$ which yields $2.4634$. So assume none of the three $T(k-3)$ branches induce overlap: all of them create disjoint instances. One such disjoint instance creates components $\{a,b,x\}$, $L$ and $R$ (where $x$ is as defined before), so $T[L]$ and $T[R]$ are disjoint and each spans at least 2 taxa. Suppose that $L$ is closer
to the $\{a,b\}$ cherry in $T$ than $R$. But then the branch which creates components $\{a,b\} \cup R$ and $L$ is overlapping: contradiction. A symmetrical argument holds if $R$ is closer to the $\{a,b\}$ cherry than $L$. %
\end{proof}

\section{Rooted trees}
\label{sec:rooted}

\noindent
We now switch to rooted trees. We show how rMAF can be solved in time $O^{*}(\bestrootednumber^{k})$. This is a very slight improvement on the limiting recursion from \cite{chen2015faster}, which is $T(k) = T(k-2)+10T(k-3)+2T(k-5)+2T(k-6)+1$. That has a branching factor of $2.3431$. An advantage of our approach is that significantly less deep case analysis is required.

A crucial difference with the unrooted case is that two rooted trees on 3 taxa can  be non-homeomorphic. For example, when $X=\{a,b,c\}$, $T$ has a unique cherry $\{a,b\}$ and $T'$ has a unique cherry $\{b,c\}$. This leads to a number of subtle differences with the unrooted case. Nevertheless, at a high-level we use the same approach as for unrooted trees:
all the results from Section \ref{sec:splitordecompose} hold for both types of trees. \\
\\
\SK{\textbf{High-level intuition.}} Recall that the limiting case of Whidden's algorithm is $t=2$, with recursion $2T(k-1) + T(k-2)$. \SK{Our overall goal is to ensure that the $T(k-2)$ invocation triggers a situation with recurrence $T(k-1) + 2T(k-2)$. By substituting $k-2$ for $k$, this becomes $T((k-2)-1) + 2T((k-2)-2)$ which is $T(k-3) + 2T(k-4)$. Combining with the $2T(k-1)$ term we obtain an overall recurrence
of $2T(k-1) + T(k-3) + 2T(k-4)$: this solves to the desired branching factor of $\bestrootednumber$.}

Analogous to our approach for uMAF, we also need
a number of branching rules which are designed to catch certain special cases -- \SK{we will start by describing these}. \SK{Ultimately, once everything is put together, we will end up with the algorithm described in Figure \ref{alg:mraf}.}

Branching rule \ref{rule:unify}
has a function similar to that of Lemma \ref{lem:twosingletons} for unrooted trees: to ensure that ``not too many'' singleton components are created in the $T(k-2)$ branch of Whidden $(t=2)$.

\begin{figure}[bt]
\centering
\includegraphics{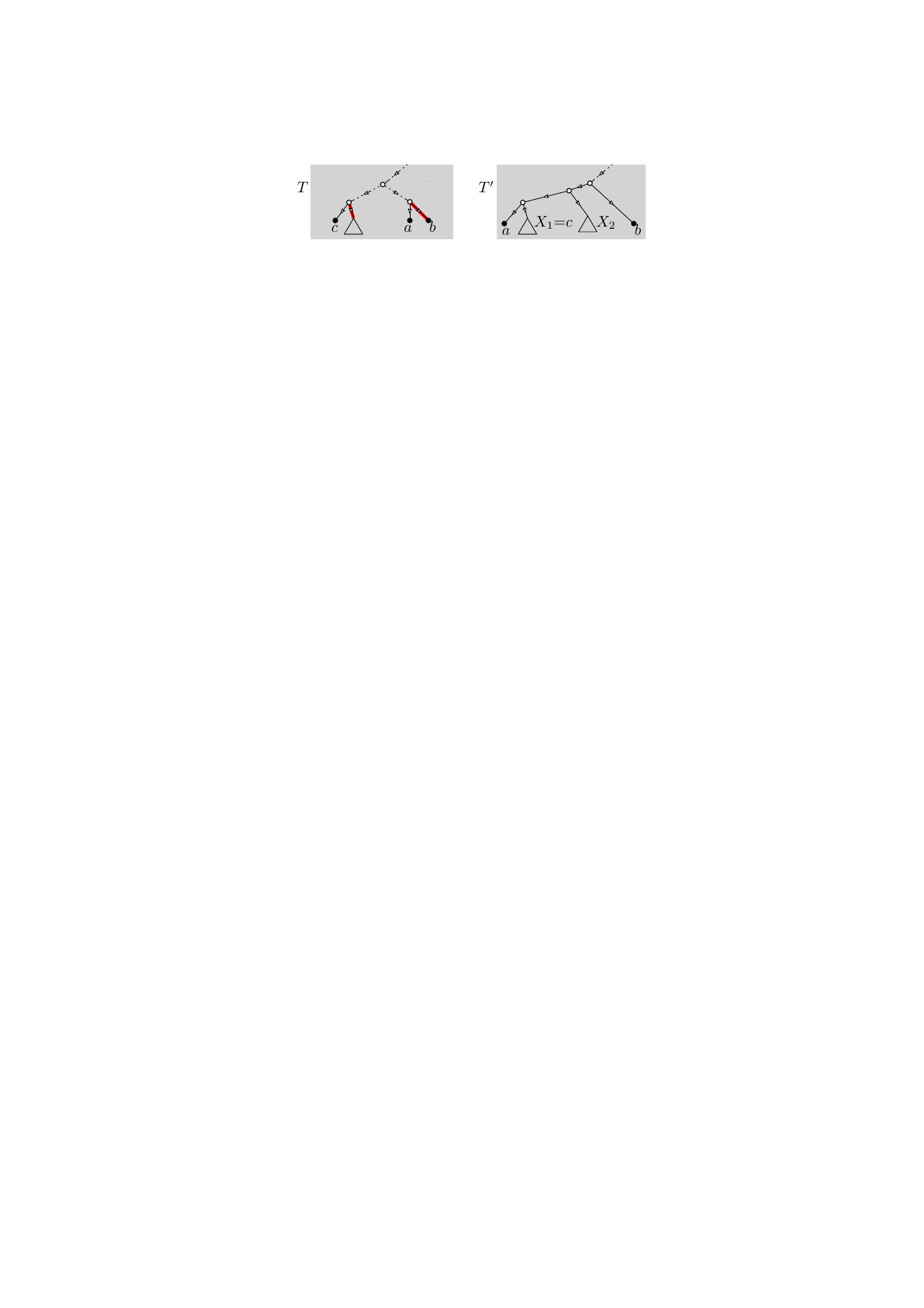}
\caption{By Branching rule \ref{rule:unify}, when this situation occurs, and Whidden $(t=1)$ does not apply, then either $X_1$ and $X_2$ are cut off in $T'$, $a$ is cut off, or in $T$ at least two outgoing arcs (shown in bold red) on the path from $a$ to $c$ are cut. Note that in $T'$, $X_2$ might alternatively be sibling to $b$.
}
\label{fig:firstrooted}
\end{figure}

\begin{branchrule}\label{rule:unify}
Let $T$ and $T'$ be two rooted trees on $X$ without common cherries, and assume that Whidden $(t=1)$ does not apply. Let $\{a,b\}$ be a cherry in $T$ and let $\emptyset \subset X_1, X_2 \subset (X\setminus\{a,b\})$ be the two subsets of taxa
that are pendant on the path from $a$ to $b$ in $T'$, where $X_1$ is closer to $a$, than $X_2$ is to $a$. Suppose $X_1 = \{c\}$ and $\{a,c\}$ is a cherry in $T'$. (Note that if $X_2$ is a singleton, and forms a cherry with $b$, then we can swap the labels of $a$ and $b$ to arrive in the same situation). Then branch as follows:
    \begin{enumerate}
        \item Cut off $X_1$ and $X_2$.
        \item Cut off $a$.
        \item In $T$, cut off the at least two arcs whose tails lie on the path from $a$ to $c$.
    \end{enumerate}
\end{branchrule}

\begin{restatable}{lemma}{correctunifysafe}
\label{lem:unifysafe}
Branching rule $\ref{rule:unify}$ is safe and has a recursion $T(k-1)+2T(k-2)$, which yields a branching factor of at most 2.
\end{restatable}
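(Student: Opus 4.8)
The plan is to establish the recurrence and the safety claim separately; the recurrence is mere bookkeeping, so only safety needs real work.

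For the recurrence I would tally the cuts made by each branch. Branch~1 makes two cuts in $T'$ --- one detaching the leaf $c$ (which is then tidied away) and one detaching the pendant subtree $T'[X_2]$ --- contributing a term $T(k-2)$. Branch~2 makes one cut, detaching $a$, contributing $T(k-1)$. Branch~3 cuts the arcs of $T$ whose tails lie on the path from $a$ to $c$; there are at least two of them, because $\{a,c\}$ is a cherry of $T'$ which (there being no common cherries) is not a cherry of $T$, and Whidden $(t=1)$ is assumed not to apply --- so this branch contributes at most $T(k-2)$. Hence $T(k)\le T(k-1)+2T(k-2)$, and since the positive root of $x^2=x+2$ is $2$, the branching factor is at most $2$.

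For safety, I would first note that branch~2 (detaching $a$) and branch~3 (cutting the $\ge 2$ arcs of $T$ on the $a$-to-$c$ path) are exactly the ``cut $a$ off'' and ``cut all path-arcs'' options of Whidden's branching rule applied to the cherry $\{a,c\}$ of $T'$, using the symmetry between $T$ and $T'$ available at a tree--tree pair (with $T$ as the cut tree). Whidden's rule being safe, for any agreement forest $F^{*}$ reachable from the current position with at most $k$ cuts, one of the following holds: (i) $\{a\}$ is a component of $F^{*}$; (ii) $\{c\}$ is a component of $F^{*}$; or (iii) $a$ and $c$ lie in a common component $B$ of $F^{*}$. This trichotomy is exhaustive: if neither $a$ nor $c$ is a singleton then, $\{a,c\}$ being a cherry of $T'$, the embeddings in $T'$ of their components both contain the common parent of $a$ and $c$, hence coincide. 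In case (i), branch~2 reaches $F^{*}$ within budget; in case (iii), branch~3 reaches an agreement forest no larger than $F^{*}$ within budget, by the standard argument that $\{a,c\}$ is then a cherry of $T|B=T'|B$, which forces every subtree hanging off the $a$-to-$c$ path in $T$ to avoid $B$, so that no component of $F^{*}$ uses any of those arcs and cutting them first costs nothing.

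The only genuinely new point --- and the step I expect to be the real obstacle --- is case (ii): I must show that branch~1, which cuts off \emph{both} $c$ and $X_2$, is safe, whereas Whidden would only cut off $c$. The argument I would give is: since $\{c\}$ is a component of $F^{*}$, cutting $c$ off is a safe first move, leaving $F^{*}\setminus\{\{c\}\}$ reachable from $(T-c,\,T'-c)$ with at most $k-1$ further cuts. In $T'-c$, the common parent of $a$ and $c$ loses its child $c$ and is suppressed, which collapses the pendant subtree $X_1$; consequently the path from $a$ to $b$ in $T'-c$ has exactly one arc whose tail (but not head) lies on it, namely the arc above $T'[X_2]$. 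As $\{a,b\}$ is still a cherry of $T-c$ (the parent of $c$ in $T$ is not the parent of $a$), Whidden $(t=1)$ now applies to $\{a,b\}$, so deterministically cutting off $T'[X_2]$ is safe. Composing these two safe moves is precisely branch~1, so branch~1 reaches an agreement forest of size at most $|F^{*}|$ using its two cuts together with at most $k-2$ afterwards. The remaining verifications I expect to be routine: that suppression really collapses $X_1$ as described, and that the tidying-up operations triggered by deleting $c$ (especially common-cherry reductions) cannot perturb the local structure around $a$, $b$ and $X_2$ --- which they cannot, since the only possible cherry partner of $b$ in $T$ is $a$.
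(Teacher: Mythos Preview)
Your proof is correct and a touch more modular than the paper's. Both arguments handle branch~2 and branch~3 the same way (these are exactly the ``cut $a$'' and ``cut all path-arcs'' options of Whidden applied to the cherry $\{a,c\}$ of $T'$). The difference is in how branch~1 is justified. The paper argues directly by case analysis on the component $B \ni a$ in an optimal forest $F^{*}$: it separately treats the sub-cases $b\in B$, ($b\notin B$ and $B\cap X_2=\emptyset$), and ($b\notin B$ and $B\cap X_2\neq\emptyset$), the last one requiring an explicit exchange (swap the $X_2$-part of $B$ for the singleton $b$). You instead observe that whenever $a$ is not a singleton and $a,c$ are not together, $c$ must be a singleton (since $\{a,c\}$ is a cherry of $T'$), so you may cut $c$ first and then invoke the known safety of Whidden $(t=1)$ on the cherry $\{a,b\}$ in the reduced instance $(T-c,T'-c)$ to justify cutting $X_2$. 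This single reduction subsumes all three of the paper's sub-cases uniformly and avoids the exchange argument; the price is relying on the previously established safety of Whidden $(t=1)$ rather than being self-contained. Your final remark that tidying-up cannot disturb the local picture is correct but in fact unnecessary: composing two safe cuts is safe regardless of intervening reductions.
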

\begin{proof}
To see that the third branch needs to cut at least two arcs, note that there is an arc outgoing to $b$ whose tail lies on the path from $a$ to $c$ in $T$. If there are no other arcs outgoing from this path then Whidden $(t=1)$ would apply, contradiction. (See also Figure \ref{fig:firstrooted}). For safety, suppose none of the three branches apply. Then $a$ is in a non-singleton component $B$. If $B$ contains $b$, then we are in the first branch. If $B$ does not contain $c$ or any taxa from $X_2$, we are also in the first branch. If $B$ contains taxa from $X_2$, but not $c$, we can cut off $X_2$ and add $b$ (which will necessarily be a singleton) to $B$, irrespective of whether $B$ contains taxa from outside $\{a,b,c\} \cup X_2$: again, we are back in the first branch. The remaining case is that $B$ includes $c$. Then (due to $\{a,c\}$ being a cherry in $T')$ the
two arcs outgoing from the path from $a$ to $c$ in $T$ cannot be in $B$ i.e. they must be cut. Hence, the third branch applies.
\end{proof}

Consider a potential invocation of Whidden $(t=2)$, where $X_1$ and $X_2$ are as defined
in the statement of branching rule \ref{rule:unify}, and let $S=X \setminus (\{a,b\} \cup X_1 \cup X_2)$. (So,
the branch of Whidden $(t=2)$ that cuts off $X_1$ and $X_2$, splits $T'$ into three components with taxa $X_1, X_2$ and $S \cup \{a,b\}$). Assuming branching rule \ref{rule:unify}
does not apply, $X_1$ and $X_2$ are not both singletons. Also, if one of them is a singleton,
it does not form a cherry with $a$ or $b$. 
We are now almost ready to invoke a specially-modified version of Whidden $(t=2)$, but we have to first deal with a very specific boundary situation when both the following conditions hold:
\begin{enumerate}
\item $T[X_1]$, $T[X_2]$ and $T[ S \cup \{a,b\}]$ are mutually disjoint in $T$.
\item At least two of $X_1, X_2, S\cup \{a,b\}$ are homeomorphic.
\end{enumerate}

\newcommand{\hiddenRule}[1]{%
    \refstepcounter{branchrule}%
    \label{#1}%
}

\hiddenRule{rule:twohomeoroot}

The only purpose of our next branching rule, \textbf{Branching rule \ref{rule:twohomeoroot}}, is to catch this situation.  For convenience, we give the definition of this branching rule at the same time as the proof of its correctness; to guide the reader we show the actual branching strategy in \textcolor{red}{red text}.

\begin{lemma}
\label{lem:proofOfBR5}
Let $T$ and $T'$ be two rooted trees on $X$ without common cherries. Suppose Whidden $(t=1)$ does not apply, and neither does branching rule \ref{rule:unify}. Suppose Whidden ($t=2$) does apply. That is: $\{a,b\}$ is a cherry in $T$ and $\emptyset \subset X_1, X_2 \subset (X\setminus\{a,b\})$ are the two subsets of taxa
that are pendant on the path from $a$ to $b$ in $T'$, where $X_1$ is closer to $a$ than $X_2$ is to $a$. Let $S=X \setminus (\{a,b\} \cup X_1 \cup X_2)$. Suppose that $T[X_i], T[X_j]$ and $T[S \cup \{a,b\}]$ are mutually disjoint in $T$, and at least two of $X_i, X_j, S \cup \{a,b\}$ are
homeomorphic.  Then there is a safe branching rule, denoted \textbf{branching rule \ref{rule:twohomeoroot}}, with recursion at most $T(k-1)+3T(k-2)$, which yields a branching factor of at most 2.3028. 
\end{lemma}
\begin{proof}
First, recall that due to the non-applicability of branching rule \ref{rule:unify}, at most one of $X_1$ and $X_2$ is a singleton, and if one of them is a singleton it does not form a cherry with $a$ or $b$.  

Now, if all three of $X_1, X_2, S \cup \{a,b\}$ are homeomorphic, the
$T(k-2)$ branch of Whidden $(t=2)$ would simply return TRUE/0 instantly, because the three sets already constitute a valid agreement forest. \textcolor{red}{Hence in this situation we can simply execute the standard Whidden $(t=2)$ branching, observing that the usual recursion, which is $2T(k-1) + T(k-2)$, collapses to $2T(k-1)$.} This has a branching factor of 2, which is less than 2.3028.

From this point on, we can henceforth assume that \emph{exactly two} of the three sets are homeomorphic. 
Figure \ref{fig:startbigrooted} describes the situation at this point.

\begin{figure}[h]
\centering
\includegraphics{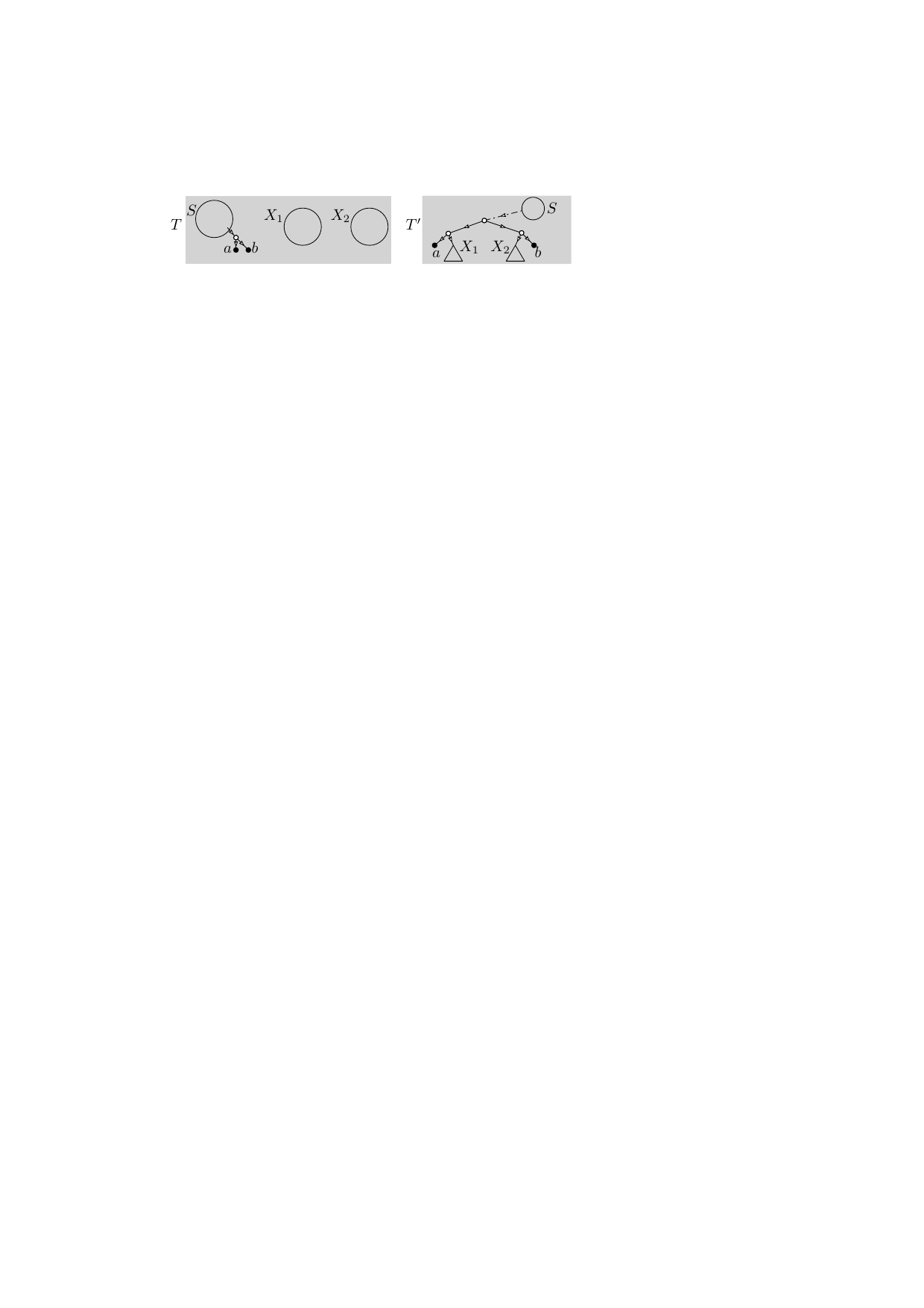}
\caption{The starting point for our technical analysis. Here $X_1$ and $X_2$ are not both singletons; $a$ and $b$ do not form cherries in $T'$ with $X_1$ or $X_2$; $T[X_1]$, $T[X_2]$ and $T[ S \cup \{a,b\}]$ are mutually disjoint in $T$, and exactly two of $X_1, X_2, S \cup \{a,b\}$ induce homeomorphic subtrees in $T$ and $T'$. Here $X_2$ is shown as being sibling to $b$ in $T'$ but it might also be ancestral to $a$ and $X_1$ i.e. ``above'' $X_1$ in $T'$.}
\label{fig:startbigrooted}
\end{figure}
  
 Suppose $X_i$, (say) $X_1$, is homeomorphic
 and $|X_1|\geq 2$. Let $\{p,q\}$ be a cherry in $X_1$; this must exist. To prevent $\{p,q\}$ being a common cherry in $T$, and assuming that Whidden ($t=1$ or $t\geq 3$) does not apply anywhere, we observe crucially that $T[\{p,q\}]$ \emph{must have at least two outgoing arcs in $T$}. These outgoing arcs can only be created by edges outgoing to $T[X_2]$
 and $T[S \cup \{a,b\}]$. This has several consequences: $X_1$ has at most one cherry (so $T|X_1 = T'|X_1$ is in fact a caterpillar) and $X_2$ cannot be homeomorphic non-singleton (as $T'|X_2$ would then also contain a cherry, meaning there are simply too many cherries in $T$ to `hit'). See Figure \ref{fig:pqHitting}.

\begin{figure}[h]
\centering
\includegraphics{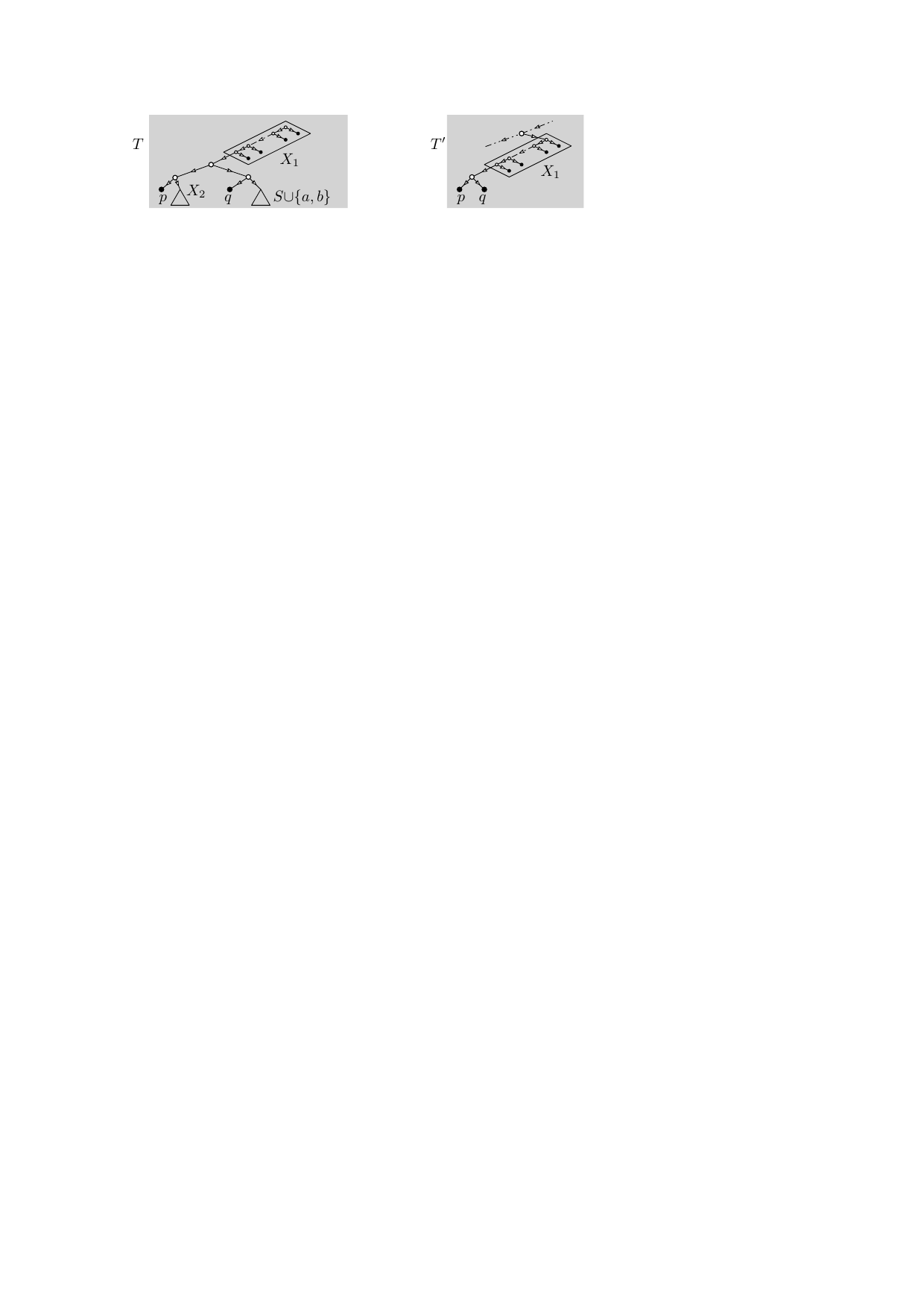}
\caption{If one of $X_1, X_2$ is homeomorphic non-singleton, say $X_1$, then it contains a cherry $\{p,q\}$, and in $T$, $X_2$ and $S \cup \{a,b\}$ will be pendant to the path from $p$ to $q$. $X_1$ will also be a caterpillar, and $X_2$ will not be homeomorphic non-singleton. Note that the location of $X_2$ and $S \cup \{a,b\}$ on the path from $p$ to $q$ can vary.
}
\label{fig:pqHitting}
\end{figure}

Hence, if $X_1$ and $X_2$ each have at least 2 taxa, then at most one of them is homeomorphic. Conversely, if they are both homeomorphic, at least one of them is a singleton. 

We can distinguish the following cases.
\begin{itemize}
\item\textbf{Case 1: At least one of $X_1, X_2$, say $X_j$, is homeomorphic and $|X_j|\geq 3$.} It follows that $X_i$ ($i \neq j$) is either a singleton, or non-homeomorphic, but we do not directly need this information. Similarly, the following argument is agnostic to the (non-)homeomorphic status of $S \cup \{a,b\}$. 
We know that $X_j$ is a non-singleton caterpillar with a unique cherry $\{p,q\}$.  Furthermore, in $T$, $T[X_i]$ and $T[S \cup \{a,b\}]$ must be pendant to the embedding of this cherry (see the remark in the previous paragraph; also,
in Figure \ref{fig:pqHitting}, $X_j = X_1$ and $X_i=X_2$). As a result of this (a) any component of a forest that contains at least one taxon from $X_i \cup S \cup \{a,b\}$, can contain at most one taxon from $X_j$. This is because, if such a component contained two or more taxa from $X_j$, the component would have a cherry comprising two taxa from $X_j$ (due to the pendancy of $T'[X_j])$. However, due to the topology of $T$ - recall that $T|X_j$ is a caterpillar and $T[X_i], T[S \cup \{a,b\}]$ are outgoing from its only cherry - these two taxa cannot form a cherry in the embedding of the component in $T$.
Also, (b) there can be at most one component that contains taxa from both $X_j$ and outside it, due to the pendancy of $T'[X_j]$. \textcolor{red}{We use insights (a) and (b) to now argue that both $T[X_i]$ and $T[S \cup \{a,b\}]$ can definitely be cut off (which is equivalent to cutting off $T'[X_i]$ and $T'[X_j]$).} Suppose this is not the case for some agreement forest.
First, suppose $B$ contains
at least one taxon from $X_i$, and at least one taxon from $S \cup \{a,b\}$ but no taxa from $X_j$. Then at least one of $p, q$ will be a singleton. We cut $B$ into its $X_i$ and $S \cup \{a,b\}$ parts with a single cut, and then compensate by introducing $X_j$ as a single component (which saves at least one cut, due to absorbing singleton $p$ or $q$). On the other hand, suppose that there is a component $B$
that contains at least one taxon from $X_i \cup S \cup \{a,b\}$ and exactly one taxon from $X_j$.  Crucially, all taxa in $(X_j \setminus B)$ will be singleton components\footnote{This means that, again, at least one of $p$ and $q$ will be a singleton. We do not need this fact here, but it is leveraged in the next case.}, since $B$ enters $T'[X_j]$ via its root (due to pendancy of $T'[X_j]$) but $T[X_j]$ via the cherry at the bottom. Given that $|X_j|\geq 3$, there are at least two such singletons. We can then cut $B$ into its $X_i$, $S \cup \{a,b\}$ and $X_j$ parts (requiring at most two cuts), and compensate by introducing $X_j$ as a single component (which absorbs the at least two singletons); done. The recursion is $T(k) = T(k-2)$,
which has a branching factor of 1.

\item\textbf{Case 2: At least one of $X_1, X_2$, say $X_j$, is homeomorphic and $|X_j| = 2$.} The analysis for $|X_j|\geq 3$ from Case 1 holds almost entirely here, except at the very end when we leverage the $|X_j| \geq 3$ assumption. To avoid extensive case analysis, we accommodate this via a comparatively weakened analysis. Our branches are now: \textcolor{red}{(1) Cut off $X_1, X_2$ (which is the situation discussed in the previous paragraph); (2) Cut off $p$ and $a$, (3) Cut off $p$ and $b$, (4) Cut off $q$ and $a$, (5) Cut off $q$ and $b$.} The argument is straightforward: if $a$ and $b$ are together in a component, then via the standard Whidden $(t=2)$ argument, (1) holds. Otherwise, at least one of $a$ and $b$ is a singleton. Now, we know from the previous paragraph that if (1) does not hold, at least one of $p$ and $q$ will be a singleton. The recursion is $T(k)=5T(k-2)$, which has a branching factor of 2.2361.

\item \textbf{Case 3: Exactly one of $X_1, X_2$ is non-homeomorphic, and the other is a singleton.}  Hence, $S \cup \{a,b\}$ is homeomorphic. Now, suppose $T|(S \cup \{a,b\}) = T'|(S \cup \{a,b\})$ contains a cherry $\{p,q\}$ such that $\{p,q\} \cap \{a,b\} = \emptyset$. In this case,    \textcolor{red}{the same branching from Case 2 works.} That is because $X_1$ and $X_2$ must both be pendant to $T[\{p,q\}]$, by the same arguments as given in Case 1. So $X_1, X_2$ are pendant to the cherry $\{p,q\}$ in $T$, and are pendant to the cherry $\{a,b\}$ in $T'$. If $\{p,q\}$ are together in a component, or $\{a,b\}$ are together in a component, then $X_1, X_2$ are both cut off, i.e. branch (1) holds. Otherwise at most one of $p$ and $q$ can be in a non-singleton component, and at most one of $a, b$ can be in a non-singleton component, i.e. at least one of branches (2)-(5) holds. 

We may thus assume that $S \cup \{a,b\}$ has no second cherry, i.e. $T|(S\cup \{a,b\}) = T'|(S\cup \{a,b\})$ is a caterpillar. This situation is shown in Figure \ref{fig:homeomorphicS}.
\begin{figure}[h]
\centering
\includegraphics{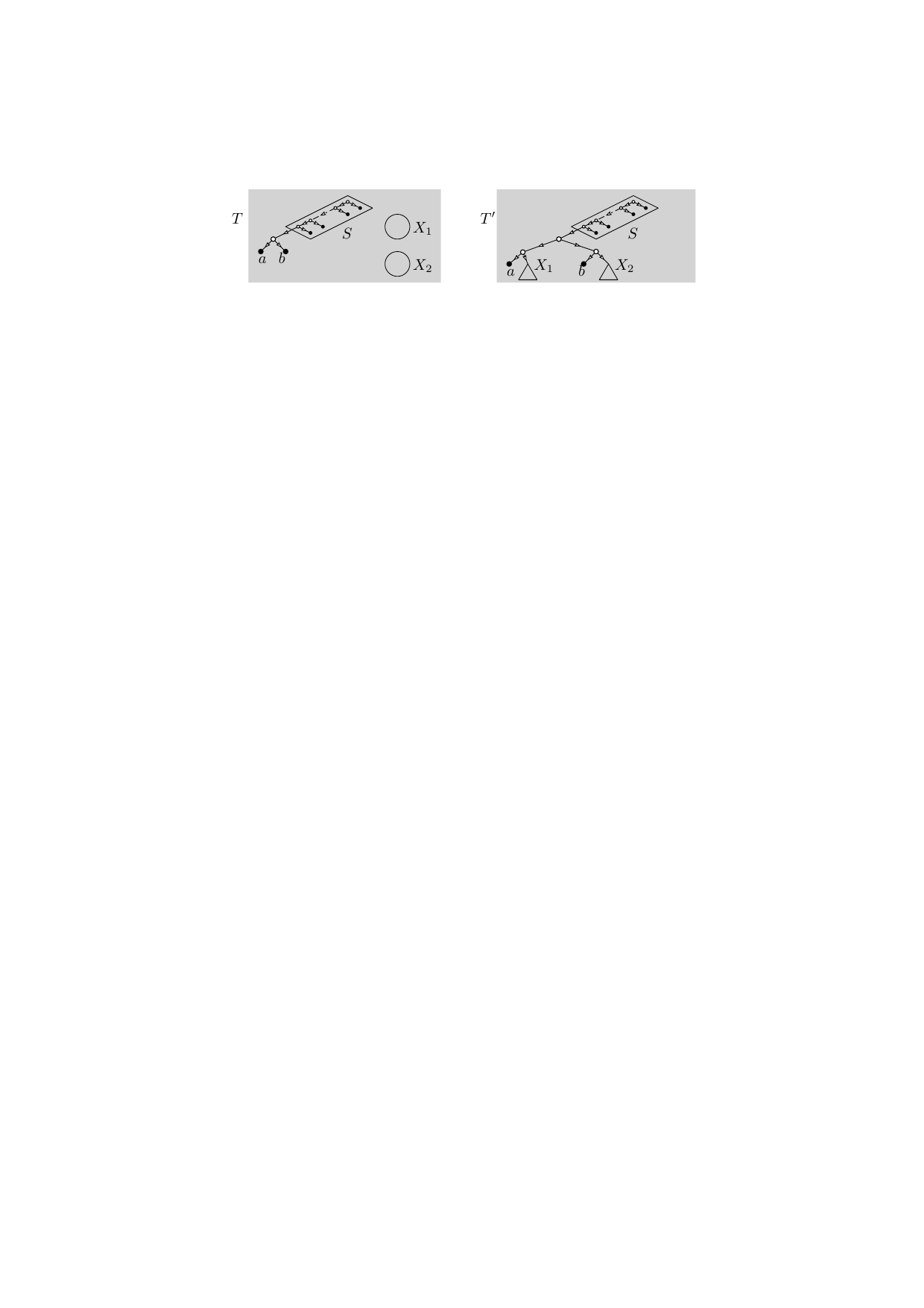}
\caption{Assuming the situation in Figure \ref{fig:pqHitting} does not occur, we have that one of $X_1, X_2$ is non-homeomorphic, and the other is a singleton. Hence, $S \cup \{a,b\}$ is homeomorphic. The most technically challenging situation to deal with is when $S \cup \{a,b\}$ is a caterpillar, as shown here. Different subcases occur depending on how $X_1, X_2$ and $S \cup \{a,b\}$ connect to each other within $T$. These are elaborated on later
in Figures \ref{fig:commonSplit} and \ref{fig:threeblocks}.}
\label{fig:homeomorphicS}
\end{figure}

Now, let $X_j$ be the non-homeomorphic set.
\begin{itemize}

\item \textbf{Case 3.1: $T[X_j]$ has exactly one incident edge, incoming or outgoing.} Observe that in both $T$ and $T'$ the taxa from $X_j$ are completely separated from the taxa in $X \setminus X_j$ (in phylogenetics this is called a ``common split''); this is shown schematically in Figure \ref{fig:commonSplit}.%

\begin{figure}[h]
\centering
\includegraphics{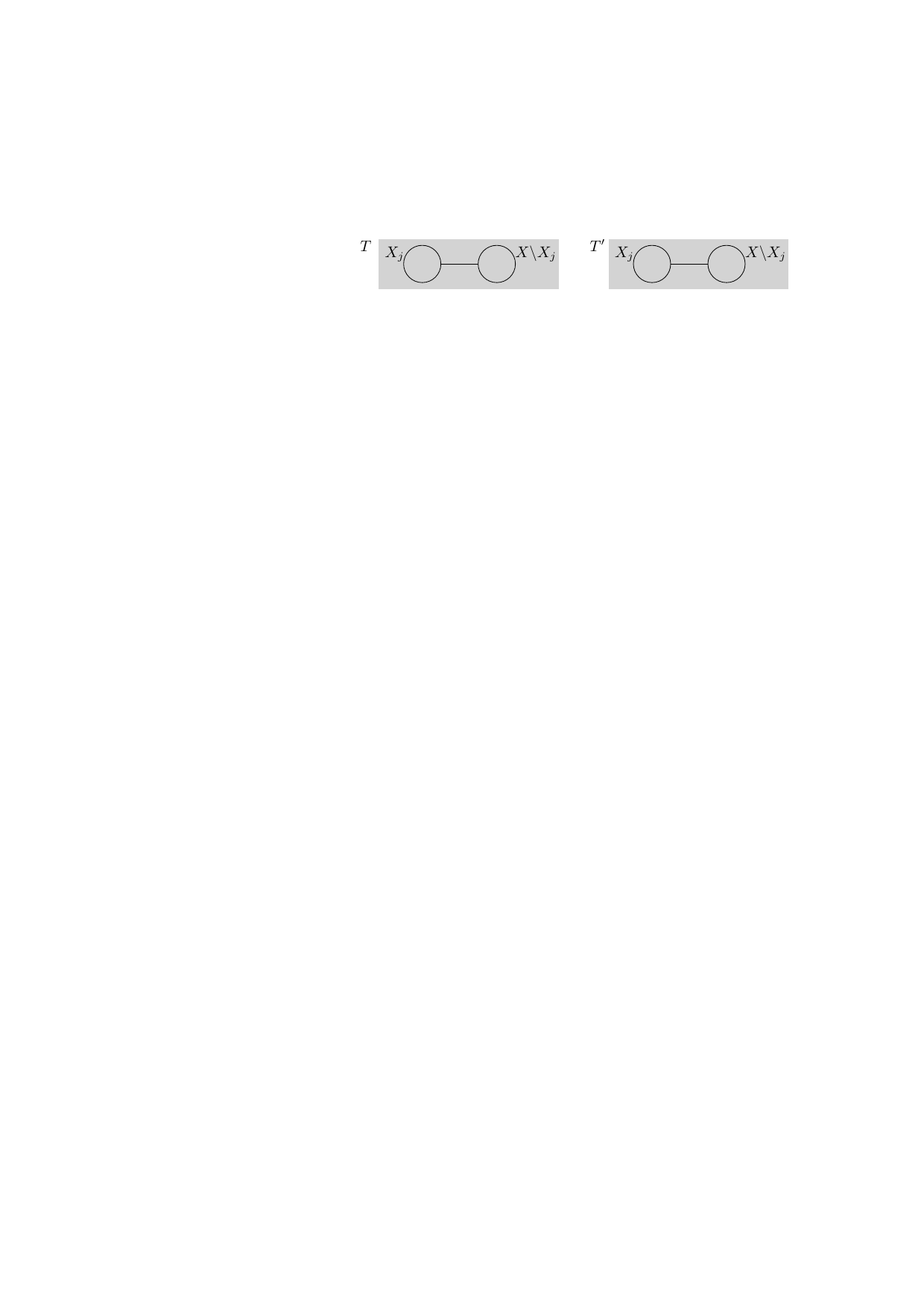}
\caption{Elaborating on Figure \ref{fig:homeomorphicS}, this is a schematic depiction of the special situation when $T[X_j]$ has exactly one incoming or outgoing arc in $T$: $X_j$ (the non-homeomorphic set) is cleanly separated from the rest of the tree, $X \setminus X_j$, in $T$. This separation  occurs by construction in~$T'$.}
\label{fig:commonSplit}
\end{figure}

This can be exploited as follows. \textcolor{red}{We first solve instance $T|X_j, T'|X_j$ with a $T(k-1)$ call, and then solve instance $T|(X_j \cup \rho), T'|(X_j \cup \rho)$ with a second $T(k-1)$ call, where $\rho$ is a new taxon which is attached to $T|X_j$ and $T'|X_j$ to represent the location at which the rest of $T, T'$ (respectively) is attached}. The invocation with $\rho$ allows us to understand whether optimal agreement forests for $T|X_j, T'|X_j$ can potentially grow beyond $X_j$ without sacrificing optimality. In more detail, let $y, y^{\rho}$ be the numerical values returned from the
two invocations (or FALSE if the corresponding invocation failed). If $y$ is FALSE, then an agreement forest for $T|X_j, T'|X_j$ has at least $k+1$ components (i.e. at least $k$ cuts are needed). However, given that $T|(X \setminus X_j)$ and $T'|(X \setminus X_j)$ are not homeomorphic (as $\{a,b\}$ is a cherry in one and not the other), 
an agreement forest
for $T|(X \setminus X_j)$ and $T'|(X \setminus X_j)$ has at least two components. At most one component can span both the $X_j$ and the $X \setminus X_j$ instance, so together at least $(k+1)+2 - 1 =k+2$ components are required, and thus at least $k+1$ cuts, so we can definitely return FALSE. Otherwise: if $y < y^{\rho}$, or $y$ is numeric and $y^{\rho}$ is FALSE, then no optimal agreement forest for $T|X_j, T'|X_j$ can grow beyond $X_j$. Hence, it is safe to assume that in some optimal agreement forest reached from this point, $X_j$ is cut off. Noting that $T|(X \setminus X_j), T'|(X \setminus X_j)$ has an agreement forest with at most two components (due to $X_i$ and $S \cup \{a,b\}$ being disjoint and homeomorphic), we can use brute force to determine whether 0 or 1 cut is required
for this instance, add this to $y$, and return this as our (guaranteed successful) solution. If $y=y^{\rho}$, then we replace $X_j$ with a taxon $\gamma$ in $T, T'$ to obtain $T_\gamma, T'_\gamma$. The instance $T_\gamma, T'_\gamma$ has an agreement
forest with at most 3 components. Let $z$ be the
minimum number of cuts to obtain an agreement forest for $T|(X \setminus X_j), T'|(X \setminus X_j)$, and let $z_\gamma$ be the
minimum number of cuts to obtain an agreement forest for $T_\gamma, T'_\gamma$ (both $z, z_\gamma$ can be computed using brute force,
since $z \leq z_\gamma \leq 2$). If $z < z_\gamma$ we return $y+z$ as our solution (or FALSE if $y+z > k$), but if $z=z_\gamma$ we return $y+z-1$ (or FALSE if $y+z-1 > k$) as our solution. The ``$-1$'' represents the situation when a component from an optimal agreement forest for $T|(X \setminus X_j), T'|(X \setminus X_j)$, and a component from an optimal agreement forest for $T|X_j, T'|X_j$, can be merged into a single component. The recursion here is $T(k)=2T(k-1)$, which has a branching factor of 2.

\item \textbf{Case 3.2: $T[X_j]$ has exactly \emph{two} incident arcs.}
The only situation we need to consider is when $X_j$ (which is not a singleton, due to being non-homeomorphic) is sibling to $a$ in $T'$, $c$ (where $X_i = \{c\}$) is a child of the grandparent of $a$ and $X_j$ in $T'$, and  $b$ is directly incident to the root of $T'[X_j]$ -- see Figure \ref{fig:threeblocks} (right). (All other topologies will trigger branching rule \ref{rule:unify}).  This
again splits into several subcases, contingent on the topology of $T$; these are shown in panels 1, 2, 3 of Figure \ref{fig:threeblocks} (left).

\begin{figure}[h]
\centering
\includegraphics{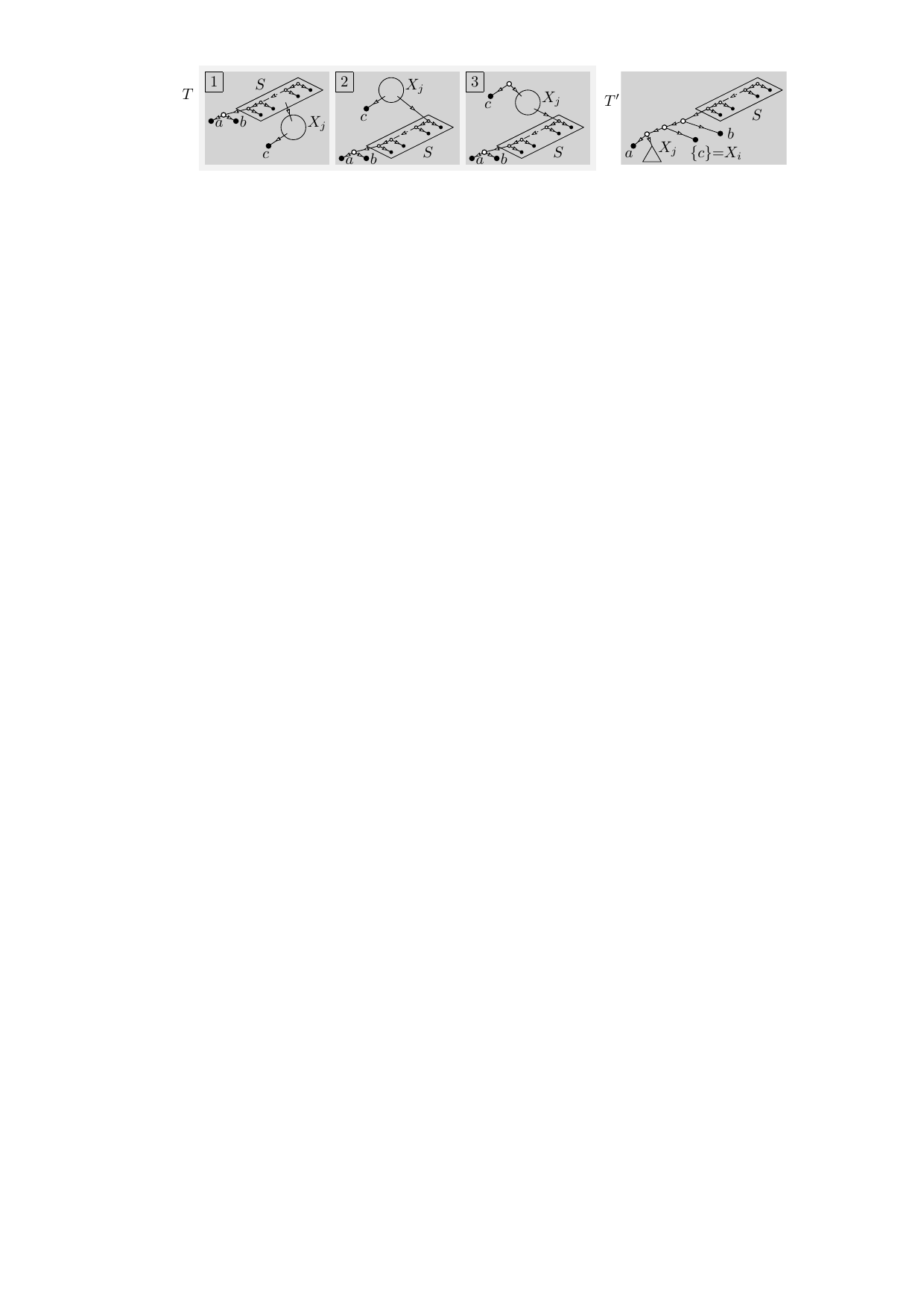}
\caption{Here $T[X_j]$ has exactly \emph{two} incident arcs; the different ways that this is possible are shown in the panels 1, 2, 3 on the left. Assuming other reduction rules do not hold the only topology
we need to consider for $T'$ is shown on the right. (Recall that, due to an earlier assumption, $S \cup \{a,b\}$ is
homeomorphic, and $X_j$ is not).}
\label{fig:threeblocks}
\end{figure}

\begin{itemize}
\item \textbf{Case 3.2.1: In $T$, $c$ is outgoing from $T[X_j]$, and $T[X_j]$ is outgoing from $T[S \cup \{a,b\}]$.} Observe that it is not possible for a component of the forest to  simultaneously include $a, c$ and a taxon from $X_j$. This is because in $T$
$c$ and $X_j$, but not $a$, are together below some cut-edge of $T$, but in $T'$ a different topology is induced (due to $a$ being sibling to $X_j$). We therefore branch as follows: \textcolor{red}{(1) Cut $X_i = \{c\}$ and $X_j$, %
(2) Cut $a$}.
This is safe because if $a$ is not
a singleton component (i.e. branch (2) does not apply), and $a$ is not together with $b$ in a component (i.e. branch (1) does not apply), then $b$ is singleton, and $a$ can be together in a component $B$ with $c$, or together with taxa from $X_j$, or neither -- but not both. Note also that if $a$ is in a component $B$ with taxa from $X_j$ (and therefore $c$ is not in that component), then $c$ is singleton; if $c$ was not singleton, its component $B' \neq B$ would necessarily include a taxon from $S$, but then $T[B]$ and $T[B']$ would intersect on the edge from $T[S \cup \{a,b\}]$ to $T[X_j]$.
Hence, we make (at most) one cut to separate $a$ from $X_j$ or $c$, and then compensate by adding $b$ together with $a$, so we are back in branch (1). The recursion is $T(k) = T(k-1) + T(k-2)$, with a branching factor of 1.618.
\item \textbf{Case 3.2.2: In $T$, $c$ and $T[S \cup \{a,b\}]$ are both outgoing from $T[X_j]$.} Here we branch as follows. \textcolor{red}{(1) Cut off $X_1$ and $X_2$, (2) Cut off $b$, (3) Cut off $a$ and in $T$ also cut off: (i) one arc outgoing from the interior of the path from $lca(c,b)$ to $c$, if there are at least two such arcs or, failing that, (ii) all arcs outgoing from the interior of the path from $lca(c,b)$ to $b$ (excluding the arc to $a$), if they exist or, failing that, (iii) the arc incoming to $lca(c,b)$. In branch 3 at least one of (i)-(iii) occurs because otherwise $T[X_j]$ is pendant.} This gives a recursion $T(k-1) + 3T(k-2)$ (branching factor 2.3028), the worst case being obtained in 3(i) when we have to guess which of the least two arcs outgoing from  the path from $lca(c,b)$ to $c$ need to be cut. We do this by selecting an arbitrary size-2 subset of two such outgoing arcs and guessing which of the two is cut (in addition to $a$); this contributes $2T(k-2)$ to the recursion.

Safety is rather subtle here. Suppose $b$ is not a singleton; then, as usual, assuming branch 1 does not hold, $a$ is singleton. If the component that $b$ is in does not contain either $c$ or taxa from $X_j$ we can separate (if necessary) $c$ from $X_j$ using a single cut and compensate by bringing $a$ into the component that $b$ is in (so we are in branch 1). If $b$ is in a component with exactly one of $c$ and taxa from $X_j$, then we can cut off $c$ or $X_j$ from this component and bring $a$ and $b$ back together (branch 1). So assume that $b$ is together with both $c$ and taxa from $X_j$ in some component $B$. Consider $T[B]$. At most one type 3(i) arc can be used by $T[B]$, and no type 3(ii) arcs. Component $B$ cannot contain any taxa from $S \cup \{a,b\}$ other than $b$, since all taxa in $S \cup \{a,b\}$ lie below a cut-edge in $T$. Hence, $lca(c,b)$ is the root of $T[B]$ in $T$; so the edge entering $lca(c,b)$ (if it is present) can be cut.
\item \textbf{Case 3.2.3: In $T$, $c$ is sibling to the root of $T[X_j]$, and $T[S \cup \{a,b\}]$ is outgoing from $T[X_j]$.} In this case we can branch as follows: \textcolor{red}{(1) Cut off $X_1$ and $X_2$, (2) Cut off $b$}. The crucial point underpinning safety here is that it is not possible to have $b, c$ and taxa from $X_j$ together in a component. This is because $c$ is at the root of $T$, meaning that for every $\{b,c,x\}$ where $x \in X_j$, $T|\{b,c,x\} \neq T'|\{b,c,x\}$. Now, if $b$ is not in a singleton component, then it is together with $a$ in a component (putting us in branch 1), or $a$ is a singleton and $b$ is together in a component with at most one of $c$ and taxa from $X_j$ (but not both). If it is together with exactly one of them, we can separate $b$ from $c$ or $X_j$ and then bring $a$ into the component, putting us back in branch 1. If it is together with neither of them, then if necessary we can separate $c$ and $X_j$ with one cut, and then bring $a$ into the component with $b$, returning us (for the final time) to branch 1. This gives recursion $T(k)=T(k-1)+T(k-2)$ with branching factor 1.618.
\end{itemize}

\end{itemize}
\end{itemize}
\end{proof}

Even when preceded by branching rules \ref{rule:unify} and \ref{rule:twohomeoroot}, there is still a risk with the classical Whidden $(t=2)$ that the branching factor rises too much.
This can happen if the $T(k-2)$ branch, when $X_1$ and $X_2$ are cut off in $T'$, subsequently induces worst-case behaviour in branching rule \ref{rule:overlap}, which is $2T(k-1)$. The $T(k-2)$ term becomes $2T((k-2)-1) = 2T(k-3)$, giving an overall running time of $2T(k-1) + 2T(k-3)$, where as usual the $2T(k-1)$ term is from cutting off $a$ or $b$. 
This has a branching factor of~$2.3593$, which is larger than we want.

To address this, we construct a specially modified version of Whidden $(t=2)$. This modification entails that, immediately after the
$T(k-2)$ branch is created, branching rules \ref{rule:weirdoverlap} and \ref{rule:fitch} are executed.
These rules ensure that, subsequently, worst case running time for branching rule \ref{rule:overlap} is avoided.
\begin{branchrule}\label{rule:weirdoverlap} 
Let $(T,F')$ be a non-disjoint instance, such that $F'$ consists of components $A, B, C$
where:
\begin{itemize}
    \item $A$ is homeomorphic and $C$ is homeomorphic, but $B$ is not.
    \item $T[C]$ is disjoint from $T[A]$ and $T[B]$, and $T[A]$ and $T[B]$ overlap on only a single edge; write $A_1, A_2$ and $B_1,B_2$ for the bipartitions of the taxa of $A$ and $B$ respectively induced by this edge.
\end{itemize}
Then cut component $B$ into $B_1$ and $B_2$ with a single cut.
\end{branchrule}
\begin{restatable}{lemma}{weirdoverlapcorrect}
    Branching rule \ref{rule:weirdoverlap} is safe and has recurrence $T(k)=T(k-1)$.
\end{restatable}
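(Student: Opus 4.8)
The recurrence will be immediate: the rule makes exactly one deterministic cut in $F'$, so $T(k)=T(k-1)$ — once one has checked that the bipartition $(B_1,B_2)$ is really induced by a single edge of the component $B$ of $F'$. I expect this check to be routine given the restricted way the rule is invoked (right after the $T(k-2)$ branch of the modified Whidden $(t{=}2)$ step, where $B$ is one of the cut-off sets $X_1,X_2$), although it does require unpacking the hypotheses. All the real work is in safety, which I would phrase as the claim that some smallest agreement forest reachable from $(T,F')$ refines the partition $\{A,B_1,B_2,C\}$; cutting $B$ into $B_1$ and $B_2$ then loses nothing.

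The plan for safety is a local surgery argument. First I would fix a smallest reachable agreement forest $F^*$; since it is obtained from $F'$ by cutting and tidying up, it refines $\{A,B,C\}$. The structural crux is that, because $T[A]$ and $T[B]$ share exactly the single edge $e$ (with endpoints $u,w$) while $T[C]$ is disjoint from both, at most one component of $F^*$ can have an embedding containing $e$, and that component must lie inside $A$ or inside $B$. If it lies inside $A$ — or if no component uses $e$ — then no component inside $B$ can join taxa across $e$, so every $B$-component already lies inside $B_1$ or inside $B_2$, and $F^*$ refines $\{A,B_1,B_2,C\}$; we are done.

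The remaining case is that some $B$-component $D^*$ uses $e$, and hence has taxa in both $B_1$ and $B_2$. Then no $A$-component can meet $e$ or its endpoints, so the $A$-components respect $(A_1,A_2)$ and there are at least two of them. At this point I would extract the precise local picture of $T$ around $e$ from the hypothesis that $T[A]$ and $T[B]$ overlap in only the edge $e$: both $u$ and $w$ have degree $3$, with one incident non-$e$ edge used by $T[A]$ and the other by $T[B]$, so all of $B_1$ lies in a single branch below $u$ and all of $B_2$ in a single branch below $w$. Now the surgery: split $D^*$ along $(B_1,B_2)$ — this is again a single cut, since the edge of $B$ realising $(B_1,B_2)$ must lie inside the connected subtree $D^*$ — and merge all the $A$-components back into the single component $A$, which saves at least one cut. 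The resulting forest is reachable from $F'$, has size at most $|F^*|$, and is an agreement forest (every piece is homeomorphic, using that homeomorphy passes to subsets together with homeomorphy of $A$, of $C$ and of $D^*$; and the only embedding now touching $e$ or its endpoints is $T[A]$, since the split confines every remaining $B$-component strictly below $u$ or below $w$ and $T[C]$ is globally disjoint). Since it refines $\{A,B_1,B_2,C\}$, safety follows.

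The main obstacle I expect is this last validity check for the modified forest — confirming that re-merging $A$ creates no collision of embeddings. This is where the hypotheses earn their keep: the single-edge-overlap hypothesis forces the degree-$3$ configuration at $u$ and $w$ that keeps $T[A]$ apart from every $B$-component, homeomorphy of $A$ (passed to all its sub-pieces) legitimises the merge, and global disjointness of $C$ handles $C$. I would expect most of the write-up to be spent carefully ruling out collisions at the vertices $u$ and $w$.
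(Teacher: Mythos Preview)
Your proposal is correct and follows the same line as the paper's proof: assume an optimal agreement forest has a $B$-component spanning $B_1$ and $B_2$, observe that $A$ must then be split into at least two pieces, and perform the swap (one cut to separate the spanning $B$-component along $(B_1,B_2)$, compensated by merging the $A$-pieces back into the single homeomorphic component $A$). The paper's version is terser and does not spell out the collision check at the endpoints of $e$ or the single-edge realisability of $(B_1,B_2)$ in $F'$ that you flag; your more careful treatment of these points is welcome but not a different argument.
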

\begin{proof}
To see why this safe, suppose some optimal agreement forest reached from this point contains
a component with taxa from both $B_1$ and $B_2$. In such an agreement forest, $A$ is necessarily split into two or more components. However, we can modify this into an agreement forest that is no larger by splitting the component that spans $B_1$ and $B_2$ into its $B_1$ and $B_2$ parts with a single cut, and introducing $A$ as a single component (which is possible because $A$ is homeomorphic and $A_1, A_2$ only overlap on a single edge of $T$), saving at least one cut. 
\end{proof}

The following is not really a branching rule, but a situation in which part of the agreement forest can independently be computed to optimality in polynomial
time, reducing the number of cuts available to the remaining part of the instance. Fitch's algorithm~\cite{Fitch71}  is a well-known algorithm in phylogenetics.
\begin{branchrule}\label{rule:fitch} 
Let $(T,F')$ be a non-disjoint instance, and let $F'_{H}$ be a subset of the homeomorphic components of $F'$ with the following property: $|F'_H| \geq 2$ and in $T$,
no component of $F'_{H}$ overlaps with a component outside $F'_{H}$. Let $X_{H}$ be the union
of the taxa in the components of $F'_{H}$. Use Fitch's algorithm to cut $T[ X_H ]$ into the smallest number of pieces, such
that each piece only contains taxa from a single element of $F'_H$. 
\end{branchrule}
\begin{restatable}{lemma}{fitchcorrect}
Branching rule \ref{rule:fitch} is safe and yields recursion $T(k)=T(k-1)$ if at least
one cut is required to transform $F'_H$ into an agreement forest.
\end{restatable}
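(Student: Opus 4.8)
The plan is to establish the two assertions separately; the recurrence is routine and all the work is in the safety part.

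\emph{Recurrence.} Computing a minimum-size refinement of the $F'_H$-partition whose pieces induce pairwise disjoint subtrees of $T$ is a polynomial-time Fitch-style dynamic program on $T[X_H]$. When the number of extra cuts it prescribes, say $c$, is at least $1$, the rule performs exactly those $c$ cuts deterministically and branches no further, so $T(k)\le T(k-c)\le T(k-1)$.

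\emph{Safety.} I would prove the slightly stronger statement that the minimum number of further cuts needed to reach an agreement forest from $(T,F')$ (call it $d(T,F')$) equals $c$ plus the same quantity $d(T,F'')$ for the instance $(T,F'')$ obtained after the rule's cuts. The engine is an exchange argument. Let $G$ be \emph{any} agreement forest reachable from $(T,F')$. Reachable forests only refine $F'$, so every component of $G$ that meets $X_H$ is contained in a single component of $F'_H$; hence the restriction of $G$ to $X_H$ is a monochromatic refinement of the $F'_H$-partition with pairwise disjoint embeddings in $T$ and with all blocks homeomorphic, and therefore $G$ uses at least $c$ cuts ``inside $X_H$''. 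Form $G'$ from $G$ by replacing the blocks inside $X_H$ with the rule's pieces $P_1,\dots,P_m$ and leaving everything else unchanged. Then $G'$ is again an agreement forest: each $P_j$ has $\ell(P_j)\subseteq\ell(C)$ for a homeomorphic $C\in F'_H$, so $T|\ell(P_j)=(T|\ell(C))|\ell(P_j)=(T'|\ell(C))|\ell(P_j)=T'|\ell(P_j)$; the $P_j$ are pairwise disjoint in $T$ by construction, and since $T[\ell(P_j)]\subseteq T[\ell(C)]$ while no component of $F'_H$ overlaps a component outside $F'_H$, each $P_j$ is also disjoint in $T$ from every component of $G$ lying outside $X_H$; disjointness in $T'$ holds because $G'$ still refines $F'$, whose components are vertex-disjoint in $T'$, and because $T|\ell(C)=T'|\ell(C)$ transfers the $T$-disjointness of pieces lying inside a single $C$ to $T'$-disjointness. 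Since $G'$ uses exactly $c$ cuts inside $X_H$ against at least $c$ for $G$, we get $|G'|\le|G|$, and $G'$ is reachable from $(T,F'')$ (its $X_H$-part is precisely the rule's pieces, and its remainder refines that of $F''$). Taking $G$ optimal gives $d(T,F')\le c+d(T,F'')$, and applying the exchange to an arbitrary reachable agreement forest gives $d(T,F')\ge c+d(T,F'')$; equality is exactly what safety asserts.

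\emph{Main obstacle.} The subtleties here are bookkeeping rather than depth. First, ``cut $T[X_H]$ into the smallest number of monochromatic pieces'' must be read relative to the partition of $X_H$ \emph{already} induced by $F'_H$, so that the Fitch count $c$ coincides with the number of cuts the rule actually performs in $F'$; this relies on $T|\ell(C)=T'|\ell(C)$ for each $C\in F'_H$ so that the internal cut structure transfers faithfully between $T$ and $T'$. Second, one must carefully separate the cuts ``inside $X_H$'' from the rest of a reachable agreement forest, since this separation is what makes $|G'|\le|G|$ hold. Third, the exchanged forest $G'$ requires several disjointness checks in both $T$ and $T'$, and it is precisely there that the hypothesis that no $F'_H$-component overlaps a component outside $F'_H$ gets used. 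Unlike the paper's other branching rules, the cut is chosen here by a global optimisation, so the correctness argument is of the ``exchange against the optimum'' flavour rather than a local structural case check.
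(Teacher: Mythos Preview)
Your proof is correct and follows essentially the same approach as the paper's: both argue that any agreement forest refining $(T,F')$ must separate the $F'_H$-components in $T$ (hence uses at least the Fitch-optimal number of cuts on $X_H$), and that these cuts are independent of the rest because no $F'_H$-component overlaps a component outside $F'_H$. Your write-up is more careful---you make the exchange argument explicit and spell out the disjointness checks in both $T$ and $T'$ for the replaced forest $G'$---whereas the paper compresses this into two sentences; but the underlying idea is the same.
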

\begin{proof}
Some background: given an assignment of discrete state labels to the taxa $X$ of a phylogenetic tree $T$, Fitch's algorithm extends in polynomial time the state labels to the interior nodes of $T$ such that the number of edges with different states at the endpoints is minimized. These edges correspond to cuts that separate taxa with distinct states into distinct components.
To use Fitch in the present context,
we introduce one state label per element of $F'_H$ and allocate each taxon in $X_H$ the state label of the component it is in.

Clearly any agreement forest refining $(T,F')$ must separate the $F'_H$ in $T$ (and on the other hand sets of taxa contained in a $F'_H$ must be homeomorphic since the $F_H$ are), hence branching rule \ref{rule:fitch} actually produces an agreement forest of minimum size for the components in $F'_H$. Also, none of the cuts implied by Fitch's algorithm to build this agreement forest for $F'_H$ can be used to build an agreement forest
for components outside $F'_H$, due to the assumption that the components of $F'_H$ only overlap with each other in $T$.  Hence, if Fitch's algorithm computes that at least one edge has to be cut
to obtain an agreement forest for $F'_H$, at most $k-1$ cuts are available for any remaining components outside $F'_H$. Thus, we obtain a branching factor of at most $T(k-1)$.
\end{proof}

We can now describe the overall algorithm for rMAF; see Figure \ref{alg:mraf}. Note that here the recursion rule in step 4 is called with $t=1$.

\begin{figure}[b]
  \caption{\textsc{Algorithm rMAF.}}
\begin{enumerate}
    \item Apply tidying-up operations
    \item Apply the first among the following branching rules to be applicable:
    \begin{enumerate}\label{it:rprerules}
    \item Whidden $(t=1)$ (1)
    \item Whidden $(t\geq 3)$ (2.206)
    \item Branching rule \ref{rule:unify} (2)
    \item Branching rule \ref{rule:twohomeoroot} (2.3028)
    \item Whidden $(t=2)$ (2.42), followed by branching rule \ref{rule:weirdoverlap} or \ref{rule:fitch} if applicable.
    \end{enumerate}
    \item Apply branching rule \ref{rule:overlap} \SK{(SPLIT)} to exhaustion (2)
    \item Apply the recursion rule \SK{(i.e. decomposition)} with $t=1$
\end{enumerate}
\label{alg:mraf}
\end{figure}

Every branching rule except for Whidden $(t=2)$ has a branching factor below our target of $\bestrootednumber$. However, we still obtain on overall branching factor of at most $\bestrootednumber$:

\begin{lemma}
\label{lem:modifiedwhidden}
Let $T$ and $T'$ be two rooted 
trees on $X$ with no common cherries. Assume that none of Whidden $(t=1)$, Whidden $(t\geq 3)$, branching rule \ref{rule:unify} or branching rule \ref{rule:twohomeoroot} apply. Then applying Whidden $(t=2)$, followed by branching rule \ref{rule:weirdoverlap} or \ref{rule:fitch} if applicable, followed  by branching rule \ref{rule:overlap} to exhaustion, followed by the recursion rule with $t=1$, has overall branching factor at most~$\bestrootednumber$.
\end{lemma}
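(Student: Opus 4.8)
The plan is to mirror the structure of the proof of Lemma~\ref{ref:monster2} for the unrooted case. We are given that Whidden $(t=1)$, $(t\geq 3)$, branching rule~\ref{rule:unify} and branching rule~\ref{rule:twohomeoroot} do not apply, so the only applicable branching rule with factor exceeding $\bestrootednumber$ is the classical Whidden $(t=2)$, with recursion $2T(k-1)+T(k-2)$. The $2T(k-1)$ term (cutting off $a$ or cutting off $b$) is already comfortably below our target, so the entire burden of the argument is to show that the single $T(k-2)$ branch -- the one in which $X_1$ and $X_2$ are cut off in $T'$, leaving components with taxa $X_1$, $X_2$, and $S\cup\{a,b\}$ -- can be improved to at most $2T(k-3)$ by the downstream machinery. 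Combined, this yields a recursion $2T(k-1)+2T(k-3)$, which solves to the required $\bestrootednumber$.

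\textbf{Key steps.} First I would observe that after cutting off $X_1$ and $X_2$ we have three components, and invoke a case split on their overlap structure in $T$. \emph{Case (i):} the three embeddings $T[X_1]$, $T[X_2]$, $T[S\cup\{a,b\}]$ are not all mutually disjoint. Then branching rule~\ref{rule:overlap} applies to this configuration, with its worst-case recursion $2T(k-1)$, so this branch costs at most $2T((k-2)-1)=2T(k-3)$, as desired. \emph{Case (ii):} the three embeddings are mutually disjoint. If at least two of $X_1,X_2,S\cup\{a,b\}$ are non-homeomorphic, then Lemma~\ref{ref:easydecompose} (the decomposition via the recursion rule with $t=1$) kicks in with branching factor $2$, again giving cost at most $2T(k-3)$ for this branch. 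The only remaining possibility is that the three embeddings are disjoint and at least two of the three sets are homeomorphic -- but this is exactly the hypothesis of branching rule~\ref{rule:twohomeoroot}, which we assumed does not apply. Hence this case is vacuous, and in every case the $T(k-2)$ branch is replaced by $2T(k-3)$.

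\textbf{The delicate point.} The subtlety I expect to wrestle with is why branching rules~\ref{rule:weirdoverlap} and~\ref{rule:fitch}, executed immediately after the $T(k-2)$ branch is created, suffice to rule out the \emph{genuine} worst case of branching rule~\ref{rule:overlap}, namely the situation where every splitting core involved has a single cut of size~$1$. The three-component configuration $\{X_1,X_2,S\cup\{a,b\}\}$ has a distinguished homeomorphic structure: after cutting off $X_1$ and $X_2$, the two sets $X_1,X_2$ are automatically homeomorphic as components of the forest unless non-homeomorphicity already forces the decomposition gain. When overlap occurs, it occurs between $S\cup\{a,b\}$ (the non-homeomorphic component) and at most one homeomorphic component, or between the two homeomorphic ones -- and branching rule~\ref{rule:weirdoverlap} handles the former (a pair of homeomorphic components sandwiching a non-homeomorphic one, overlapping on a single edge, cut deterministically) while branching rule~\ref{rule:fitch} handles the latter (two or more homeomorphic components overlapping only among themselves, resolved in polynomial time via Fitch with at least one forced cut). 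I would need to check carefully that these two rules between them cover every way the size-$1$-cut worst case of branching rule~\ref{rule:overlap} can arise, so that after their application any residual invocation of branching rule~\ref{rule:overlap} has a splitting core of total weight strictly better than the naive bound, or is simply absent because the instance is now disjoint and we pass to the recursion rule. Verifying this exhaustiveness -- and confirming that the polynomial-time Fitch/``$\rho$-probe'' steps never secretly consume part of the cut budget needed elsewhere -- is the main obstacle; the arithmetic that $2T(k-1)+2T(k-3)$ has growth rate $\bestrootednumber$ is routine.
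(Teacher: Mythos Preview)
Your proposal contains a fatal arithmetic error that propagates through the entire strategy. You claim that the recursion $2T(k-1)+2T(k-3)$ solves to $\bestrootednumber$, but it does not: its branching factor is approximately $2.3593$. Indeed, the paper explicitly points this out just before introducing branching rules~\ref{rule:weirdoverlap} and~\ref{rule:fitch}, as the motivation for why those rules are needed at all. The target recursion in the paper is $2T(k-1)+T(k-3)+2T(k-4)$, obtained by showing that \emph{from the $T(k-2)$ configuration} one gets a recursion of at most $T(k-1)+2T(k-2)$ (rather than the naive $2T(k-1)$ from branching rule~\ref{rule:overlap}); substituting $k-2$ gives $T(k-3)+2T(k-4)$, and this is what yields $\bestrootednumber$.

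Consequently, your Case~(i) (``embeddings overlap, apply branching rule~\ref{rule:overlap}, cost $2T(k-3)$'') is precisely the bad case the whole lemma is designed to avoid, and you cannot simply absorb it. The ``delicate point'' you identify is not a side issue but the entire content of the proof. The paper handles it by a structured case analysis on the overlap pattern of the three components $A,B,C$ (after tidying): a shared common edge among all three; a chain of pairwise overlaps; two overlapping with the third disjoint and non-homeomorphic; all three homeomorphic; and finally the residual situation where $C$ is homeomorphic and disjoint while $A,B$ overlap on a single edge. It is only in this last situation that branching rules~\ref{rule:weirdoverlap} and~\ref{rule:fitch} are invoked, and even then further argument is needed (examining the induced bipartitions $A_1,A_2,B_1,B_2$) to show that either a second application of branching rule~\ref{rule:overlap} fires, or Lemma~\ref{ref:easydecompose} applies with two surviving non-homeomorphic components, or the instance terminates in polynomial time. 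Your sketch does not supply any of this; in particular, your description of when rules~\ref{rule:weirdoverlap} and~\ref{rule:fitch} fire is too vague to establish the required $T(k-1)+2T(k-2)$ bound.
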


\begin{proof}
The obstacle blocking a branching factor of at most {\bestrootednumber} is when the $T(k-2)$ branch of Whidden $(t=2)$ induces worst-case behaviour, $2T(k-1)$, in branching rule \ref{rule:overlap}. We show that this is avoided. More formally, we take as our starting point the situation created by the $T(k-2)$ branch of Whidden $(t=2)$, i.e. $T'$ has been split into $X_1, X_2$ and $S \cup \{a,b\}$, and show that from this point a recursion of at most $T(k-1) + 2T(k-2)$ is obtained. Plugging $k-2$ in here for $k$, gives $T(k-3) + 2T(k-4)$. Combined with the original $2T(k-1)$ term for cutting off $a$ and $b$, we will obtain an overall running time of $2T(k-1) + T(k-3) + 2T(k-4)$, which finally yields the desired branching factor of 2.3391. 

We can assume that tidying-up operations have been applied, so $a, b$ have been collapsed into a single taxon, because they now form a common cherry, and any singleton components have been deleted. We let $\{ A, B, C \}$ be the (at most) three components in $T'$ at this point; we will not need to know their origins as $\{ X_1, X_2, S\cup \{a,b\}\}$.
We know that if $A, B, C$ are mutually disjoint in $T$, then at least two of them are non-homeomorphic, because otherwise branching rule \ref{rule:twohomeoroot} would have applied. In this situation the recursion rule in step 4 (executed to depth 1) ensures a recursion of at most $2T(k-2)$ (see Lemma \ref{ref:easydecompose}), which is faster than the required $T(k-1) + 2T(k-2)$ bound, so we are done. Hence, from this point on we know that $A, B, C$ are not mutually disjoint in $T$. We consider a number of other cases.
\begin{enumerate}
\item \emph{$T[A], T[B], T[C]$ share a common edge $e$ of $T$}. Only one of the three components can use~$e$, so branching rule \ref{rule:overlap} will initially apply to two of the components, say $A$ and $B$. In the branch where $A$ is split, we simply assume worst-case behaviour i.e. only one cut is needed to split $A$. This contributes $T(k-1)$ to the recursion. In the branch where $B$ is split, we also assume that only a single cut is needed to split $B$, but observe that branching rule \ref{rule:overlap} then triggers again to separate $A$ from $C$, giving a running time of at most $2T((k-1)-1) = 2T(k-2)$ for that branch, and thus $T(k-1) + 2T(k-2)$ overall.

\item \emph{$T[A]$ overlaps with $T[B]$, $T[B]$ overlaps with $T[C]$, but $T[A]$ does not overlap with $T[C]$}. This is essentially identical to the previous case: when splitting (say) $B$ from $A$, the branch in which $B$ is kept intact, will then need to split $B$ from $C$.

\item \emph{$T[A]$ and $T[B]$ overlap, but neither overlaps with $T[C]$. $C$ is \emph{not} homeomorphic.} If $A$ and $B$ are homeomorphic, then branching rule \ref{rule:fitch} can be applied by taking $F'_H = \{A,B\}$.
At least one cut will be needed to separate them. Hence, at most
$k-1$ cuts are available for $C$, giving an overall branching factor of at most $T(k-1)$ and we are done. On the other hand, suppose that at least one of $A$ and $B$ is non-homeomorphic, say $A$. Branching rule \ref{rule:overlap} will trigger to split $A$ and $B$. We allocate time $T(k-1)$ to the branch in which $A$ is split. In the other branch (where
$B$ is split), we obtain a new disjoint instance in which $A$ and $C$ both survive, and both are non-homeomorphic. Then,
Lemma \ref{ref:easydecompose} gives a bound of at most $2T((k-1)-2) = 2T(k-3)$, yielding
at most $T(k-1) + 2T(k-3)$ overall.
\item \emph{$A$, $B$ and $C$ are all homeomorphic}. This is caught by branching rule \ref{rule:fitch},
by taking $F'_H = \{A,B,C\}$. This solves the $T(k-2)$ branch to optimality in polynomial time.
\end{enumerate}
Assuming none of the situations above apply implies that $A, B, C$ are not all homeomorphic, $T[A]$ overlaps with $T[B]$ but neither overlap with $T[C]$, and $C$ is homeomorphic. Hence, at least one of $A$ and $B$ is non-homeomorphic.  Recall that branching rule \ref{rule:overlap} first decides whether $A$ or $B$ should use the overlap edge, and then splits the other tree. If either tree requires 2 or more cuts to be split, we are done, because then the running time of branching rule \ref{rule:overlap} is at most $T(k-1)+2T(k-2)$ instead of $2T(k-1)$. 
So, assume that both trees require exactly 1 cut to be split. Let $A_1$ and $A_2$ be the bipartition of $A$ in $T$ induced by the overlap edge, and define $B_1, B_2$ similarly. In $T'$ there is a single edge that separates $A_1$ and $A_2$, and a single edge that separates $B_1$ from $B_2$: this is due to the assumption that both trees require only 1 cut to be split in $T'$.

Now, suppose in the branch where $B$ is kept intact
(and $A$ is split into $A_1, A_2$), at least one of $T[A_1], T[A_2]$ overlaps with $T[B]$. Here branching rule \ref{rule:overlap} is triggered for a second time, which is sufficient. A symmetrical observation holds for the branch where $A$ is kept intact.
So, we henceforth assume neither triggering happens. This means that $T[A_1], T[A_2], T[B_1]$ and $T[B_2]$ are mutually disjoint and that $T[A], T[B]$ overlap on a single edge. Continuing, suppose at least one of $A_1, A_2, B_1, B_2$ is non-homeomorphic. Without loss of generality, suppose it is $B_1$ (so $B$ is non-homeomorphic). If $A$ is homeomorphic, then branching rule \ref{rule:weirdoverlap} applies, and we are done.
Hence, we can assume that $A$ is not homeomorphic. Branching rule \ref{rule:overlap} then splits $A$ and $B$. In the branch in which $A$ is kept intact,
$A$ and $B_1$ exist as disjoint non-homeomorphic components, so Lemma \ref{ref:easydecompose}
yields $2((k-1)-2) = 2(k-3)$ for solving $A$ and $B_1$. Combined with $T(k-1)$ for the branch
in which
$B$ is kept intact, we obtain $T(k-1) +2T(k-3)$ overall.
Finally, let us assume that all of $A_1, A_2, B_1, B_2$ are homeomorphic. Here one single cut is necessary and sufficient to obtain an agreement forest for $T$ and $T'$: deleting the unique overlap edge in $T$. 
This single cut will
be discovered by subsequent branching rules after branching to at most depth 1, i.e. in polynomial time.
\end{proof}

\begin{theorem}
\label{thm:rooted}
\textsc{Algorithm rMAF} solves the problem rMAF in time $O^{*}(\bestrootednumber^k)$.
\end{theorem}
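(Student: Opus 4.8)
The plan is to obtain Theorem~\ref{thm:rooted} by assembling the ingredients already in place: the safety of each branching rule, the soundness of the recursion rule, and the branching-factor bounds, the only non-routine one of which is Lemma~\ref{lem:modifiedwhidden}. The proof then splits into correctness and running time.

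For correctness I would induct on the budget $k$ and, within fixed $k$, on the instance size $|X|$; the tidying-up operations of step~1 strictly decrease $|X|$ while preserving $d_\rMAF$, and every branch of every branching rule strictly decreases $k$. The base case and the recognition of a completed agreement forest are handled by tidying-up itself: repeatedly collapsing common cherries and deleting singleton components reduces any tree-forest pair that already is an agreement forest to the empty instance, for which the answer is $0$ further cuts. Otherwise $T$ still has a cherry, which after tidying is necessarily not a common cherry, so the instance is not yet an agreement forest and one of steps~2--4 applies: if the two leaves of such a cherry lie in the same component of $F'$ then, reading off the Whidden parameter $t$, one of Whidden $(t=1)$, Whidden $(t\ge 3)$, branching rule~\ref{rule:unify}, branching rule~\ref{rule:twohomeoroot} or plain Whidden $(t=2)$ fires in step~2; and if they lie in two (necessarily non-singleton) components, those components' embeddings in $T$ overlap on the parent of the cherry, so branching rule~\ref{rule:overlap} fires in step~3, and applying it to exhaustion leaves a disjoint instance with at least two components for step~4's recursion rule to resolve. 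Each rule used carries a companion lemma establishing safety, which in the hybrid decision-optimization formulation says precisely that the minimum over branches of the locally-corrected recursive answers equals the true optimum; together with the induction hypothesis and the noted soundness of the recursion rule, correctness follows.

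For the running time, tidying-up and the constant-depth probes in step~1 of the recursion rule cost only polynomial time, and every branch of every branching rule lowers the remaining budget by at least one; the only steps that preserve the budget are the ``exactly one non-homeomorphic component'' case of the recursion rule, which passes the budget unchanged to a single strictly smaller sub-instance and does not branch, so at most $|X|$ of them can occur along any root-to-leaf path. It therefore suffices to show that every round (step~1, then the applicable step-2 rule, then branching rule~\ref{rule:overlap} to exhaustion, then the recursion rule with $t=1$) has overall branching factor at most $\bestrootednumber$. For rounds in which step~2 picks Whidden $(t=1)$ (factor $1$), Whidden $(t\ge 3)$ (factor $\le 2.206$), branching rule~\ref{rule:unify} (factor $2$) or branching rule~\ref{rule:twohomeoroot} (factor $\le 2.3028$) this is immediate, since branching rule~\ref{rule:overlap} has factor $2$ and, by Lemma~\ref{ref:easydecompose}, the recursion rule with $t=1$ has factor $\le\sqrt2$ when at least two components are non-homeomorphic and otherwise does not branch, so chaining factors that are all at most $\bestrootednumber$ stays at most $\bestrootednumber$. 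The remaining case is when step~2 picks plain Whidden $(t=2)$ (bare factor $2.42>\bestrootednumber$) --- and this, together with the subsequent branching rules~\ref{rule:weirdoverlap}/\ref{rule:fitch}, branching rule~\ref{rule:overlap} to exhaustion, and the recursion rule with $t=1$, is exactly the composite whose overall branching factor Lemma~\ref{lem:modifiedwhidden} bounds by $\bestrootednumber$ (concretely via the recurrence $T(k)\le 2T(k-1)+T(k-3)+2T(k-4)$, whose dominant root is at most $\bestrootednumber$). Hence $T(k)=O^{*}(\bestrootednumber^k)$, and running the algorithm on the original input $(T,T')$ with iterative deepening on $k$ computes $d_\rMAF(T,T')$ within this bound.

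I expect the main obstacle to be bookkeeping rather than a new idea: one must verify that steps~2(e), 3 and~4 of \textsc{Algorithm rMAF} instantiate exactly the composite procedure analysed in Lemma~\ref{lem:modifiedwhidden}, that the ``cut off $a$''/``cut off $b$'' branches of Whidden $(t=2)$ --- which simply re-enter the algorithm at step~1 --- are correctly charged to the $2T(k-1)$ term of that recurrence, and that no branching rule can leave the instance in a configuration to which none of steps~2--4 applies. Once these are in place, the theorem follows by combining the established safety and running-time lemmas with the standard branching-recurrence calculation.
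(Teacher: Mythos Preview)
Your proposal is correct and follows the paper's intended approach: the paper states Theorem~\ref{thm:rooted} without proof, treating it as immediate from the preceding safety lemmas and Lemma~\ref{lem:modifiedwhidden}, and you have simply written out the correctness induction and the running-time assembly that the paper leaves implicit. One small redundancy: since the recursion rule always hands back tree--tree pairs, at the top of every recursive call $F'$ is a single tree, so the ``two components'' case you cover in the correctness argument for step~2 never actually arises there (it is absorbed by step~3 within the same round), but this does no harm.
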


\section{Future work}
\label{sec:future}
Implementations of FPT branching algorithms for uMAF and, in particular, rMAF have been aggressively
optimized~\cite{WhiddenZB14,whidden2018calculating,yamada2019better} facilitating their use in practice. We  are hopeful that 
our branching scheme, which is conceptually straightforward, can lead to further practical speed-ups: not just because of the reduced branching factor, but also because after branching rule~\ref{rule:overlap} has been applied to exhaustion, the resulting components in $F'$ have disjoint embeddings in $T$: we obtain a set of disjoint tree-tree instances. This opens the door to applying existing kernelization algorithms to these tree-tree instances encountered deeper in the recursion (``branch and reduce''). Existing sophisticated kernelization algorithms and bounding schemes do not generalize well to tree-forest instances~\cite{van2022reflections}, so this is an interesting opportunity. \SK{An implementation should also take into account that, in order to attain state-of-the-art performance \emph{in practice}, it is usually necessary to incorporate additional speed-ups that are not part of the core branching scheme. For example, although it does not improve worst-case performance, the \emph{common cluster} reduction rule is known to give a dramatic performance boost in practice \cite{li2017computing}. Other speed-ups include fast and aggressive lower bounding schemes that can very quickly return $\qfalse$ (i.e. fail) if it is impossible to reach an agreement forest from a given point with the remaining budget of cuts. It is also possible that, in practice, some of the more exotic flanking branching rules that we have developed (such as branching rule \ref{rule:twohomeoroot}) do not need to implemented at all in order to secure the main benefits of our algorithm. Careful profiling of where the algorithm spends its time on realistic data sets is required to guide such engineering decisions. We defer such implementation issues to future research.} On the theoretical side it is natural to ask how far the branching factors obtained in this
article can be reduced, and what improvements our branching scheme might yield for other variants of rMAF
and uMAF, for example on multiple trees or nonbinary trees.

\bibliography{jcss}

\end{document}